\documentclass[usletter,titlepage,12pt]{amsart}

\usepackage{amsmath,amsthm,amssymb,authblk,amsfonts}
\usepackage[foot]{amsaddr}

\usepackage{natbib}

\usepackage{comment,enumerate,graphicx,hhline}

\usepackage[usenames,dvipsnames]{color, xcolor}

\usepackage{subcaption}
\usepackage{hyperref}

\usepackage{geometry}
\geometry{left=1.3in, right=1.3in,top=1.1in,bottom=1.1in}
\linespread{1.3}


\theoremstyle{plain}

\newtheorem{proposition}{Proposition}

\newtheorem{theorem}{Theorem}
\newtheorem{lemma}{Lemma}

\newtheorem*{claim*}{Claim}
\newtheorem{claim}{Claim}

\theoremstyle{remark}
\newtheorem{remark}{Remark}
\newtheorem{example}{Example}

\newtheoremstyle{named}{}{}{\itshape}{}{\bfseries}{.}{.5em}{#1\thmnote{
    #3}}
\theoremstyle{named}

\newtheoremstyle{named2}{}{}{\itshape}{}{\bfseries}{:}{.5em}{#1\thmnote{
    #3}}
\theoremstyle{named2}


\def\A{\mathcal{A}}

\def\H{\mathcal{H}}
\def\X{\mathcal{X}}
\def\C{\mathcal{C}}

\def\Re{\mathbf{R}}



\def\ep{\varepsilon}
\def\w{\omega} \def\W{\Omega}
\def\ta{\theta}
\def\al{\alpha}
\def\la{\lambda}
\def\da{\delta}
\def\g{\gamma} 
\def\phi{\varphi}

\def\one{\mathbf{1}}
\def\co{\mbox{co}}
\def\lcs{\mbox{lcs}}
\def\argmax{\mbox{argmax}} \def\argmin{\mbox{argmin}}
\def\supp{\mbox{supp}}

\def\ul{\underline}

\def\citeapos#1{\citeauthor{#1}'s (\citeyear{#1})}


\def\os{\emptyset}

\newcommand{\df}[1]{\textit{#1}}
\newcommand{\norm}[1]{\| #1 \|}

\newcommand{\abs}[1]{ \left | #1 \right | }

\def\nobjects{L}
\def\nagents{N}

\newcommand{\jz}[1]{\noindent \normalfont \textcolor{Red}{[JZ: #1]}}

\begin{document}

\author[Echenique]{Federico Echenique}
\author[Miralles]{Antonio Miralles}
\author[Zhang]{Jun Zhang}

\address[Echenique]{Division of the Humanities and Social Sciences,
  California Institute of Technology}
\address[Miralles]{Department of Economics, Universitat
  Aut\`{o}noma de Barcelona and Barcelona Graduate School of Economics and Universit\`{a} degli Studi di Messina.}
\address[Zhang]{Institute for Social and Economic Research, Nanjing Audit University.}

\title[Constraints]{Constrained Pseudo-market Equilibrium}

\thanks{We thank Eric Budish, Fuhito Kojima, Andy McLennan, Herv\'e Moulin, and Tayfun S\"onmez for comments.  Echenique thanks the National Science Foundation for its support through the grants SES-1558757 and CNS-1518941, and the Simons Institute at UC Berkeley for its hospitality while part of the paper was written. Miralles acknowledges financial support from the Spanish Ministry of Economy and Competitiveness, through the Severo Ochoa Programme for Centers of Excellence in R\&D (SEV-2015-0563). Zhang acknowledges financial support from National Natural Science Foundation of China (Grant No. 71903093).}

\date{\today}

\begin{abstract}
	We propose a pseudo-market solution to resource allocation problems subject to constraints. Our treatment of constraints is general: including bihierarchical constraints due to considerations of diversity in school choice, or scheduling in course allocation; and other forms of constraints needed to model, for example, the market for roommates, and combinatorial assignment problems. Constraints give rise to pecuniary externalities, which are internalized via prices. Agents pay to the extent that their purchases affect the value  of relevant constraints at equilibrium prices. The result is a constrained efficient market equilibrium outcome. The outcome is fair whenever the constraints do not single out individual agents. Our result can be extended to economies with endowments, and address participation constraints.
\end{abstract}



\maketitle

\tableofcontents
\newpage

\section{Introduction}
We analyze the use of pseudo-markets for assignment problems {\em under constraints} in market design environments where resources are indivisible and monetary transfers are forbidden. A pseudo-market is an artificial marketplace where agents are given fixed budgets of ``funny money'' that is only useful within the marketplace. Agents use artificial money to buy affordable probability shares of preferred goods at market clearing prices. By assigning probability shares to agents, it is possible to get (ex-ante) fairness among agents. The pseudo-market idea was first proposed by \cite{HZ1979} to solve the allocation of indivisible goods to an equal number of agents under unit demand and unit supply constraints. By pricing goods, Hylland and Zeckhauser (HZ) prove that market clearing prices exist, and the equilibrium outcome is efficient and fair.\footnote{If there are only supply constraints, efficiency is the result of the first welfare theorem. In HZ's model, however, there are also unit demand constraints and the first welfare theorem fails.}

We generalize and expand the scope of applicability of pseudo-markets. We consider assignment problems under various constraints. These problems include familiar constrained assignment problems, such as job assignments under regional ``ceiling'' and ``floor'' quotas \citep{kamada2015efficient}, and  controlled school choice due to considerations of gender or demographic balance \citep{ehlers2011school}.  
Our approach also handles problems that had not been analyzed via markets before, such as the well-known roommate problem from matching theory, and coalition formation problems. Our approach can be extended to accommodate participation constraints that arise from initial endowments.

The key idea in our proposal is to {\em price}  constraints. Think of two workers, Alice and Bob, in HZ's model. Jobs are in fixed supply. If Alice buys a probability share of job J, there is less left for Bob. Such ``pecuniary'' externalities are handled in pseudo-markets by pricing J. In our proposal, we think of the price of J as the price on the supply constraint for J: the constraint saying that the total demand for J cannot exceed its supply. HZ show that  pecuniary externalities can be correctly internalized by equilibrium prices, and that an efficient outcome results in equilibrium. In the present paper, we interpret all sorts of constraints, well beyond supply constraints, as giving rise to pecuniary externalities. In consequence, we use prices to align agents' choices with constraints and to find an efficient outcome. Under standard continuity, convexity, and monotonicity assumptions, we prove the existence of  pseudo-market equilibria. Every equilibrium outcome is (constrained) efficient.  When constraints do not single out any particular agent, every equilibrium outcome is fair. When agents start out with initial endowments, every equilibrium outcome can be approximately individually rational.\footnote{When agents' incomes equal the value of their endowments at equilibrium prices, HZ have shown that pseudo-market equilibria may not exist. We solve this nonexistence problem by mixing endogenous endowment values with an arbitrarily small exogenous budget.}

Specifically, we differ from the literature in that we do not take as primitive a formal description of constraints. In this sense, our work is orthogonal to the question of implementability of random assigments, which is the main concern of the influential work by \cite{budish2013designing} and the recent development by \cite{akbarpourapproximate}. We start from a set of feasible ex-post assignments, without specifying a formal model of constraints that the feasible assignments must satisfy. The set of random assignments that comply with the constraints is the convex hull of feasible ex-post assignments, and thus implementable by construction. Our primitive is this convex hull: a polytope in finite assignment problems.\footnote{Therefore, we can address any constraints that pin down a well-defined set of feasible ex-post assignments.} We proceed by using  linear inequalities to characterize the ``upper-right'' boundary of the convex hull. These linear inequalities include the standard supply constraints and additional constraints in specific applications. Each constraint is then priced. When Alice purchases one unit of good J, she will have to pay to the extent that her purchase affects other agents through different constraints. For example, if there is a ceiling constraint on how many units of J can go to a group of agents, the agents in the group will pay the price of the constraint when they buy units of J. If those agents are also involved in other constraints, then the final personalized prices they face can be different. But if two agents are always involved in the same constraints, they will face equal prices. Equal budgets then ensure that they will not envy each other in the resulting equilibrium outcome.

The idea may seem familiar from the role of shadow prices in optimization with constraints, but the familiarity is deceptive. Imagine using the dual variables (or Lagrange multipliers) associated with each constraint in order to decentralize an allocation that is constrained efficient. We run into two issues. One is that some constraints that impose lower bounds on consumed quantities would lead to negative prices. The other is that decentralizing a constrained efficient allocation would require transfers, as in the second welfare theorem.\footnote{We discuss the second welfare theorem without transfers developed by \cite{mirallesfoundations} in the related literature section (Section \ref{sec:relatedliterature}). Note that the outcome of the second welfare theorem may induce envy, even in a textbook economy with only supply constraints.} With endogenous transfers, one cannot ensure a fair outcome, or an individually rational outcome when there are endowments. Our approach, in contrast, can ensure fairness and individual rationality because our prices constitute market equilibria. By pricing only constraints on the ``upper-right'' boundary, we ensure that all prices in our approach are nonnegative (Sections~\ref{sec:preprocess} and~\ref{sec:mainresult}).  Yet we prove that the equilibrium outcome satisfies all constraints. As long as the constraints do not themselves induce unfairness by treating agents differently, we can obtain a fair outcome. Individual rationality can be ensured when agents have endowments (Section~\ref{sec:endowment}). Finally, observe that existing results on the computational complexity of finding HZ equilibria \citep{vazirani2020computational} imply that it is impossible to obtain our results through convex programming duality because HZ's model is a special case of ours. Moreover the complexity results have been extended to our setting with endowments (actually using the ideas introduced in our paper) by \cite{garg2020arrowdebreu}, 

We turn to a discussion of specific applications that motivate our approach, and where our results deliver new insights.

\subsection{Motivation}

\subsubsection{Matching jobs to workers.} HZ illustrated the use of pseudo-markets by way of assigning jobs to workers. Each worker is to receive at most one job, which we call a \df{unit demand constraint}. Each job is in unit supply, so the sum of probability shares of a particular job assigned to workers cannot exceed one, which we call a \df{unit supply constraint}. As we shall see, constraints that only involve an individual agent, such as unit demand constraints, have  no external effects and do not need to be priced. Constraints that involve multiple agents, such as supply constraints,  will be priced. When an agent purchases a good, she has to pay to the extent that the purchase impacts the supply constraints that are priced.

Importantly, there is one supply constraint for each good. The price that corresponds to the supply constraint for good $l$ is the familiar ``price of good $l$.'' Think of the constraint as capturing a \emph{pecuniary externality}. When Alice purchases good $l$, the supply constraint implies that there is less good $l$ available for Bob. By pricing supply constraints we ensure that Alice internalizes the effects that her purchases has on Bob. As we shall see, in problems with more complex constraint structures, we may not be able to ascribe a specific good to each specific price; but the logic of using prices to internalize the external effects induced by constraints extends.

In the jobs-to-workers application, priced constraints affect all agents in the same way. As a consequence, the prices are equal for all agents, and we can show the existence of a market equilibrium outcome that is both efficient and fair. 

Finally, the methodology in our paper extends the HZ approach to situations with multiple unit assignment; see Section~\ref{sec:combinatorialalloc} (and \cite{budish2012matching} for an overview).\footnote{In multi-unit assignment, the relevant set of constraints that need to be priced does not necessarily coincide with the set of items or the set of bundles. See Section~\ref{sec:combinatorialalloc} for a brief discussion.}

\subsubsection{Assigning doctors to hospital positions.} The problem of assigning doctors to hospitals is similar to the jobs-to-workers example, but with an important twist. Hospitals belong to different geographical regions, and the system seeks to ensure a minimum number of doctors per region \citep{kamada2015efficient}. So we have unit demand and supply constraints as before; but there is now a lower bound --- a \emph{floor constraint} --- on the number of doctors assigned to each region. Our solution turns these constraints into upper bounds (see Section~\ref{sec:floorconstraints} for details). 

There are then two kinds of non-individual constraints that must be priced: supply constraints and the modified floor constraints. When Alice buys into a popular hospital position, she causes a pecuniary externality on Bob, who may have to take a position in a less-demanded regional hospital. The price on the corresponding modified floor constraint ensures that she pays more for the popular hospital than if she were only facing the supply constraint. In an equilibrium, now, prices ensure that demand spills over into less attractive regions so as to meet the lower bound for each region.

In the doctors-to-hospitals application, again, all agents are treated in the same way by the priced constraints. In consequence, all agents face the same prices, and we obtain a market equilibrium that is efficient and fair. The application to doctor-hospital matching with regional constraints is discussed in Section~\ref{sec:HZwconstraints}.

\subsubsection{Roommates.} A set of college students need to pair up as roommates. Each student has a utility function defined over her possible roommates. We formulate the model as an assignment problem,  where we assign objects to agents by treating objects as copies of agents. Each student has two roles: one as agent seeking to match to an object, and one as object that can be matched to different agents.  In addition to the familiar unit demand and supply constraints, we must now impose a symmetry constraint. If agent Alice is matched to object Bob, then agent Bob must be matched to object Alice. The symmetry constraints involve more than one agent, and must therefore be priced. When Alice purchases some of the ``Bob good'' she is committing Bob to consume an equal amount of the ``Alice good.'' In our pseudo-market, this pecuniary external effect is internalized via prices. 

Our result delivers an efficient equilibrium in the market for roommates. Our finding is significant because it is well known that stable matchings may not exist in the model of roommates. Market equilibria capture a different notion of stability, one that is not game theoretic in nature, ensuring that agents are optimizing at the equilibrium outcome. 

The application to roommates is discussed in Section~\ref{sec:roomates}, where we also outline how pseudo-markets can be used in more general matching and coalition formation problems.

\subsubsection{Re-allocation of an existing assignment.} Consider a system that starts from a pre-existing assignment, for example assigning teachers to schools \citep{combe2018design}, offices to university faculty \citep{baccarayariv}, or upper-class students to dorms \citep{abdulkadirouglu1999house}. We seek to re-assign objects while ensuring each agent that they are not worse off than under the pre-existing allocation. The reason could be political. Re-assignment problems often need to overcome political economy obstacles stemming from an existing assignment. By ensuring that agents are not made worse off, the political problem is avoided. 

Our results allow agents to obtain market income from the value of their endowment at equilibrium prices. In consequence, we can not only achieve efficiency, but also satisfy the participation constraints implied by the presence of endowments.  It is crucial, as we show, that not all incomes are derived from endowments: some income must have the same external source as in HZ, since, as HZ showed, equilibria are not ensured to exist if all incomes equal market values of endowments.

Endowments and participation constraints are discussed in Section~\ref{sec:endowment}. 

\subsection{Related literature} Constrained resource allocation has received a lot of attention in recent years. The work by \cite*{kojima2018job},  \cite*{gulpesendorferzhang} and ours seems to be the first to look at constrained allocation by way of a market mechanism. The former two papers study the role of gross substitutes in a general model of discrete allocation. Despite a similar focus on markets and constraints, the results in our papers are very different; see Section~\ref{sec:relatedliterature} for more details. 
In studying constraints, we are motivated by the early work of \cite{budish2013designing}, \cite*{ehlers2011school} and \cite*{kamada2015efficient}. Our results apply to constraints well beyond those considered by these authors, and we differ substantially in methodology.  
Aside from how we deal with constraints, our approach to generating income from endowments is closely related to, but distinct from, \cite{mas1992equilibrium}, \cite{le2017}, and \cite{mclennan2018}. 

We provide a detailed discussion of the related literature in  Section~\ref{sec:relatedliterature}, once our results have been explained. We also provide a detailed comparison with other work on the pseudo-market mechanism in that section.

\section{The model}\label{sec:model}

\subsection{Notational conventions}

For vectors $x,y\in\Re^n$, $x\leq y$ means that $x_i\leq y_i$ for all $i=1,\dots, n$; $x < y$ means that $x\leq y$ and $x\neq y$; and $x\ll y$ means that $x_i < y_i$ for all $i=1,\dots, n$. The set of all $x\in \Re^n$ with $0\leq x$ is denoted by $\Re^n_+$, and the set of all $x\in \Re^n$ with $0\ll x$ is denoted by $\Re^n_{++}$. Inner products are denoted as $x\cdot y = \sum_i x_i y_i$. 

Let $X\subseteq \Re^n$ be convex. 
A function $u:X\rightarrow \Re$ is 
\begin{itemize}
\item \df{quasi-concave} if for any $ x,z\in X $ and $ \la\in [0,1] $, \[ \min \{u(z),u(x)\}\le
u(\la z+(1-\la)x). \] 
\item \df{semi-strictly quasi-concave} if it is quasi-concave, and for
	any $x,z\in X$ and $\la\in (0,1)$, $u(z)\neq u(x)$ implies that  \[\min \{u(z),u(x)\}<
	u(\la z+(1-\la)x).\footnote{See \cite{avriel2010generalized} for a discussion of semi-strictly quasi-concave functions and their applications to economics.}\]
 \item \df{concave} if, for any $x,z\in X$ and $\la\in (0,1)$,
 \[
 \la u(z)+(1-\la) u(x)\leq  u(\la z+(1-\la)x).
 \]
\item \df{expected utility} if there exists a vector $v\in\Re^n$ with $u(x)=v\cdot x$ for all $x\in X$. 
\item \df{strictly increasing} if $ x> x'$ implies that $ u(x)>u(x')$.
\end{itemize}

Given a set $A\subseteq \Re^n,$ let $\co(A)$ denote the \df{convex hull} of $A$ in $\Re^n$: the intersection of all convex sets that contain $A$.

A pair $(a,b)$, with $a\in \Re^n$ and $b\in\Re$, defines a \df{linear inequality} $a\cdot x\leq b$. We say that a linear inequality $(a,b)$ has \df{non-negative coefficients} if $a\geq 0$ and $b\geq 0$. Any linear inequality $(a,b)$ defines a (closed) \df{half-space} $\{x\in \Re^n:a\cdot x \leq b\}$.

A \df{polyhedron} in $\Re^n$ is a set that is the intersection of a finite number of closed half-spaces. A \df{polytope} in $\Re^n$ is a bounded polyhedron. Two special polytopes are the \df{simplex} in $\Re^n$:\[ \Delta^n = \{x\in \Re^n_+:\sum_{l=1}^n x_l=1 \}, 
\] and the \df{subsimplex}  \[ \Delta_{-}^n = \{x\in \Re^n_+:\sum_{l=1}^n x_l\leq 1 \}.
\] When $n$ is understood, we use the notation $\Delta$ and $\Delta_{-}$.

\subsection{The economy}\label{sec:theeconomy}
We first introduce a model without endowments. It is simpler, and goes a long way to capture the applications we have discussed. In Section~\ref{sec:endowment} we augment the model to include agents' endowments.

An \df{economy} is a tuple $\Gamma=(I,O,(Z_i,u_i)_{i\in I},(q_l)_{l\in O})$, where \begin{itemize}
    \item $I$ is a finite set of \df{agents}, with $\nagents=|I|$;
    \item $ O $ is a finite set of \df{objects}, with $\nobjects =\abs{O}$;
    \item $Z_i\subseteq \Re^\nobjects_+$ is the \df{consumption space} of $i\in I$;
    \item $u_i:Z_i\rightarrow \Re$ is the \df{utility function} of $i\in I$;
    \item $ q_l\in \Re_{++} $ is the amount of $ l\in O $.
\end{itemize}

In an economy, $\nagents=|I|$ is the number of agents, and $\nobjects=\abs{O}$ is the number of different \df{objects}. Each object $l\in O$ is available in quantity $q_l$. 

An \df{assignment} in $\Gamma$ is a vector \[ x = (x_{i,l})_{i\in I, l\in O} \text{ with } x_{i}\in Z_i,
\] where $x_{i,l}$ is the amount of object $l$ received by agent $i$.  Let $\A$ denote the set of all assignments in $\Gamma$.

In discrete allocation problems we often interpret assignments as probabilistic allocations: see Section~\ref{sec:discreteallocation}. In this case, $x_{i,l}$ is the probability that agent $i$ receives a copy of an object $l$. 

For now we restrict attention to $ Z_i = \Re^{\nobjects}_+ $, but agents' consumption spaces will be restricted further as we introduce constraints.

\subsection{Constraints}\label{sec:constraints}

A \df{constrained allocation problem} is a pair $(\Gamma,\C)$ in which $\Gamma$ is an economy and $\C$ is a subset of the assignments in $\Gamma.$ The assignments in $\C$ constitute the assignments that satisfy the constraints: the \df{feasible} assignments in $(\Gamma,\C)$. 

Observe that the set of feasible assignments is a primitive of our model. Instead of starting from an explicit description of how assignments are constrained, we work directly with the set $\C$ of feasible assignments. In Section~\ref{sec:constraintstructures} we show how to apply our results to a model with explicit constraint structures. Throughout the paper we require that $ \C $ be a polytope.

Our model applies to environments with infinitely divisible objects. In Section~\ref{sec:endowment}, for example, we obtain the textbook model of an exchange economy as a special case. Most market design applications, however, require indivisible objects. We proceed to introduce some language that is pertinent to the indivisible case.

\subsection{Special case: discrete allocation}\label{sec:discreteallocation}

In many market design applications, objects are indivisible, and randomization over deterministic assignments is used to ensure fairness. In these applications, we say that an assignment $ x $ is \df{deterministic} if every $ x_{i,l} $ is an integer. When an assignment is not deterministic we call it a random assignment. 

Constraints are often imposed as linear inequalities on deterministic assignments. For example, the usual \df{unit-demand constraints} require that $\sum_{l\in O} x_{i,l}\le 1$ for all $i\in I$, and the \df{supply constraints} require that $\sum_{i\in I} x_{i,l} \le  q_l$ for all $l\in O$. Of course, constraints can also take other forms. A deterministic assignment is feasible if it satisfies all constraints. A random assignment is \df{feasible} if it belongs to the convex hull of feasible deterministic assignments. Thus we obtain $\C$ from the set of feasible deterministic assignments. Observe that $\C$ is a polytope, as the number of feasible deterministic assignments is finite.

Under unit demand and supply constraints, the Birkhoff-von Neumann theorem and its generalizations guarantee that every random assignment is a randomization over deterministic assignments. \cite{budish2013designing} generalize this theorem by characterizing constraint structures that ensure the implementation of feasible random assignments (see Section~\ref{sec:constraintstructures}). By taking $\C$ as a primitive, we circumvent the implementation issue.

\subsection{Pre-processing of constraints}\label{sec:preprocess}
 
 Our approach involves pricing constraints, but not all constraints get a price. For example, in HZ, unit demand constraints are not priced; only supply constraints get a price. Here we proceed with a general constrained allocation problem $(\Gamma,\C)$, and ``pre-process'' $\C$ so as to obtain the constraints that have to be assigned a price. 
 
 Recall that $\C$ is a polytope. Define the \textbf{lower contour set} of $\C$ to be
\[\lcs(\C) =\{x\in \Re^{\nagents \nobjects}_+ :\exists x'\in \C \text{ such that } x\le x'\}.
\]

\begin{lemma}\label{lem:nonnegativelinearineq}
There exists a finite set $\W$ of linear inequalities with non-negative coefficients such that \[
\lcs(\C) = \bigcap_{(a,b)\in \W} \{x\in \Re^{\nobjects \nagents}_+ : a\cdot x\leq b \}.\footnote{Lemma~\ref{lem:nonnegativelinearineq} is used by \cite{balbuzanov2019constrained} to define a generalization of the probabilistic serial mechanism that accommodates constraints.}\]
\end{lemma}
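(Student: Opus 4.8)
The plan is to show that $\lcs(\C)$ is a polyhedron, and then argue that every half-space needed to describe it can be taken to have non-negative coefficients. First I would observe that $\C$, being a polytope, can be written as $\C = \{x : A x \le b\}$ for some matrix $A$ and vector $b$ (together with boundedness, which we won't need here). By definition, $\lcs(\C) = (\C - \Re^{\nagents\nobjects}_+) \cap \Re^{\nagents\nobjects}_+$. The set $\C - \Re^{\nagents\nobjects}_+$ is the Minkowski sum of a polytope and a (finitely generated) polyhedral cone, hence itself a polyhedron by the Minkowski--Weyl theorem; intersecting with $\Re^{\nagents\nobjects}_+$ keeps it polyhedral. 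So $\lcs(\C)$ is a polyhedron, and in particular it is the intersection of finitely many closed half-spaces. Call this finite family of inequalities $\W_0$.

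The substantive step is to show that each inequality in $\W_0$ can be taken to have non-negative coefficients — or more precisely, that $\lcs(\C)$ admits \emph{some} finite description by non-negative-coefficient inequalities. The key structural fact is that $\lcs(\C)$ is a \emph{downward-closed} subset of $\Re^{\nagents\nobjects}_+$: if $x \in \lcs(\C)$ and $0 \le y \le x$, then $y \in \lcs(\C)$ (this is immediate from the definition, using transitivity of $\le$). I would then invoke the standard fact that a downward-closed polyhedron contained in $\Re^n_+$ is exactly the intersection of $\Re^n_+$ with half-spaces of the form $\{x : a \cdot x \le b\}$ where $a \ge 0$ and $b \ge 0$. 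To prove this from $\W_0$: take any facet-defining inequality $(a,b)$ of $\lcs(\C)$ that is not one of the non-negativity constraints $x_k \ge 0$. I claim $a \ge 0$. Suppose $a_k < 0$ for some coordinate $k$. Pick a point $\bar x$ on the relative interior of that facet; since the facet is not contained in $\{x_k = 0\}$ we may also arrange $\bar x_k > 0$, or if not, note that decreasing the $k$-th coordinate from any interior point of $\lcs(\C)$ stays in $\lcs(\C)$ by downward-closedness while strictly \emph{increasing} $a \cdot x$ (because $a_k<0$), contradicting that $(a,b)$ is valid on a neighborhood. A short case analysis along these lines forces $a_k \ge 0$ for every $k$, hence $a \ge 0$; and then $b \ge 0$ because $0 \in \lcs(\C)$ (since $0 \le x'$ for any $x' \in \C$, and $\C \ne \emptyset$), so $a \cdot 0 = 0 \le b$. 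Finally, the non-negativity constraints $x_k \ge 0$ themselves need not be listed, because we are already intersecting with $\Re^{\nagents\nobjects}_+$ in the statement. Collecting the remaining inequalities gives the desired finite set $\W$.

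The main obstacle I anticipate is the facet-by-facet argument that non-negativity of the coefficients follows from downward-closedness: one has to be careful about facets of $\lcs(\C)$ that happen to lie in a coordinate hyperplane $\{x_k = 0\}$, and about redundant or non-facet-defining inequalities in the initial description $\W_0$. The clean way to handle this is to pass to an \emph{irredundant} (facet) description of the polyhedron $\lcs(\C)$ first, so that each inequality is genuinely supporting along a full-dimensional face of the boundary, and then run the perturbation argument ``move down in coordinate $k$'' only at relative-interior points of that face. An alternative route that sidesteps facets entirely is to use polarity/LP duality directly: $x \in \lcs(\C)$ iff the linear program $\max\{0 : x \le x',\ x' \in \C\}$ is feasible, and writing its feasibility condition via Farkas' lemma produces exactly inequalities $a \cdot x \le b$ indexed by the (finitely many) extreme rays of a dual cone, with $a \ge 0$ forced by the structure $x \le x'$ and $b \ge 0$ forced by $\C \subseteq \Re^{\nagents\nobjects}_+$. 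Either way the argument is short once the right normal form is in hand; I would present whichever is cleaner, likely the downward-closed-polyhedron lemma with the facet argument.
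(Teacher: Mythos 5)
Your proposal is correct in substance and begins exactly as the paper does: both write $\lcs(\C)=(\C-\Re^{\nagents\nobjects}_+)\cap\Re^{\nagents\nobjects}_+$ and invoke Minkowski--Weyl (the paper cites Rockafellar's Theorem 19.1) to conclude polyhedrality. Where you diverge is in how non-negativity of the coefficients is obtained, and here the paper's route is substantially lighter than yours. The paper extracts the halfspace description from $D=\C-\Re^{\nagents\nobjects}_+$ \emph{before} intersecting with the positive orthant: since $D$ recedes in every direction $-e_{i,l}$ (each coordinate can be driven to $-\infty$ while staying in $D$), \emph{every} valid inequality $a\cdot x\le b$ for $D$ automatically satisfies $a\ge 0$, and $b\ge 0$ follows from $0\in D$. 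No facet analysis, no irredundancy, no full-dimensionality assumptions. Your version instead intersects with $\Re^{\nagents\nobjects}_+$ first and then argues facet by facet inside the orthant, which is the antiblocking-polyhedron fact and is true, but forces you into exactly the case analysis you flag as delicate: the perturbation ``decrease $x_k$'' is unavailable at points with $\bar x_k=0$ (downward-closedness only operates within $\Re^n_+$), so you must separately argue that a non-trivial facet cannot lie in a coordinate hyperplane, and the whole facet apparatus degrades when $\lcs(\C)$ fails to be full-dimensional (e.g.\ $\C=\{0\}$). Your Farkas/LP-duality alternative is sound and closer in spirit to the paper's recession-cone observation; if you keep your structure, that is the branch to develop. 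Either way the result stands, but the lesson from the paper's proof is that postponing the intersection with $\Re^{\nagents\nobjects}_+$ to the very last line turns the only substantive step into a one-liner.
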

\begin{proof}
  Consider \[D=\{x'\in\Re^{\nagents \nobjects} : x'\leq x \text{ for some } x\in \C\}\] and note that $\lcs(\C)=D\cap \Re^{\nagents \nobjects}_+$. Write $D$ as $\C-\Re^{\nagents \nobjects}_+$; thus, since $\C$ is a polytope, $D$ is finitely generated. Then by Theorem 19.1 in \cite{rockafellar1970convex} $D$ is polyhedral, and therefore the intersection of finitely many halfspaces. Let $\W$ be the set of linear inequalities $(a,b)$ defining this collection of halfspaces, so for each $(a,b)\in \W$ we have the halfspace $\{x'\in\Re^{\nagents \nobjects}:a\cdot x'\leq b\}$.  Since for each $i$ and $l$  there is $x'\in D$ with arbitrarily small $x'_{i,l}$, we must have $a\geq 0$. If $\C=\{0\}$ we may take $b=0$. If there is $x\in\C$ with $x>0$ then $b\geq 0$.   Hence $\W$ defines a finite collection of linear inequalities with non-negative coefficients.

 To finish the proof, note that if $z\in D\setminus \lcs(\C)$ then $z\notin \Re^{\nagents \nobjects}_+$.
\end{proof}

For any $ c=(a,b)\in \W$, define the support of $c$ to be
\[\supp(c)=\{(i,l)\in I\times O:a_{i,l}>0\}.\] 
Let $ a_i=(a_{i,l})_{l\in O} $ be the vector of coefficients relevant to $ i $ in $ (a,b) $.

There are two types of inequalities $c=(a,b)$ in $\W$: those with $b=0$ and those with $b>0$. If $b=0$, then for any $x\in \C$ we must have $ x_{i,l}=0 $ for all $(i,l)\in \supp(c) $. We can, without loss of generality, assume that there is exactly one such inequality; because two inequalities $(a,0),(a',0)\in \W$ can be substituted by $((\max\{a_{i,l},a'_{i,l}\}),0)$, and if there is no inequality with $b=0$ then we can include the trivial inequality $(0,0)$ in $\W$. Thus we can let $(a^0,0)\in \W$ be the unique inequality with $b=0$. When $(a^0,0)$ is nontrivial, it forbids some agents from consuming certain objects. So we say that $l$ is a \df{forbidden object} for agent $i$ if $a^0_{i,l}>0$. 

Among the remaining inequalities $ \W\backslash \{(a^0,0) \} $, we say $(a,b) $ is an \df{individual constraint} for agent $ i $ if for all $ j\neq i $ and $ l\in O $, $ a_{j,l}=0 $. In words, $ (a,b) $ only restricts $ i $'s consumption. Let $ \W^i $ denote the set of all individual constraints for $ i $.  We use $(a^0,0)$ and individual constraints to refine $ i $'s consumption space. Let $\X_i$ be the set of vectors $x_i\in Z_i$ such that $x_{i,l}=0$ if $l$ is a forbidden object for $i$ and $ x_i $ satisfies all of $ i $'s individual constraints. That is, 
\[
\X_i =\{x_i\in Z_i :  a^0_i\cdot x_i \leq 0 \text{ and }a_i\cdot x_i\le b \text{ for all }(a,b)\in \W^i\}.
\]

Let $\W^* = \W\backslash \big(\{(a^0,0) \}\bigcup \cup_{i\in I}\W^i\big)$ collect all constraints that involve more than one agent, including, for example, any supply constraints. The elements of $\W^*$ will be ``priced.'' By pricing these constraints we seek to ensure that one agent's pecuniary externality on others, imposed via the constraints present in $\C$, are internalized.

\begin{remark}
    It is important to set aside and not price individual constraints because it will allow our fairness conclusion to be stronger. For example, in the special case of the HZ model, our approach avoids personalized prices altogether; in consequence we obtain (unqualified) envy-freeness.\footnote{That said, there is no additional difficulty in proving our main result if we were to include prices for the constraints in each $\W^i$.}

Fairness considerations aside, recall our motivation in terms of pricing externalities. Only the inequalities in $\W^*$ generate externalities, so it makes sense to price these, and not individual constraints.
\end{remark}



\subsection{Normative properties}\label{sec:normativeprop}
Given a constrained allocation problem  $(\Gamma,\C)$, we analyze constrained versions of efficiency and fairness:  the efficiency and fairness properties that can be achieved subject to how assignments are constrained. 

A feasible assignment $x\in \C$ is \df{weakly $\C$-constrained Pareto efficient} if there is no feasible assignment $y\in \C$ such that $ u_i(y_i)> u_i(x_i) $ for all $ i $. And $ x\in \C $ is \df{$\C$-constrained Pareto efficient} if there is no feasible assignment $ y\in \C $ such that $ u_i(y_i)\geq u_i(x_i) $ for all $ i $ with a strict inequality for at least one agent.

Fairness rules out envy among agents who are treated symmetrically by the primitive constraints. We say that two agents $i$ and $j$ are of \df{equal type} if $\X_i=\X_j$ and, for all $(a,b)\in \W^*$, $ a_i=a_j $. An agent $i$ \df{envies} another agent $j$ at an assignment $x$ if $u_i(x_j)>u_i(x_i)$. An assignment $x\in \C$ is \df{envy-free} if no agent envies another agent at $x$, and \df{equal-type envy-free} if no agent envies another agent of equal type at $x$. 

\subsection{Equilibrium} \label{sec:eqmdefn}
For each constraint $c=(a,b)\in \W^*$, we introduce a price $ p_c$. When agent $i$ purchases $x_{i,l}$, she affects other agents' purchases through the role of $a_{i,l}$ in constraint $c$. Prices are meant to internalize such effects, just as the price of good $l$  classically internalizes the effect that $i$ has on other agents through the supply constraint for good $l$. Given a price vector $ p=(p_c)_{c\in \W^*}\in \Re^{\W^*} $, the personalized price vector faced by any agent $ i$ is defined to be $ p_i=(p_{i,l})_{l=1}^\nobjects $ such that
\[
p_{i,l}=\sum_{(a,b)\in \W^*}a_{i,l}p_{(a,b)}.
\] 

\begin{remark}
    If agents $i$ and $j$ are of equal type, then $p_i=p_j$. Thus prices are only personalized to the extent that constraints are personalized. We present several applications where all agents face the same prices. 
\end{remark}

A pair $ (x^*,p^*) $ is a \df{pseudo-market equilibrium} for $(\Gamma,\C)$ if
\begin{enumerate}
	\item $ x^*_i\in \arg\max_{x_i\in \mathcal{X}_i}\{u_i(x_i):p^*_i\cdot x_i\leq 1\}$;
	\item  $x^*\in \C$;
	\item\label{it:complemenaryslackness} $ c=(a,b)\in \W^* $, $ a\cdot x^*<b$ implies that $ p^*_c=0 $.    
\end{enumerate}

\begin{remark} Condition~(\ref{it:complemenaryslackness}) may seem unfamiliar. It is a ``complementary slackness'' property that rules out trivial equilibria. We follow \cite{budish2013designing} in imposing (\ref{it:complemenaryslackness}) directly on the definition of equilibrium.
\end{remark}

\section{Main theorem}\label{sec:mainresult}

For constrained allocation problems without endowments, our main result is:

\begin{theorem}\label{thm:existence:noend}
	Suppose that agents' utility functions are continuous, quasi-concave and strictly increasing. 
	\begin{itemize}
		\item There exists a pseudo-market equilibrium $(x^*,p^*)$ in which $ x^* $ is weakly $\C$-constrained Pareto efficient.
		\item If agents' utility functions are semi-strictly quasi-concave, there exists a pseudo-market equilibrium $(x^*,p^*)$ in which $ x^* $ is $\C$-constrained Pareto efficient.
		\item Every pseudo-market equilibrium assignment is equal-type envy-free.
	\end{itemize}
\end{theorem}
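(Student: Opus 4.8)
The plan is to prove the three claims in order, establishing existence via a fixed-point argument and then verifying the normative properties of any equilibrium. For existence, I would first bound the price space: since every inequality $(a,b)\in\W^*$ has $b>0$ (the $b=0$ case was absorbed into $(a^0,0)$) and non-negative coefficients, and since each agent's consumption space $\X_i$ is compact (it is the intersection of $Z_i=\R^L_+$ with finitely many non-negative linear inequalities, bounded because $\C$ is bounded and $\lcs(\C)$ forces boundedness on the relevant coordinates), I would work with a truncated, normalized price simplex $P=\{p\in\R^{\W^*}_+:\sum_c p_c\le\bar P\}$ for a large constant $\bar P$, or better, the unit simplex after observing that scaling prices is harmless. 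The demand correspondence $x_i\mapsto D_i(p)=\arg\max\{u_i(x_i):x_i\in\X_i,\ p_i\cdot x_i\le 1\}$ is, by the Maximum Theorem, upper hemicontinuous with non-empty compact values, and convex-valued by quasi-concavity of $u_i$; I need the budget set to be non-empty and to vary continuously, which requires a cheaper-point / Slater-type condition — here strict monotonicity plus $0\in\X_i$ gives a point strictly inside every budget constraint, so lower hemicontinuity of the budget correspondence holds wherever $p_i\neq 0$, and the degenerate case $p_i=0$ must be handled by a perturbation or by noting demand then sits on the boundary of $\X_i$. Then I would set up a Gale–Nikaido–Debreu style map on $P\times\prod_i\X_i$: agents best-respond to $p$, and the ``auctioneer'' moves $p$ in the direction of excess pressure on the constraints, i.e. toward constraints $c=(a,b)$ with $a\cdot x>b$, using a standard Kakutani argument to get a fixed point $(x^*,p^*)$.

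The main obstacle will be showing the fixed point actually satisfies feasibility $x^*\in\C$ — equivalently that $a\cdot x^*\le b$ for every $c=(a,b)\in\W^*$ — together with complementary slackness, starting only from the first-order/market-clearing conditions the fixed point delivers. The key identity is the ``Walras-law'' type aggregation: summing the personalized budget constraints, $\sum_i p^*_i\cdot x^*_i=\sum_i\sum_{c=(a,b)\in\W^*}a_i\cdot x^*_i\,p^*_c=\sum_{c=(a,b)}(a\cdot x^*)\,p^*_c$, and this is $\le\sum_i 1=N$ while strict monotonicity forces each budget to bind, giving $\sum_{c}(a\cdot x^*)p^*_c=N$. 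On the other hand, if the auctioneer's rule is designed correctly, the fixed-point property forces $p^*_c=0$ whenever $a\cdot x^*<b$ and prevents $a\cdot x^*>b$: the usual trick is that if some constraint were violated, the auctioneer would have pushed all price weight onto the violated constraints, driving those personalized prices up without bound and killing demand for the implicated $(i,l)$, contradicting that those coordinates are demanded. One must be careful that $\C$, not merely $\lcs(\C)$, is hit: because $u_i$ is strictly increasing, $x^*$ will lie on the ``upper-right'' boundary, and the inequalities of $\W$ were chosen (Lemma~\ref{lem:nonnegativelinearineq}) precisely so that a point of $\lcs(\C)$ satisfying all of them with the relevant ones tight lies in $\C$; I would invoke that, handling $(a^0,0)$ and the individual constraints via the definition of $\X_i$.

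For the efficiency claims I would argue by contradiction in the standard welfare-theorem way. Suppose $y\in\C$ Pareto-dominates $x^*$ (strictly for all $i$, in the first bullet). Feasibility $y\in\C\subseteq\lcs(\C)$ gives $a\cdot y\le b$ for all $c\in\W^*$, hence $\sum_i p^*_i\cdot y_i=\sum_{c=(a,b)}(a\cdot y)p^*_c\le\sum_{c=(a,b)}(a\cdot x^*)p^*_c=N$, so some agent can afford $y_i$ within budget $1$; but $u_i(y_i)>u_i(x^*_i)$ contradicts optimality of $x^*_i$. For the $\C$-constrained (weak-domination) version, I would upgrade this with a cheaper-point argument exactly as in the second welfare theorem: semi-strict quasi-concavity ensures that if $u_i(y_i)\ge u_i(x^*_i)$ then $p^*_i\cdot y_i\ge 1$, with strict inequality when $u_i(y_i)>u_i(x^*_i)$ or when one can move toward the cheaper point $0\in\X_i$ and strictly improve; summing these yields $\sum_i p^*_i\cdot y_i>N$, a contradiction. (The role of semi-strict quasi-concavity is precisely to rule out the flat-indifference pathology; this is the one place the stronger hypothesis is needed.)

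Finally, for equal-type envy-freeness: if $i$ and $j$ are of equal type then $\X_i=\X_j$ and $a_i=a_j$ for every $(a,b)\in\W^*$, so $p^*_i=p^*_j$; since $x^*_j\in\X_j=\X_i$ and $p^*_i\cdot x^*_j=p^*_j\cdot x^*_j\le 1$, the bundle $x^*_j$ is feasible for $i$'s maximization problem, whence $u_i(x^*_i)\ge u_i(x^*_j)$, i.e. $i$ does not envy $j$. This part is immediate once the equilibrium is in hand, and requires no further hypotheses beyond those already assumed.
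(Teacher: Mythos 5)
Your overall architecture---truncated price space, Kakutani fixed point on a price-adjustment map, the Walras-law aggregation $\sum_i p_i\cdot x_i=\sum_{c=(a,b)\in\W^*}(a\cdot x)p_c$, complementary slackness, and the envy-freeness argument---matches the paper's proof, and your argument for weak $\C$-constrained efficiency (at least one agent can afford the strictly dominating bundle) is sound. But there is a genuine gap in your argument for the second bullet. You claim that semi-strict quasi-concavity ensures ``if $u_i(y_i)\ge u_i(x^*_i)$ then $p^*_i\cdot y_i\ge 1$.'' This is false whenever agent $i$ is satiated within $\X_i$ at the equilibrium, i.e.\ $u_i(x_i^*)=\max\{u_i(z):z\in\X_i\}$ with $p^*_i\cdot x^*_i<1$---a situation that arises routinely here because $\X_i$ is bounded (in HZ, an agent facing a cheap favorite object does not exhaust her budget). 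Taking $y_i=x^*_i$ already refutes the claim. The inequality you actually need is $p^*_i\cdot y_i\ge p^*_i\cdot x^*_i$ for indifferent agents, and that holds only if $x^*_i$ is an \emph{expenditure-minimizing} bundle on its indifference class. Since the first welfare theorem fails in this model (the paper notes there are Pareto-inefficient equilibria), efficiency cannot be deduced from the equilibrium conditions alone: the cheapest-bundle property must be built into the construction. This is precisely what the paper does, running the fixed point on the selection $\ul d_i(p)=\argmin\{p_i\cdot x_i:x_i\in d_i(p)\}$, and semi-strict quasi-concavity is used not as a ``cheaper-point'' device in the welfare argument but to prove that $\ul d_i$ is upper hemi-continuous (via the lemma that the budget binds for all optimal bundles of unsatiated agents, and the characterization of $\ul d_i$ for satiated ones). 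Your proposal never introduces this selection, so the fixed point you construct need not yield an efficient equilibrium.

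A secondary, smaller gap: for the step $x^*\in\C$ (not merely $\lcs(\C)$), your appeal to ``the relevant constraints being tight'' is not the operative mechanism. The argument is that if $x^*\le x'$ with $x'\in\C$ and $x^*_{i^*,l^*}<x'_{i^*,l^*}$, then every $c\in\W^*$ with $(i^*,l^*)\in\supp(c)$ is \emph{slack} at $x^*$, hence carries price zero by complementary slackness, so agent $i^*$ could raise $x^*_{i^*,l^*}$ at zero cost while remaining in $\X_{i^*}$, contradicting strict monotonicity. You should also verify, rather than assert, that $\X_i$ is compact and that demand is nonempty when some personalized prices vanish; the monotonicity-plus-zero-price argument above is what does the work there.
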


Theorem~\ref{thm:existence:noend} is implied by our more general result, Theorem \ref{thm:existence} in Section \ref{sec:endowment}. We should emphasize that the first welfare theorem does not hold in our model: one can exhibit examples of Pareto inefficient pesudo-market equilibria, and even of Pareto-ranked equilibrium allocations. Crucial to Theorem~\ref{thm:existence:noend} is the cheapest bundle property:  $(x,p)$ satisfies the \df{cheapest-bundle property} if, for each $i$, $x_i$ minimizes expenditure $p_i\cdot z_i$ among all the $z_i\in \X_i$ for which $u_i(z_i)=u_i(x_i)$. The cheapest bundle property, and its role in obtaining efficiency, was already established by HZ. We show the existence of a pseudo-market equilibrium with the cheapest-bundle property, which in consequence is  $\C$-constrained Pareto efficient. 

The assumption that $u_i$ be strictly increasing is also worth commenting on. In the special case of discrete objects that are allocated probabilistically (as in HZ), it amounts to assuming that there is an outside option: A fixed good (in enough supply) that is obtained with probability $1-\sum_l x_{i,l}$. Monotonicity then means that all goods in $O$ are strictly preferred to the outside option. For floor constraints, monotonicity means that all objects are desirable to the extent that pricing  upper bound constraints will cause demand spill over, and thus ensure the satisfaction of floor constraints.\footnote{Our approach will not work when floor constraints involve consumption of goods that are ranked below the outside option in agents' preferences, but in such situations floor constraints are probably impossible to satisfy.}

\section{Application: Discrete object allocation under constraints}\label{sec:HZwconstraints}

Our first application is to the problem of assigning indivisible objects, as in Section~\ref{sec:discreteallocation}. We seek fair and efficient random assignments subject to constraints.  Each object $l\in O$ is available in fixed integer supply, and each agent demands at most one copy of any object. 

We explicitly describe constraints by way of the constraint structures introduced by \cite{budish2013designing}. Many constraints in real-life allocation problems can be described through such structures.  

\subsection{Constraint structures}\label{sec:constraintstructures} 

A \df{constraint} is denoted by a tuple $(S,(\underline{q}_S, \overline{q}_S))$, where $S \subset I\times O $ is a subset of agent-object pairs, and $q_S=(\underline{q}_S,\overline{q}_S)$ is a pair of non-negative integers with $\underline{q}_S\le \overline{q}_S$.  The integers $ \underline{q}_S $ and $ \overline{q}_S $ are respectively called \df{floor}  and \df{ceiling} quotas. 

An assignment $x$ satisfies $(S,(\underline{q}_S, \overline{q}_S))$ if
\begin{equation}\label{eq:inequalityform}
	\underline{q}_S\le \sum_{(i,l)\in S}x_{i,l}\le \overline{q}_S.
\end{equation}

For any $ i\in I $ and $ l\in O $, a \df{singleton constraint} $(S,q_S)$ is such that $ S=\{(i,l)\} $ and $ q_S=(0,1) $. This means that $ i $ can obtain at most one copy of $ l $. For any $ i\in I $, a \df{row constraint} $(S,q_S)$ is such that $ S=\{i\}\times O $ and $ q_S=(0,q_i) $ where $ q_i\in \mathbb{N} $. This means that $ i $ obtains at most $ q_i $ objects. Unit demand is an example of a row constraint. For any $ l\in O $, a \df{column constraint} $(S,q_S)$ is such that $ S=I\times \{l\} $ and $ q_S=(0,q_l) $. This means that at most $ q_l $ copies of $ l $ can be assigned. Supply constraints are examples of column constraints. 

A \df{constraint structure} $\H$ is a collection of constraints. We assume that $\H$ contains the singleton constraints for all agent-object pairs, the row constrains for all agents and the column constraints for all objects.  When $\H$ satisfies these assumptions, we say that it is \df{allocative}.

The set of assignments that satisfy the constraints in $\H$ is defined to be
 \[ 
\C = \{x\in \Re^{\nagents \nobjects}_+: \underline{q}_S\le \sum_{(i,l)\in S}x_{i,l}\le \overline{q}_S \text{ for all } (S,q_S)\in \H \}.
\] 


Note that $\C$ could be different from the convex hull of deterministic assignments satisfying the constraints in $ \H $. When they coincide, we say that $\C$ is \df{implementable}. \cite{budish2013designing} prove that a sufficient and necessary condition for $ \C $ to be implementable for all possible quotas is that $ \H $ is a \df{bihierarchy}. Formally, a constraint structure $ \mathcal{H} $ is a \df{hierarchy} if for every distinct $ S $ and $ S' $ in $ \mathcal{H} $, either $ S\subset S' $, or $ S'\subset S $, or $ S\cap S'=\emptyset $. $ \mathcal{H} $ is a bihierarchy if there exist two hierarchies $ \mathcal{H}_1 $ and $ \mathcal{H}_2 $ such that $ \mathcal{H}=\mathcal{H}_1\cup \mathcal{H}_2 $ and $ \mathcal{H}_1\cap \mathcal{H}_2=\emptyset $. $ \H_1 $ is the set of sub-row, row, and sup-row constraints, while $ \H_2 $ is the set of sub-column, column and sup-column constraints.\footnote{A constraint $ (S,q_S) $ is a sub-row constraint if $ S=\{i\}\times O' $ for some $ i\in I $ and $ O'\subset O $, and it is a sup-row constraint if $ S=I'\times O $ for some $ I'\subset I $. Sub-column and sup-column constraints are similarly defined.} When $ \mathcal{H} $ is a bihierarchy, $ \C $ is the set of feasible assignments that we take as a primitive in our model. Then we can apply our method directly to $ \C $.  In the rest of this section, we discuss applications with bihierarchy constraints. Note, however, that our approach is applicable to cases where $\H$ is not a bihierarchy. In such cases, one first needs to describe the convex hull of deterministic assignments that satisfy the constraints in $\H$ (see our application to the roommate problem in Section \ref{sec:roomates}).

We consider two applications of bihierarchy constraints. In the first application, all floor quotas are zero. Then we can directly price the constraints in $ \H $. It should be clear that singleton and row constraints do not need to be priced. Column constraints involve more than one agent, and thus generate pecuniary externalities that must be internalized through prices.  In the second application, there are nontrivial floor constraints. To characterize $ \lcs(\C) $, we will then derive a new set of ceiling constraints implied by the ceiling and floor constraints present in $\H$. We discuss two concrete examples to show how the new ceiling constraints are derived. The general model can be treated much like our examples.

\subsection{No floor constraints}\label{sec:nofloor}

Suppose that $\H$ is such that all floor quotas are zero. Then $ \C=\lcs(\C) $, and we can directly price all non-individual constraints in $ \H $: a set that we denote by  $ \H^* $. Here individual constraints consist of singleton, sub-row, and row constraints. The set $\W^*$ is \[ 
\{ (\one_{S},\overline q_S) \in \Re_+^{\nobjects \nagents} \times \Re_+ : (S,(0,\overline q_S))\in \H^* \}.\footnote{By $\one_S$ we denote the indicator vector of the set $S$.}
\] Under our assumptions on utilities, an efficient pseudo-market equilibrium exists with prices for each constraint in $\H^*$. It is interesting to discuss the fairness properties of such equilibria.

Two agents $i$ and $j$ are of \df{equal type} if $\X_i=\X_j$ and, for all $S\in \H^*$ and $l\in O$, $(i,l)\in S$ if and only if $(j,l)\in S$. We say that $ \H $ is \df{anonymous} if every two agents are of equal type. If $ \H $ is anonymous, every constraint in $\H^*$ must be a column or sup-column constraint.  Under anonymous constraints, every pseudo-market equilibrium is envy-free: the strongest possible fairness property that we can obtain.

An example with anonymous constraints is the Japanese medical residency match with regional caps studied by \cite{kamada2015efficient}. Suppose that agents are doctors and objects are hospital positions. Each constraint in $\H^*$ takes the form \[
\big(I\times O',(0,\bar q_{O'})\big)
\] 
where $ O'\subseteq O $ is the set of hospitals in a geographic region (a city or a prefecture). Here $ \bar q_{O'}  $ is the regional cap used to control the maximum number of doctors that the region $ O' $ can employ. A collection of such constraints is anonymous because each constraint does not distinguish among the identities of individual doctors.

\subsection{Floor constraints}\label{sec:floorconstraints}

Floor constraints are common in applications, but difficult to deal with theoretically. We discuss two examples. The \textbf{first} is the Japanese medical residency match mentioned above. By introducing regional caps to restrict the number of doctors assigned to urban hospitals, the Japanese government wants to increase the number of doctors assigned to rural hospitals. But the government's ideal distribution of doctors can be described by a collection of constraints with \emph{both} floor and ceiling quotas. (The regional caps described above can be motivated through our approach to characterizing $\lcs(\C)$.)

Specifically, the hospitals $ O $ are located in $ K $ disjoint regions. Accordingly, there is a partition of hospitals $ O=R_1\cup R_2\cup \cdots \cup R_K $ such that every $ R_k $ is the set of hospitals in a region. So we simply refer to every $ R_k $ as a region. For each region $ R_k $, there is a constraint
\[
\underline{q}_{R_k}\le \sum_{l\in R_k}\sum_{i\in I} x_{i,l}\le \overline{q}_{R_k}.
\]

Assume that there are enough hospital positions to assign each doctor a position. We can always add null hospitals when that is not the case. Also, assume that there are enough doctors to meet all floor constraints. Below we derive the inequalities in $ \W $ and show that they are anonymous. Our theorem applies to deliver a pseudo-market equilibrium that satisfies the constraints, and is efficient and envy-free. 

Let $ \mathcal{R}=\{R_1,R_2,\ldots,R_K\} $ denote the set of regions. For each $ \ell\in \{1,2,\ldots,K\} $, let $ \mathcal{R}_\ell $ be the collection of sets that are the union of $ \ell $ distinct regions. That is,
\[
\mathcal{R}_\ell=\{R_{k_1}\cup R_{k_2}\cup \cdots \cup R_{k_\ell}:\{k_1,k_2,\ldots,k_\ell\}\subset \{1,2,\ldots,K\} \}.
\] In particular, $ \mathcal{R}_1=\mathcal{R} $.

Consider the following inequalities
\begin{equation}\label{eq:hospital}
	\begin{cases}
	0\le \sum_{l\in O}x_{i,l}\le 1 &\text{ for all }i\in I,\\
	0\le \sum_{i\in I} x_{i,l} \le q_l &\text{ for all }l\in O,\\
	0\le \sum_{i\in I, l\in R}x_{i,l} \le \overline{q}_R &\text{ for all }\ell \in \{1,\ldots,K\} \text{ and }R\in \mathcal{R}_\ell,
	\end{cases}
\end{equation}
where $ \overline{q}_R $ is (re)defined according to the following procedure:
\begin{itemize}
	\item For every $ R\in \mathcal{R}_1 $, redefine its ceiling quota to be
	\[
	\overline{q}_R= \min\big\{\overline{q}_R,N-\sum_{R'\in\mathcal{R}\backslash \{R\}}\underline{q}_{R'}\big\}.
	\]
	Note that $ \overline{q}_R\ge \underline{q}_R $ because $ N\ge \sum_{R'\in\mathcal{R}}\underline{q}_{R'} $, and $ \overline{q}_R $ is weakly smaller than the original ceiling quota.
	
	\item For every $ R=R_{k_1}\cup R_{k_2} \in \mathcal{R}_2 $, define its ceiling quota to be
	\[
	\overline{q}_R= \min\big\{\overline{q}_{R_{k_1}}+\overline{q}_{R_{k_2}},N-\sum_{R'\in \mathcal{R}\backslash \{R_{k_1},R_{k_2}\}}\underline{q}_{R'}\big\}.
	\]
    \item In general, for every $ R =R_{k_1}\cup R_{k_2}\cup \cdots \cup R_{k_\ell} \in \mathcal{R}_\ell$, define its ceiling quota to be
	\[
	\overline{q}_R= \min\big\{\overline{q}_{R\backslash \{R_{k_x}\}}+\overline{q}_{R_{k_x}} \text{ for every }x\in \{1,2,\ldots,\ell\},N-\sum_{R'\in\mathcal{R}\backslash \{R_{k_1},\ldots,R_{k_\ell}\}}\underline{q}_{R'}\big\}.
	\]	
\end{itemize}

We prove that $ \lcs(\C) $ is characterized by the inequalities in (\ref{eq:hospital}).

\begin{proposition}\label{prop:hospital}
	$ \lcs(\C)=\{ x\in \Re^{NL}_+:x \text{ satifsies the inequalities in (\ref{eq:hospital})}\} $.
\end{proposition}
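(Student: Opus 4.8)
The plan is to prove $\lcs(\C)\subseteq P$ and $P\subseteq\lcs(\C)$ separately, where $P$ denotes the polyhedron on the right-hand side of the proposition; the second (``fill-up'') inclusion is the substantive one.

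For $\lcs(\C)\subseteq P$, write $x\in\lcs(\C)$ as $0\le x\le x'$ with $x'\in\C$. Every inequality in~(\ref{eq:hospital}) is either a trivial non-negativity constraint or an \emph{upper} bound on a sum of coordinates, so since $x\le x'$ it suffices to check that $x'$ itself lies in $P$; the unit-demand and supply inequalities hold because $x'\in\C$, so only the union-of-regions ceilings remain. I would verify $\sum_{i\in I,\,l\in R}x'_{i,l}\le\overline q_R$ for $R\in\mathcal{R}_\ell$ by induction on $\ell$. For $\ell=1$, $R=R_k$: the single-region ceiling defining $\C$ gives one bound, and $\sum_{i,l\in R_k}x'_{i,l}=\sum_{i,l}x'_{i,l}-\sum_{k'\neq k}\sum_{i,l\in R_{k'}}x'_{i,l}\le N-\sum_{k'\neq k}\underline q_{R_{k'}}$ gives the other (using that unit demand forces $\sum_{i,l}x'_{i,l}\le N$ and that $x'\in\C$ respects the region floors); their minimum is exactly the redefined $\overline q_{R_k}$. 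The inductive step splits off one region $R_{k_x}$ from $R$, applies the hypothesis to $R\setminus R_{k_x}\in\mathcal{R}_{\ell-1}$ and to $R_{k_x}\in\mathcal{R}_1$, and again uses the ``total mass minus complementary floors'' bound; this reproduces the recursion defining $\overline q_R$.

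For $P\subseteq\lcs(\C)$, fix $x\in P$ and seek $x'\in\C$ with $x\le x'$. Put $y_k=\sum_{i\in I,\,l\in R_k}x_{i,l}$ and $\delta_k=\max\{0,\underline q_{R_k}-y_k\}$, the amount by which region $k$ falls short of its floor. Since~(\ref{eq:hospital}) already forces $y_k\le\overline q_{R_k}$, and both the redefined $\overline q_{R_k}$ and $\underline q_{R_k}$ are at most the original ceiling of $R_k$, raising region $k$'s total to $\max\{y_k,\underline q_{R_k}\}=y_k+\delta_k$ keeps it within the ceiling defining $\C$; and $\C$ imposes no constraint on unions of regions. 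Hence it is enough to find non-negative increments $t_{i,l}$ with $\sum_l(x_{i,l}+t_{i,l})\le1$ for all $i$, $\sum_i(x_{i,l}+t_{i,l})\le q_l$ for all $l$, and $\sum_{i,\,l\in R_k}t_{i,l}=\delta_k$ for all $k$, and then take $x'=x+t$. I would obtain $t$ from a maximum flow in the network with source $s$; a region node $r_k$ with arc $s\to r_k$ of capacity $\delta_k$; arcs $r_k\to o_l$ of capacity $q_l-\sum_i x_{i,l}$ for $l\in R_k$; arcs $o_l\to a_i$ of infinite capacity for every agent $i$; and arcs $a_i\to t$ of capacity $1-\sum_l x_{i,l}$. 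A flow of value $D:=\sum_k\delta_k$ necessarily saturates every $s\to r_k$ arc, hence is maximal, and reading $t_{i,l}$ off the arc $o_l\to a_i$ gives the required increments.

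The crux, and the expected main obstacle, is to show this network admits a flow of value $D$; by max-flow/min-cut it suffices that every finite $s$-$t$ cut has capacity at least $D$. Let $\mathcal S$ be the source side of such a cut; finiteness forces that if some $o_l\in\mathcal S$ then $a_i\in\mathcal S$ for \emph{every} $i$. If no object node lies in $\mathcal S$, each region $k$ contributes $\delta_k$ (when $r_k\notin\mathcal S$) or $\sum_{l\in R_k}q_l-y_k$ (when $r_k\in\mathcal S$); the latter is $\ge\delta_k$ because $\sum_{l\in R_k}q_l\ge\max\{y_k,\underline q_{R_k}\}$, using the supply inequalities of~(\ref{eq:hospital}) together with $\C\neq\emptyset$; summing gives $\ge D$. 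Otherwise all arcs $a_i\to t$ are cut, contributing $N-\sum_{i,l}x_{i,l}=N-\sum_k y_k$, so with $B'=\{k:r_k\in\mathcal S,\ y_k<\underline q_{R_k}\}$ and $R=\bigcup_{k\notin B'}R_k$ the cut capacity is at least $(N-\sum_k y_k)+\sum_{k\notin B'}\delta_k$, and the bound reduces to $\sum_{k\notin B'}y_k=\sum_{i,\,l\in R}x_{i,l}\le N-\sum_{k\in B'}\underline q_{R_k}$. If $B'\subsetneq\{1,\dots,K\}$ this is precisely the union-ceiling inequality of~(\ref{eq:hospital}) for $R\in\mathcal{R}_{K-|B'|}$, since the recursion makes $\overline q_R\le N-\sum_{R'\notin R}\underline q_{R'}=N-\sum_{k\in B'}\underline q_{R_k}$; if $B'=\{1,\dots,K\}$ it reads $0\le N-\sum_k\underline q_{R_k}$, which holds because $\C\neq\emptyset$. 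Matching an arbitrary subcollection $B'$ of regions produced by the cut to the correct member of $\mathcal{R}_\ell$, and invoking exactly the $N-\sum_{R'\notin R}\underline q_{R'}$ term in the recursive quota, is the delicate step; everything else — the induction above, the reduction to increments, and recovering $t$ from the flow — is routine.
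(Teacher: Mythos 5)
Your proposal is correct, and for the substantive inclusion it takes a genuinely different route from the paper. For $P\subseteq\lcs(\C)$ (the paper calls the right-hand side $A$), the paper does not construct the dominating feasible point directly: it passes to the set $A'$ of maximal elements of $A$, notes $A=\lcs(A')$, and shows $A'\subseteq\C$ by contradiction --- if a maximal $x$ violated some region's floor, one finds a doctor with slack demand and a hospital with slack supply in the deficient region, raises a single entry by $\epsilon$, and checks the perturbed point is still in $A$, contradicting maximality. That check rests on an auxiliary Claim, proved by induction, that $\overline{q}_R\ge \underline{q}_{R_{k_x}}+\overline{q}_{R\setminus R_{k_x}}$ for every union of regions $R$ and constituent region $R_{k_x}$. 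Your max-flow/min-cut construction instead fills all floor deficits simultaneously and bypasses that Claim entirely: the only feature of the recursive quotas your cut verification invokes is the term $N-\sum_{R'\notin R}\underline{q}_{R'}$ in the defining minimum, applied to the union $R$ of regions not isolated by the cut (plus $N\ge\sum_k\underline{q}_{R_k}$ and $\underline{q}_{R_k}\le\sum_{l\in R_k}q_l$, both of which the paper also assumes). This makes your argument constructive --- it produces the dominating $x'\in\C$ in one shot --- and makes transparent which inequalities in (\ref{eq:hospital}) do the work, at the price of importing max-flow/min-cut; the paper's route is more elementary but leans on the extremal trick and the inductive Claim. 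The easy inclusion $\lcs(\C)\subseteq P$ is the same in both treatments: the paper asserts it from the definition of the quotas, and your induction on $\ell$ just spells out that assertion.
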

\begin{proof}
	We denote by $ A $ the set characterized by the inequalities in (\ref{eq:hospital}). It is easy to see that $ A=\lcs(A) $. By the procedure to define $ \overline{q}_R $, all elements of $ \C $ satisfy (\ref{eq:hospital}). So $ \C\subset A $ and thus $ \lcs(\C)\subset A $. 	
	To prove that $ A\subset  \lcs(\C)$, we first prove a claim.
	
	\begin{claim*}\label{claim:distributional}
		For every $ \ell\in \{2,\ldots,K\} $, every $ R=R_{k_1}\cup R_{k_2}\cup \cdots \cup R_{k_\ell}\in \mathcal{R}_\ell $, and every $ x\in \{1,\ldots,\ell\} $,
		\begin{center}
			$ \overline{q}_R\ge \underline{q}_{R_{k_x}}+\overline{q}_{R\backslash \{R_{k_x}\}} $.
		\end{center} 
	\end{claim*}
	\noindent \df{Proof of the claim.}
		Base case $ \ell=2 $: For every $ R=R_{k_1}\cup R_{k_2} \in \mathcal{R}_2 $, if $ \overline{q}_R= \overline{q}_{R_{k_1}}+\overline{q}_{R_{k_2}}$, then the claim holds obviously. Otherwise, $ \overline{q}_R= N-\sum_{R'\in\mathcal{R}\backslash \{R_{k_1},R_{k_2}\}}\underline{q}_{R'} $. By definition, $ N-\sum_{R'\in\mathcal{R}\backslash \{R_{k_1}\}}\underline{q}_{R'}\ge \overline{q}_{R_{k_1}} $. So
		$ \overline{q}_R= N-\sum_{R'\in\mathcal{R}\backslash \{R_{k_1},R_{k_2}\}}\underline{q}_{R'}\ge \overline{q}_{R_{k_1}}+\sum_{R'\in\mathcal{R}\backslash \{R_{k_1}\}}\underline{q}_{R'}-\sum_{R'\in\mathcal{R}\backslash \{R_{k_1},R_{k_2}\}}\underline{q}_{R'}= \overline{q}_{R_{k_1}}+\underline{q}_{R_{k_2}}$. Similarly, we prove that $ \overline{q}_R\ge \underline{q}_{R_{k_1}}+\overline{q}_{R_{k_2}} $.
		
		Induction step: Suppose the claim is true for $ 1,2,\ldots,\ell $. Then we prove that it is also true for $ \ell+1 $. For any $ R =R_{k_1}\cup R_{k_2}\cup \cdots \cup R_{k_{\ell+1}} \in \mathcal{R}_{\ell+1}$, if $ \overline{q}_R= \overline{q}_{R\backslash \{R_{k_x}\}}+\overline{q}_{R_{k_x}}$ for some $ x\in \{1,2,\ldots,\ell+1\} $, then it is obvious that $ \overline{q}_R\ge \overline{q}_{R\backslash \{R_{k_x}\}}+\underline{q}_{R_{k_x}} $. By the induction assumption, for every $ y\neq x $, $ \overline{q}_{R\backslash \{R_{k_x}\}}\ge \underline{q}_{R_{k_y}}+\overline{q}_{R\backslash \{R_{k_x},R_{k_y}\}}$. So $ \overline{q}_R\ge \underline{q}_{R_{k_y}}+\overline{q}_{R\backslash \{R_{k_x},R_{k_y}\}}+ \overline{q}_{R_{k_x}}\ge \underline{q}_{R_{k_y}}+\overline{q}_{R\backslash \{R_{k_y}\}}$. The claim is proved.
		
		Otherwise,  $ \overline{q}_R= N-\sum_{R'\in\mathcal{R}\backslash \{R_{k_1},\ldots,R_{k_{\ell+1}}\}}\underline{q}_{R'}$. By definition, for every $ x $, $ N-\sum_{R'\in\mathcal{R}\backslash \{R_{k_1},\ldots,R_{k_{\ell+1}}\}\cup \{R_{k_x}\}}\underline{q}_{R'}\ge \overline{q}_{R\backslash \{R_{k_x}\}}$. So $ \overline{q}_R\ge \overline{q}_{R\backslash \{R_{k_x}\}}+\underline{q}_{R_{k_x}} $.
		
		By induction, we complete the proof of the claim. 
		
		\medskip
	
	Define $ A'=\{x\in A: \nexists x'\in A \text{ such that }x<x'\} $. It is clear that $ A=\lcs(A') $. We prove that $ A'\subset \C $. Suppose there exists $ x\in A' $ such that $ x\notin \C $. Because $ x $ satisfies all original ceiling constraints that define $ \C $, $ x $ must violate the floor constraint of some $ R_k $. That is, $ \sum_{i\in I, l\in R_k}x_{i,l}<\underline{q}_{R_k} $. Then there must exist some doctor $ i $ such that $ \sum_{l\in O}x_{i,l}<1 $, since otherwise $ \sum_{i\in I,l\in O\backslash \{R_k\}}x_{i,l}=N-\sum_{i\in I, l\in R_k}x_{i,l}>N-\underline{q}_{R_k}\ge \overline{q}_{O\backslash \{R_k\}} $, which contradicts the assumption that $ x\in A'\subset A $. Because $ \underline{q}_{R_k}\le \sum_{l\in R_k} q_l $, there must exist $ l\in R_k $ such that $ \sum_{i\in I}x_{i,l}<q_l $. Now consider a new assignment $ x' $ such that $ x'_{i,l}=x_{i,l}+\epsilon $ where $0< \epsilon<\min \{1-\sum_{l\in O}x_{i,l},q_l-\sum_{i\in I}x_{i,l},\underline{q}_{R_k}-\sum_{i\in I, l\in R_k}x_{i,l}\} $, and $ x' $ coincides with $ x $ in the other cells. So $ x<x' $. Below we prove that $ x'\in A $, which contradicts the assumption that $ x\in A' $.
	
	Suppose towards a contradiction that $ x'\notin A $. Let $ \ell>1 $ be the smallest index such that there exists $ R\in \mathcal{R}_\ell $ with $ \sum_{i\in I, l\in R}x'_{i,l}> \overline{q}_R $. It is clear that $ R_k\subset R $. By Claim, $ \overline{q}_R\ge \underline{q}_{R_k}+\overline{q}_{R\backslash R_k} $. So
	\[
	\sum_{i\in I, l\in R}x'_{i,l}> \underline{q}_{R_k}+\overline{q}_{R\backslash R_k}.
	\]Because $ \epsilon $ is chosen such that $ \sum_{i\in I, l\in R_k}x'_{i,l}<\underline{q}_{R_k} $. So
	\[
	\sum_{i\in I, l\in R\backslash R_k}x'_{i,l}>\overline{q}_{R\backslash R_k}.
	\]
	But it means that $ \sum_{i\in I, l\in R\backslash R_k}x_{i,l}>\overline{q}_{R\backslash R_k} $, which contradicts $ x\in A $. So $ x'\in A $.
\end{proof}
Besides unit demand constraints, the other inequalities in (\ref{eq:hospital}) do not distinguish among the identities of doctors. So every pseudo-market equilibrium is envy-free.

The \textbf{second} application we discuss is controlled school choice. When implementing school choice, a consideration for many school districts is demographic diversity. We present a model in which the students $ I $ are simply classified into minorities $ I^m $ and majorities $ I^M $.\footnote{Our arguments extend to more than two types.} Let the number of minorities be $ N^m $ and the number of majorities be $ N^M $. Each school $ l $ has a pair of quotas $ (\overline{q}^m_l,\underline{q}^m_l) $ for minorities, and a pair of quotas $ (\overline{q}^M_l,\underline{q}^M_l) $ for majorities. So aside from supply constraints, each school $ l $ has the constraints
\begin{align*}
	&\underline{q}^m_l\le \sum_{i\in I^m}x_{i,l}\le \overline{q}^m_l,\\
	&\underline{q}^M_l\le \sum_{i\in I^M}x_{i,l}\le \overline{q}^M_l.
\end{align*}
Of course, we assume that $ \underline{q}^m_l+\underline{q}^M_l\le q_l $.

The inequalities to characterize $ \lcs(\C) $ can be derived similarly to how we dealt with regional hospitals above. The only difference is that we need to take into account  the interaction between the quotas for the two student types within each school. After that, we can deal with the assignments for two types separately. 

Formally, consider the following inequalities
\begin{equation}\label{eq:controlled}
\begin{cases}
0\le \sum_{l\in O}x_{i,l}\le 1 &\text{ for all }i\in I,\\
0\le \sum_{i\in I} x_{i,l} \le q_l &\text{ for all }l\in O,\\
0\le \sum_{i\in I^m, l\in O'}x_{i,l} \le \overline{q}^m_{O'} &\text{ for all nonempty }O'\subset O,\\
0\le \sum_{i\in I^M, l\in O'}x_{i,l} \le \overline{q}^M_{O'} &\text{ for all nonempty }O'\subset O,
\end{cases}
\end{equation} 
where $ \overline{q}^m_{O'} $ and $ \overline{q}^M_{O'} $ are (re)defined as follows:
\begin{itemize}
	\item For every $ l\in O $, redefine the ceiling quotas to be
	\begin{align*}
	&\overline{q}^m_l= \min\{\overline{q}^m_l,q_l-\underline{q}^M_l,N^m-\sum_{l'\in O\backslash \{l\}}\underline{q}^m_{l'}\},\\
	&\overline{q}^M_l= \min\{\overline{q}^M_l,q_l-\underline{q}^m_l,N^M-\sum_{l'\in O\backslash \{l\}}\underline{q}^M_{l'}\}.
	\end{align*}
	
	\item For every non-singleton $ O'\subset O$, (inductively) define the ceiling quotas to be
	\begin{align*}
	&\overline{q}^m_{O'}= \min\{\overline{q}^m_{O'\backslash \{l\}}+\overline{q}^m_{l} \text{ for every }l\in O',N^m-\sum_{l'\in O\backslash O'}\underline{q}^m_{l'}\},\\
	&\overline{q}^M_{O'}= \min\{\overline{q}^M_{O'\backslash \{l\}}+\overline{q}^M_{l} \text{ for every }l\in O',N^M-\sum_{l'\in O\backslash O'}\underline{q}^M_{l'}\}.
	\end{align*}
\end{itemize}

Similarly as before, $ \lcs(\C) $ is characterized by the inequalities in (\ref{eq:controlled}).

\begin{proposition}\label{prop:controlled}
	$ \lcs(\C)=\{ x\in \Re^{NL}_+:x \text{ satifsies the inequalities in (\ref{eq:controlled})}\} $.
\end{proposition}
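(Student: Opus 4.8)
The plan is to follow the same three-step structure as the proof of Proposition~\ref{prop:hospital}, running the two student types in parallel with the supply constraint at each school as the only point of interaction. Write $A$ for the set cut out by the inequalities in~(\ref{eq:controlled}); since these have non-negative coefficients, $A=\lcs(A)$. The first (easy) step is $\C\subseteq A$, hence $\lcs(\C)\subseteq A$: I would check that each of the quantities appearing in the $\min$'s defining the redefined ceilings is a genuine upper bound for the corresponding sum over any $x\in\C$. For a single school, $\sum_{i\in I^m}x_{i,l}\le q_l-\sum_{i\in I^M}x_{i,l}\le q_l-\underline{q}^M_l$ and $\sum_{i\in I^m}x_{i,l}\le N^m-\sum_{l'\ne l}\sum_{i\in I^m}x_{i,l'}\le N^m-\sum_{l'\ne l}\underline{q}^m_{l'}$; the inductive bounds for non-singleton $O'$ and the majority analogues are identical, and everything uses the standing assumptions $\underline{q}^m_l+\underline{q}^M_l\le q_l$ and that there are enough students of each type to meet the floors.

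The second step is the analogue of the Claim in Proposition~\ref{prop:hospital}: for each type $t\in\{m,M\}$, each non-singleton $O'\subseteq O$, and each $l\in O'$, one has $\overline{q}^t_{O'}\ge\underline{q}^t_l+\overline{q}^t_{O'\setminus\{l\}}$. I would prove this by induction on $|O'|$ exactly as before: either $\overline{q}^t_{O'}$ equals the additive branch of its $\min$ and the inequality follows immediately from $\overline{q}^t_l\ge\underline{q}^t_l$ (which is itself immediate from the three clauses of the single-school redefinition), or $\overline{q}^t_{O'}=N^t-\sum_{l'\notin O'}\underline{q}^t_{l'}$, in which case subtracting $\underline{q}^t$'s and using $N^t-\sum_{l'\notin O'\cup\{l\}}\underline{q}^t_{l'}\ge\overline{q}^t_{O'\setminus\{l\}}$ gives the bound.

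The third step, the reverse inclusion $A\subseteq\lcs(\C)$, carries the real work. Set $A'=\{x\in A:\nexists\,x'\in A\text{ with }x<x'\}$, so $A=\lcs(A')$, and show $A'\subseteq\C$. Given $x\in A'\setminus\C$: since $x\in A$ it satisfies all supply and (redefined, hence original) ceiling constraints, so it must violate a floor, say $\sum_{i\in I^m}x_{i,l_0}<\underline{q}^m_{l_0}$ for some $l_0$ (the majority case is symmetric). I then raise $x_{i,l_0}$ for some minority $i$. Two rooms are needed: (i) some minority $i$ has $\sum_l x_{i,l}<1$ --- otherwise $\sum_{i\in I^m,l\ne l_0}x_{i,l}=N^m-\sum_{i\in I^m}x_{i,l_0}>N^m-\underline{q}^m_{l_0}\ge\overline{q}^m_{O\setminus\{l_0\}}$, contradicting $x\in A$; and (ii) $\sum_{i\in I}x_{i,l_0}<q_{l_0}$ --- here $\sum_{i\in I^M}x_{i,l_0}\le\overline{q}^M_{l_0}\le q_{l_0}-\underline{q}^m_{l_0}$ by the redefinition, so $\sum_{i\in I}x_{i,l_0}<\underline{q}^m_{l_0}+(q_{l_0}-\underline{q}^m_{l_0})=q_{l_0}$. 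Let $x'$ agree with $x$ except that $x'_{i,l_0}=x_{i,l_0}+\epsilon$ with $0<\epsilon<\min\{1-\sum_l x_{i,l},\,q_{l_0}-\sum_{i'\in I}x_{i',l_0},\,\underline{q}^m_{l_0}-\sum_{i'\in I^m}x_{i',l_0}\}$, so $x<x'$. Since only one minority cell changes, $x'$ can only violate a minority ceiling for some $O'\ni l_0$ with $|O'|\ge2$ (the singleton $\{l_0\}$ still holds as $\sum_{i'\in I^m}x'_{i',l_0}<\underline{q}^m_{l_0}\le\overline{q}^m_{l_0}$); taking such $O'$ of smallest size and invoking the Claim pushes the violation down to $O'\setminus\{l_0\}$, contradicting $x\in A$. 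Hence $x'\in A$ with $x<x'$, contradicting $x\in A'$; so $A'\subseteq\C$ and $A=\lcs(A')\subseteq\lcs(\C)$. The main obstacle --- and the one spot where the argument genuinely departs from Proposition~\ref{prop:hospital} --- is establishing room (ii): it is precisely the clause $q_l-\underline{q}^{M}_l$ (resp. $q_l-\underline{q}^m_l$) built into the redefined single-school ceilings that lets the two types' floors always be accommodated within a school's supply.
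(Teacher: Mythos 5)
Your proof is correct and is precisely the adaptation of the argument for Proposition~\ref{prop:hospital} that the paper has in mind when it omits the proof of Proposition~\ref{prop:controlled}: the same three steps ($\C\subseteq A$, the inductive Claim $\overline{q}^t_{O'}\ge\underline{q}^t_l+\overline{q}^t_{O'\setminus\{l\}}$, and the maximal-element argument for $A'\subseteq\C$), with the two types handled in parallel. You also correctly isolate the one genuinely new ingredient, namely that the cross-type terms $q_l-\underline{q}^M_l$ and $q_l-\underline{q}^m_l$ in the redefined single-school ceilings are what guarantee room under the supply constraint when repairing a violated floor.
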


The proof of Proposition~\ref{prop:controlled} is similar to that of Proposition~\ref{prop:hospital} and thus omitted. 

Note that besides unit demand constraints, the other inequalities in (\ref{eq:controlled}) do not distinguish among the identities of the students of each type. So every pseudo-market equilibrium is envy-free among the students of each type. That is, minority students will not envy other minority students, and majority students will not envy other majority students. 

\section{A market for roommates, and other problems}\label{sec:roomates} Our model accommodates very general assignment problems with constraints, including models with non-implementable constraints. We discuss coalition formation problem as an illustration of the power of our approach. 

First we consider the roommate model, arguably the best-known example in matching theory where game-theoretic stability solutions fail to exist. As a corollary of our main theorem, we obtain the existence of efficient pseudo-market equilibrium assignments. Equilibria embody a form of stability: optimizing agents do not want to change their behavior in the market. In this sense, our results offer a possible way out of the non-existence of stable matchings.

Consider a set of agents $I$ that constitute the potential roommates or partners. Let $O$ be a copy of $I$, so $\nagents=\nobjects$, and think of $i\in O$ as the alter ego of agent $i\in I$. If $x$ is an assignment, interpret $x_{i,j}=1$ as agents $i$ and $j$ forming a partnership, or becoming roommates. When $i$ is alone without a roommate, we have $x_{i,i}=1$.  In consequence, we restrict attention to assignments $x$ where $x_{i,j} = x_{j,i}$, meaning that the matrix $(x_{i,j})_{i\in I,j\in I}$ is \df{symmetric}. 

We say that an assignment $x$ is a \df{matching} if (1) $x_{i,j}\in \{0,1\}$ for all $(i,j)\in I\times I$, (2) $x$ is symmetric, (3) $ x $ satisfies the unit demand constraints with equality ($\sum_j x_{i,j}= 1$) and (4) $ x $ satisfies the allocation constraints with equality ($\sum_i x_{i,j}= 1$). Define $\C$ to be the convex hull of all matchings.

Note that $\C$ is not equal to the set of symmetric assignments that satisfy the unit demand and allocation constraints, dropping the integrality constraints $x_{i,j}\in\{0,1\}$. \cite{katz1970extreme} proves that the latter set is the convex hull of all matrices of the form $(1/2)(P + P')$ ($P'$ is the transpose of $P$) where $P$ is a permutation matrix with no even cycles greater than 2. A celebrated result of \cite{edmonds1965maximum} provides a characterization of $ \C $, which we use in the proof of Proposition~\ref{prop:Wroomates} below.
 
To operationalize our approach, we need to  work out the set of inequalities $\W$ for the roommates problem. To this end, let $ \mathcal{F} $ be the set of subsets $F\subseteq I\times I$ such that (1) for all $i$, $ (i,i)\notin F $ and (2) for every $ (i,j)\in F $, $ (j,i)\notin F $. For each $ F \in \mathcal{F}$, let $G_F$ be the graph with vertex set $I$ and edge set $\{\{i,j\}:(i,j)\in F \text{ or } (j,i)\in F\}$. Denote the cardinality of the maximum independent edge set of $G_F$ by $k_F$.  For every $ i\in I $, let $ \mathcal{J}_i $ be the set of subsets $ J\subset (\{i\}\times I) \cup (I\times \{i\}) $ such that $ (i,i)\in J $ and for every $ j\neq i $, either $ (i,j)\in J $ or $ (j,i)\in J $ but not both. Then $ \lcs(\C) $ is characterized by the following inequalities.

 \begin{proposition}\label{prop:Wroomates}
 	\[ \lcs(\C) =\bigg(\bigcap_{\os\neq F\in \mathcal{F}} \{x\in\Re^{I\times I}_+ : \sum_{(i,j)\in F}x_{i,j} \leq k_F\}\bigg)\bigcap \bigg(\bigcap_{i\in I, J\in \mathcal{J}_i}\{x\in\Re^{I\times I}_+ : \sum_{(i',j')\in J}x_{i',j'} \leq 1\}\bigg).\]
 \end{proposition}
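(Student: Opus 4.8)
The plan is to prove the two inclusions separately, using Edmonds' matching polytope theorem as the workhorse for the ``$\supseteq$'' direction. Write $P$ for the polytope on the right-hand side. The inclusion $\lcs(\C)\subseteq P$ is the easy direction: every matching $x$ (a $0/1$ symmetric matrix with all row and column sums equal to $1$) clearly satisfies each inequality $\sum_{(i',j')\in J}x_{i',j'}\le 1$ for $J\in\mathcal{J}_i$, since $J$ picks out, for each $j$, exactly one of the two symmetric cells $(i,j),(j,i)$ together with the diagonal cell $(i,i)$, and these cells index a set of entries whose sum is exactly $\sum_j x_{i,j}=1$; and it satisfies $\sum_{(i,j)\in F}x_{i,j}\le k_F$ because the support of $x$ restricted to the off-diagonal pairs in $F$ corresponds to a set of edges of $G_F$ that is a partial matching in $G_F$, hence of size at most $k_F$ (the maximum independent edge set). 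Since $P$ is of the form $\Re^{I\times I}_+\cap\{\text{nonneg.\ linear inequalities}\}$ it equals its own lower contour set, so $\C\subseteq P$ gives $\lcs(\C)\subseteq P$.

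For the reverse inclusion $P\subseteq\lcs(\C)$, I would pass through Edmonds' theorem. Restricting to the off-diagonal, symmetric part of an assignment, a symmetric nonnegative matrix $x$ with $x_{i,j}=x_{j,i}$ corresponds to a nonnegative edge-weighting $w$ of the complete graph on $I$ via $w_{\{i,j\}}=x_{i,j}$. The $\mathcal{J}_i$-inequalities say $\sum_{j\ne i} w_{\{i,j\}} + x_{i,i}\le 1$, i.e.\ the total edge weight at vertex $i$ is at most $1-x_{i,i}\le 1$ (a degree/fractional-matching constraint), while the $\mathcal{F}$-inequalities with $F$ ranging over orientations of edge subsets say: for every set of edges forming a graph $G_F$, the total weight on those edges is at most the size $k_F$ of a maximum matching of $G_F$. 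Edmonds' matching polytope theorem characterizes the convex hull of matchings of the complete graph on $I$ by exactly the constraints $w\ge 0$, degree $\le 1$ at each vertex, and $\sum_{e\subseteq S}w_e\le \lfloor|S|/2\rfloor$ for each odd $S\subseteq I$; and the blossom inequality for odd $S$ is precisely the $\mathcal{F}$-inequality for $F$ equal to (an orientation of) the edge set of the complete graph on $S$, whose maximum matching has size $\lfloor|S|/2\rfloor$. So any $x\in P$ yields, after zeroing out the diagonal, a point $w$ in Edmonds' matching polytope; hence $w$ is a convex combination of matchings $M_1,\dots,M_r$ of $I$, each a $0/1$ symmetric matrix with row sums $\le 1$. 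I then ``fill the diagonal'' of each $M_t$: put a $1$ in cell $(i,i)$ for every $i$ unmatched by $M_t$, obtaining a matching $\tilde M_t$ in the sense of the paper (all row and column sums exactly $1$, symmetric, $0/1$), so $\tilde M_t\in\C$. The convex combination $\sum_t\lambda_t\tilde M_t\in\C$ dominates $x$ coordinatewise on the off-diagonal (where it equals $x$) and has diagonal entries at least $x_{i,i}$ is \emph{not} automatic — here one uses that the diagonal entries of $\sum_t\lambda_t\tilde M_t$ equal $1-\sum_{j\ne i}w_{\{i,j\}}\ge x_{i,i}$ by the $\mathcal{J}_i$-inequality — hence $x\le\sum_t\lambda_t\tilde M_t$ and $x\in\lcs(\C)$.

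The main obstacle, and the step to get right, is matching up the paper's combinatorial data $(\mathcal{F},k_F)$ and $(\mathcal{J}_i)$ with the standard vertex/odd-set inequalities of Edmonds' theorem. Concretely: (i) the $\mathcal{F}$-inequalities are indexed not by odd vertex subsets but by \emph{arbitrary} edge subsets $F$ with the parameter $k_F$ a maximum matching size, so one must argue these are all implied by (indeed, a redundant superset of) the blossom inequalities plus degree constraints — this is essentially the fact that the matching polytope of any graph $G$ is cut out by degree and odd-set inequalities, and $\sum_{e\in E(G_F)}w_e\le k_F$ is the ``rank'' inequality which follows by LP duality / the odd-set cover characterization of $\nu(G_F)$; (ii) one must handle the diagonal bookkeeping carefully, checking that the $\mathcal{J}_i$ constraints are exactly what is needed both to land in Edmonds' polytope (degree $\le 1$) and to recover domination on the diagonal after refilling. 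I would isolate (i) as the one genuinely nontrivial lemma — that $\{w\ge0:\ \deg_w(i)\le1\ \forall i,\ \sum_{e\in E(H)}w_e\le\nu(H)\ \forall H\subseteq K_I\}$ equals the matching polytope — and cite Edmonds \citep{edmonds1965maximum} for it, since $\nu(H)=\lfloor|S|/2\rfloor$ when $H=K_S$ and the general rank inequalities are consequences of the blossom ones.
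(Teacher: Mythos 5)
Your overall strategy is the same as the paper's: for $\lcs(\C)\subseteq P$ you verify the inequalities directly on deterministic matchings and use that $P$ equals its own lower contour set; for the converse you reduce to Edmonds' description of the matching polytope and then repair the diagonal. The easy direction is fine, and your diagonal bookkeeping in step (ii) — that the $\mathcal{J}_i$-inequality is exactly what guarantees $x_{i,i}\le 1-\sum_{j\neq i}w_{\{i,j\}}$ after refilling — matches what the paper does.

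There is, however, one genuine gap in the hard direction: a point $x\in P$ need not be symmetric, so the passage to an edge-weighting $w_{\{i,j\}}=x_{i,j}$ is not even well defined; you tacitly assume $x_{i,j}=x_{j,i}$ from the outset. Since each $F\in\mathcal{F}$ and each $J\in\mathcal{J}_i$ contains at most one of the two ordered pairs $(i,j),(j,i)$, the constraints defining $P$ do not force symmetry (for example $x_{1,2}=1/2$, $x_{2,1}=0$, all other entries $0$, lies in $P$). The paper's proof opens with precisely the step you are missing: replace $x$ by $x'_{i,j}=\max\{x_{i,j},x_{j,i}\}$ and check that $x'$ still lies in $P$, which works because for any $F$ the ``worst orientation'' $F'$ selecting the larger entry of each pair is again an element of $\mathcal{F}$ with $k_{F'}=k_F$ (and similarly for each $\mathcal{J}_i$). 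After that symmetrization your argument goes through essentially verbatim. A smaller remark: the ``genuinely nontrivial lemma (i)'' you isolate — that the general rank inequalities $\sum_{e\in E(H)}w_e\le\nu(H)$ follow from the degree and blossom inequalities — is not actually needed in either direction: for $\lcs(\C)\subseteq P$ you verify the rank inequalities directly on integral matchings (as you already do), and for $P\subseteq\lcs(\C)$ you only ever invoke the odd-complete-subgraph instances, so the remaining $\mathcal{F}$-inequalities only shrink $P$ and thus only help.
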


 \begin{proof}
 	Let $ D $ denote the set on the right-hand side of the proposition. We first prove that $ D\subset \lcs(\C) $.  	
 	For every $x\in D$, consider the matrix $ x' $ obtained by letting $x'_{i,j} = \max\{x_{i,j},x_{j,i}\}$ for all $ (i,j)\in I\times I $. Then $x'$ is symmetric and $x\leq x'$. We prove that $ x'\in D $. For any $ \os\neq F\in \mathcal{F} $, suppose to the contrary that $ \sum_{(i,j)\in F}x'_{i,j}>k_F $. Then we define $ F'\subset I\times I $ such that for every $ (i,j)\in F $, if $ x_{i,j}\ge x_{j,i} $, let $ (i,j)\in F' $, and otherwise let $ (j,i)\in F' $. So $ G_{F'} $ and $ G_F $ have the same (undirected) edge set, and thus $ k_F=k_{F'} $. However, $ \sum_{(i,j)\in F'}x_{i,j}=\sum_{(i,j)\in F}x'_{i,j}>k_{F'} $, which contradicts that $ x\in D $. Similarly we can prove that for every $ i $ and every $ J\in \mathcal{J}_i $, $ \sum_{(i',j')\in J}x'_{i',j'} \leq 1 $. Thus, $ x'\in D $.
 	
 	Now define another matrix $ y $ by (1) for every $ (i,j)\in I\times I $ with $ i\neq j $, set $ y_{i,j}=x'_{i,j} $, and (2) for every $ i\in I $, $ y_{i,i}=1-\sum_{j\neq i}x'_{i,j} $. It is clear that $ y $ is symmetric and that $ x'\le y $ (as $\{i\}\times I\in \mathcal{J}_i$). For any $F\in \mathcal{F},$ $(i,i)\notin F$; hence $\sum_{(i,j)\in F}y_{i,j} = \sum_{(i,j)\in F}x'_{i,j} \leq k_F $. Since $ x' $ is symmetric and $ x'\in D $, for every $ i $ and every $ J\in \mathcal{J}_i $, $ \sum_{(i',j')\in J}y_{i',j'} = 1 $. So $ y\in D $ and it is a bistochastic matrix.
 	
 	Now we prove that $ y\in \C $. \cite{edmonds1965maximum} proves that a symmetric bistochastic matrix $ z $ belongs to $ \C $ if and only if for every $ r\in \mathbb{N} $ and every $ I'\subset I $ with $ |I'|=2r+1 $, $ \sum_{(i,j)\in F}z_{i,j} \leq r $, where $ F\subset I'\times I' $ is such that there does not exist $ (i,i)\in F $ and for every $ (i,j)\in I'\times I'  $ with $ i\neq j $, either $ (i,j)\in F $ or $ (j,i)\in F $ but not both. For any such $F$, $k_F=r$ because $I'$ is odd and we can form $r$ pairs among the $2r$ elements of $I'$ that can be paired. Since $F\in\mathcal{F}$, then,  $ y $ satisfies Edmonds' inequalities and thus $ y\in \C $. Since $ x\le x'\le y $, $ x\in \lcs(\C) $. This means that $ D\subset \lcs(\C) $.
 	
  	To prove $  \lcs(\C)\subset D$, consider any $x\in\C$. Then $x$ is the convex combination of deterministic matchings $x^k$. For each $ \os\neq F\in \mathcal{F} $ and each $i$, there is at most one $j$ with $x^k_{i,j}=1$. By the definition of independent edge set, then  $\sum_{(i,j)\in F}x^k_{i,j}\leq k_F$. So $\sum_{(i,j)\in F}x_{i,j}\leq k_F$. It is clear that $ x $ satisfies the other inequalities related to every $ \mathcal{J}_i $. So $ x\in D $. Then it means that $ \lcs(\C)\subset D $.
 \end{proof}

A pseudo-market equilibrium implies a random matching $x^*$ (a probability distribution over matchings) that is Pareto efficient. 
Of course, $x^*$ needs not be stable in the game theoretic sense, but it corresponds to individual agents' optimizing behavior, as long as these agents take prices as given. Price taking behavior is a plausible assumption in a large centrally-run market for partnerships, like for example a market for roommates in college dormitories. A pseudo-market could be set up by the college, and equilibrium prices could be enforced.

We finalize with a numerical example where stable matchings fail to exist, but where our results deliver an efficient equilibrium (a market stability of sorts). 

\begin{example}[A market for roommates]
  Let $I=\{1,2,3\}$ and $L=3$. For agent $i$, consuming object $l$ is the same as having agent $l$ as her roomate. Suppose that the agents' utilities are 
  \begin{center}
  	\begin{tabular}{c|ccc}
		    & $ 1 $ & $ 2 $ & $ 3 $ \\\hline
		$1 $& $ 0 $ & $ 1 $ & $ 2 $\\
		$2 $& $ 2 $ & $ 0 $ & $ 1 $\\
		$3 $& $ 1 $ & $ 2 $ & $ 0 $
	\end{tabular}
  \end{center}

With these preferences, there are no stable matchings. However, there is a HZ equilibrium. In the equilibrium, the price of the following constraint is two:
\[
x_{2,1}+x_{1,3}+x_{3,2}\le 1,\\
\]the price of the following constraint is one:
\[
x_{1,2}+x_{2,3}+x_{3,1}\le 1,
\]and the price of every other constraint is zero. Then, agent 1's personalized price vector is $(0,1,2)$, 2's personalized price vector is $(2,0,1)$, and 3's personalized price vector is $(1,2,0)$. All of them choose the consumption $(1/3,1/3,1/3)$, and this is the symmetric equilibrium assignment. 
\end{example}

\subsection{Coalition formation}\label{sec:coalitions} The application to roommates can be adapted to a general coalition-formation problem.  Given a set of agents $I$, let $ O $ be the set of all \df{coalitions} from $I$; that is, $ O=2^I\backslash \{\emptyset\} $. A deterministic assignment is a partition of agents into coalitions, and can be represented by a matrix $ x\in \{0,1\}^{NL} $ such that $ x_{i,l}=1 $ if and only if $ i $ joints the coalition $ l\in O $. Unit demand constraints will imply that agents are members of a single coalition. We may then let  $ \C $ be the convex hull of the set of deterministic assignments. Then there exists a pseudo-market equilibrium, and the equilibrium assignment is a probability distribution over coalitions.

\subsection{Combinatorial allocation}\label{sec:combinatorialalloc} Our methods can be used to solve general combinatorial assignment and matching problems \citep{budish2011combinatorial,budish2012matching}. Here we discuss allocation problems in which agents demand a bundle of objects, as in course allocation. In contrast with \cite{budish2011combinatorial}, our emphasis is on random allocations, so there are no problems arising from the lack of convexity of deterministic allocations.

There are obvious supply constraints, stemming from course capacities, but course allocation may exhibit additional, and more problematic, constraints. For example, if a school regards two courses $ l $ and $ l' $ as complements, students must take both of them or neither. Then we have the constraint $ x_{i,l}= x_{i,l'} $. If the school regards two courses $ l $ and $ l' $ as substitutes, so that students have to take at most one of them, then we have the constraint $ x_{i,l}\cdot x_{i,l'}=0 $.

The set of feasible (random) assignments in course allocation problems cannot be easily characterized. In particular, an assignment that seems ex-ante feasible may not be actually implementable. The bihierarchy condition is not met. For example, suppose there are three agents $ 1,2,3 $ and three objects $ a,b,c $. Each object has one copy. The set of bundles is $ O=\{ab,ac,bc\} $. The following random assignment looks ex-ante feasible because it satisfies unit demand constraints of agents and allocation constraints of objects. But it is not feasible because bundles are not independent objects. When a bundle is assigned, the other two bundles become unavailable.
\begin{center}
	\begin{tabular}{c|ccc}
		$i$	  & $ ab $ & $ ac $ & $ bc $  \\ \hline
		$ 1 $   & 1/2                & 0  & 0  \\
		$ 2 $   &   0     & 1/2      & 0 \\
		$ 3 $   & 0 & 0 & 1/2 \\
	\end{tabular}
\end{center}

We can still apply our results by starting from a collection of deterministic assigments. Let $ A $ be the basic set of ``items,'' each of which has a number of copies. Let $ O\subset 2^A $ be the set of bundles under consideration. A deterministic assignment is represented by a matrix $ x\in \{0,1\}^{NL} $ such that $ x_{i,l}=1 $ if and only if $ i $ obtains the bundle $ l\in O $. Let $ \C $ be the convex hull of the set of deterministic assignments. Starting from $\C$, one needs to pre-process $\lcs(\C)$ and our theorem will deliver a pseudo-market equilibrium with the desirable normative properties. 

\section{A market for ``bads''}\label{sec:bads} So far we have assumed that objects are ``goods,'' in the sense that agents' utility functions are strictly increasing. In some applications, however, objects represent duties, or tasks, that agents dislike. Another application is to waste disposal, or pollution. A certain minimum amount of such ``bads'' have to be allocated; the question is to whom, and in which quantities? 

The presence of bads gives rise to floor constraints, but we cannot use our previous methods directly as all agents will choose zero consumption from their consumption space.  We can, however, borrow an idea from the standard model of labor markets: labor supply is often described as consumption of leisure. We endow every agent with a copy of every ``bad,'' and allow them to buy the options of not consuming a bad. Such options become ``goods,'' and our previous methods apply.

Specifically, for every $ l\in O $, $ q_l $ denotes the minimum number of copies of $ l $ that have to be assigned. Every agent can be assigned at most one object (unit demand). If $ \sum_{l\in O}q_l = N  $, then every agent must obtain an object so that the problem becomes the one studied by \cite{HZ1979}.  Assume then that $ \sum_{l\in O}q_l < N  $. For every $ x\in \Delta_{-} $ and every $ i\in I $, $ u_i(x) $ is strictly decreasing in $ x $: if $ x'>x $, then $ u_i(x')<u_i(x) $.

We consider a dual problem $ (I,\tilde{O},\tilde{\Delta}_{-},(\tilde{u}_i)_{i\in I}, (q_{\tilde{l}})_{\tilde{l}\in \tilde{O}}) $ in which
\begin{itemize}
	\item the set of objects is $ \tilde{O}=\{\tilde{l}\}_{l\in O} $ where every $ \tilde{l} $ is an artificial object dual to $ l\in O $, and its supply is $ q_{\tilde{l}}=N-q_l $. When an agent $ i $ consumes an amount $ a $ of $ \tilde{l} $, it is understood that $ i $ consumes $ 1-a $ of $ l $. Because at least $ q_l $ of $ l $ need to be assigned, the number of copies of $ \tilde{l} $ is $ N-q_l $.
	
	\item The consumption space for every agent is $ \tilde{\Delta}_{-}=\{x\in \mathbf{R}^L_+:x_{\tilde{l}}\in[0,1] \text{ for every }l\in \tilde{O}, \sum_{\tilde{l}\in \tilde{O}}x_{\tilde{l}}\in [L-1,L]\} $. So the amount of objects in $ O $ that $ i $ will consume is $ L- \sum_{\tilde{l}\in \tilde{O}}x_{\tilde{l}}\in [0,1]$.
	
	\item Every agent $ i $ has the utility function $ \tilde{u}_i $ such that for every $ x\in \tilde{\Delta}_{-} $, $ \tilde{u}_i(x)=u_i(\mathbf{1}-x) $. When $ u_i $ is (semi-strictly) quasi-concave and strictly decreasing, $ \tilde{u}_i $ is (semi-strictly) quasi-concave and strictly increasing.
\end{itemize}

In the dual problem, agents can consume multiple artificial objects. We impose floor constraints on individual consumption, and can derive the inequalities to characterize $ \lcs(\C)$ as in Section \ref{sec:floorconstraints}. Then Theorem~\ref{thm:existence:noend} applies to give a desirable outcome. We omit the details.

\section{Endowment and $ \al $-slack equilibrium}\label{sec:endowment}

We turn to a version of our model in which objects are initially owned by agents as endowments. Endowments are important in market design when the purpose is to re-assign resources. Often, one wants to improve on an existing allocation. It is then important to be able to respect agents' property rights.\footnote{Re-assignment problems give rise to political economy issues. The most basic issue is to ensure that agents are not hurt in the re-allocation; that their property rights are respected.} Moreover, there are models (such as time banks, briefly discussed in~\ref{sec:timebanks}), in which the agents themselves provide the goods that are to be allocated. 

\subsection{The economy and equilibrium}
 Now an \df{economy} is a tuple $\Gamma=(I,O,(Z_i,u_i,\w_i)_{i\in I})$, where 
\begin{itemize}
	\item $I$ is a finite set of \df{agents};
	\item $O$ is a finite set of \df{objects};
	\item $Z_i\subseteq \Re^\nobjects_+$ is $i$'s \df{consumption space};
	\item $u_i:Z_i\rightarrow \Re$ is $i$'s \df{utility function};
	\item $\w_i\in Z_i$ is $i$'s \df{endowment}.
\end{itemize}

The \df{aggregate endowment} is denoted by $\bar \w = \sum_{i\in I} \w_i$. For every $ l\in O $, $ \bar \w_l $ is the amount of $ l $ in the economy. 

A \df{constrained allocation problem with endowments} is a pair $(\Gamma,\C)$ in which $\Gamma$ is an economy and $\C$ is a set of feasible assignments such that
\begin{enumerate}
	\item $\C$ is a polytope; 
	\item $\w=(\w_i)_{i\in I}\in \C$; that is, $ \w $ is feasible.
\end{enumerate}

A feasible assignment $x\in \C$ is \df{acceptable} to agent $i$ if $u_i(x_i)\ge u_i(\w_i) $; $x$ is \df{individually rational} (IR) if it is acceptable to all agents. We also define a notion of approximate individual rationality: for any $\ep>0 $, $x$ is \df{$\ep$-individually rational} ($ \ep$-IR) if $ u_i(x_i)\ge u_i(\w_i)-\ep$ for all $ i\in I $. 

Let $ \X_i $ and $ \W^* $ be defined as before. We say two agents $i$ and $j$ are of \df{equal type} if $ \w_i=\w_j $, $\X_i=\X_j$, and for all $(a,b)\in \W^*$, $ a_i=a_j $.

In a textbook exchange economy, Walrasian equilibrium assumes that agents' incomes equal the value of their endowments at equilibrium prices. However, when consumption space is bounded, agents may have satiated preferences. Then Walrasian equilibrium may not exist; see Example~\ref{ex:HZexample} in Section~\ref{sec:unitdemandWalrasian}.
 
Our method to solve the nonexistence problem is to introduce an arbitrarily small exogenous budget. Given any price vector $ p $, let $ p_i $ be the personalized price vector faced by $i$, as defined in Section \ref{sec:eqmdefn}. Then for any $ \al\in [0,1] $, we let $ i $'s budget be
\[
\al+(1-\al)p_i\cdot \w_i.
\]
So $i$'s income is a convex combination of the exogenous budget of $1$ used by HZ (and in our model of Section~\ref{sec:model}), and the market value of $ i $'s endowment. 

For any $ \al\in [0,1] $, we say $ (x^*,p^*) $ is an \textbf{$ \al $-slack equilibrium} if
\begin{enumerate}
	\item $ x^*_i\in \arg\max_{x_i\in \mathcal{X}_i}\{u_i(x_i):p^*_i\cdot x_i\leq \al+(1-\al)p^*_i\cdot \w_i\}$;
	\item  $x^*\in \C$;
	\item For any $ c=(a,b)\in \W^* $, $ a\cdot x^*<b$ implies that $ p^*_c=0 $.
\end{enumerate}

\begin{remark} Textbook Walrasian equilibria are $0$-slack equilibria. The pseudo-market equilibria we have already discussed in detail are $1$-slack equilibria. 
\end{remark}

\subsection{Results}

We assume that for each $ c \in \W^* $, $ \sum_{(i,l)\in supp(c)}\w_{i,l}>0 $. A sufficient condition for this assumption is that every agent owns a positive amount of every object. Our next result is a generalization of Theorem \ref{thm:existence:noend}.

\begin{theorem}\label{thm:existence}
	Suppose that agents' utility functions are continuous, quasi-concave and strictly increasing. For any $\al\in (0,1]$:
	\begin{itemize}
		\item There exists an $ \al $-slack equilibrium $(x^*,p^*)$, and $ x^* $ is weakly $\C$-constrained Pareto efficient.
		
		\item If agents' utility functions are semi-strictly quasi-concave, there exists an $ \al $-slack equilibrium assignment $x^*$ that is $\C$-constrained Pareto efficient.
		
		\item Every $ \al $-slack equilibrium assignment is equal-type envy-free.
	\end{itemize}
\end{theorem}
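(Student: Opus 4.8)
The plan is to prove Theorem~\ref{thm:existence} by a fixed-point argument on prices, much in the spirit of \cite{HZ1979}, but adapted to the abstract constraint polytope $\C$ and to the $\al$-slack budget rule. First I would fix $\al\in(0,1]$ and set up a compact, convex price domain. Since only the constraints in $\W^*$ are priced, and since each personalized price is $p_{i,l}=\sum_{(a,b)\in\W^*}a_{i,l}p_{(a,b)}$, I would work with $p=(p_c)_{c\in\W^*}$ in a simplex-like set (say $\sum_c p_c\le M$ for a large $M$, together with $p_c\ge 0$); the role of $\al>0$ is precisely that every agent's budget $\al+(1-\al)p_i\cdot\w_i\ge\al>0$ stays bounded away from zero, so the budget sets $B_i(p)=\{x_i\in\X_i:p_i\cdot x_i\le\al+(1-\al)p_i\cdot\w_i\}$ are nonempty, compact, and vary continuously (lower and upper hemicontinuously) in $p$ — this is where the known HZ nonexistence at $\al=0$ is avoided.

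Next I would define the demand correspondence. Because $u_i$ is continuous and quasi-concave and $\X_i$ is compact convex (it is $Z_i$ intersected with finitely many individual halfspaces, and boundedness comes from $\C$ being a polytope together with monotonicity), the set $d_i(p)=\arg\max\{u_i(x_i):x_i\in B_i(p)\}$ is nonempty and, by the maximum theorem, upper hemicontinuous with closed values; to get convex values and to build in the cheapest-bundle property I would instead use the refined demand $d_i^*(p)=\arg\min\{p_i\cdot x_i : x_i\in d_i(p)\}$, which is still u.h.c.\ and now has convex values (it is the set of expenditure-minimizers over a convex set of utility-maximizers). Aggregate these into $d^*(p)=\prod_i d_i^*(p)$. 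Then I would define a price-adjustment correspondence that raises the price of constraints that are violated: given a candidate assignment $x\in d^*(p)$, let the new price put weight on those $c=(a,b)\in\W^*$ maximizing the excess $a\cdot x - b$ (a Gale–Nikaido–Debreu style rule), and apply Kakutani's fixed-point theorem to the product correspondence $(p,x)\mapsto (\text{adjusted prices},\, d^*(p))$ on the compact convex domain.

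At a fixed point $(x^*,p^*)$ the main work is to show $x^*\in\C$ and complementary slackness. Using Lemma~\ref{lem:nonnegativelinearineq}, feasibility $x^*\in\C$ is equivalent to $x^*\in\lcs(\C)$ together with $x^*$ lying on the ``upper-right'' boundary in the directions pinned down by utility maximization; membership in $\lcs(\C)=\bigcap_{(a,b)\in\W}\{a\cdot x\le b\}$ must be checked coordinate by coordinate: for the forbidden-object inequality $(a^0,0)$ and the individual constraints in $\W^i$ this is automatic since $x^*_i\in\X_i$; for $c\in\W^*$ I would argue by the usual Walras-law bookkeeping — summing the binding budget constraints (each agent spends exactly her income, by monotonicity of $u_i$) yields $\sum_c p^*_c(a\cdot x^* - b)\le \text{(endowment terms that cancel)}$, so the price-adjustment rule at the fixed point forces $a\cdot x^*\le b$ for every $c\in\W^*$ and $p^*_c=0$ whenever the inequality is slack. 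That $x^*$ is in the polytope $\C$ and not merely in $\lcs(\C)$ then follows because strict monotonicity of utilities pushes each $x^*_i$ to the Pareto frontier of its budget set, and the assumption $\sum_{(i,l)\in\supp(c)}\w_{i,l}>0$ guarantees the relevant constraints are the ``active'' ones rather than trivially slack; I expect this verification — getting from the fixed-point conditions to $x^*\in\C$ with the right complementary slackness — to be the main obstacle, since it is exactly where the abstract treatment of $\C$ (as opposed to explicit supply constraints) has to do real work.

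Finally, the efficiency and fairness conclusions follow in the standard way and I would dispatch them quickly. Weak $\C$-constrained Pareto efficiency: if some $y\in\C$ had $u_i(y_i)>u_i(x^*_i)$ for all $i$, then each $y_i$ would be strictly preferred and hence strictly outside $i$'s budget, so $p^*_i\cdot y_i > \al+(1-\al)p^*_i\cdot\w_i$ for all $i$; summing and using $y\in\C\subseteq\lcs(\C)$ together with complementary slackness contradicts the aggregate budget identity. Under semi-strict quasi-concavity one upgrades to full $\C$-constrained Pareto efficiency via the cheapest-bundle property built into $d_i^*$: a Pareto improvement $y$ with at least one strict gain can be perturbed (mixing a strictly-better agent's bundle toward $x^*_i$) to make everyone strictly better off while keeping feasibility in the convex set $\C$, reducing to the previous case. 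Equal-type envy-freeness: if $i$ and $j$ are of equal type then $\w_i=\w_j$, $\X_i=\X_j$, and $a_i=a_j$ for all $(a,b)\in\W^*$, hence $p^*_i=p^*_j$ and the two agents face identical budget sets; since $x^*_j$ is affordable for $i$ and $x^*_i$ is utility-maximal over that set, $u_i(x^*_i)\ge u_i(x^*_j)$, i.e.\ $i$ does not envy $j$. Theorem~\ref{thm:existence:noend} is the special case $\al=1$ with $\w$ absent (equivalently $\w_i$ arbitrary but weighted out).
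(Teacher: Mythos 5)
Your overall architecture --- a Kakutani fixed point on the vector of constraint prices, a cheapest-bundle refinement of demand, complementary slackness at the fixed point, and a lift from $\lcs(\C)$ to $\C$ via strict monotonicity --- is the same as the paper's, and your efficiency and equal-type envy-freeness arguments are essentially right. But two of the steps you pass over quickly are exactly where the real work lies, and as written they have genuine gaps. First, the claim that $d_i^*(p)=\arg\min\{p_i\cdot x_i:x_i\in d_i(p)\}$ is ``still u.h.c.'' is not justified and is false for merely quasi-concave $u_i$: expenditure minimization over an upper hemicontinuous correspondence does not preserve upper hemicontinuity. The paper's Lemmas~\ref{lem:bcbinding}, \ref{lem:bliss} and \ref{lem:uhc} establish u.h.c.\ of $\ul d_i$ precisely by exploiting semi-strict quasi-concavity (the budget binds whenever the agent is unsatiated, and at satiation $\ul d_i$ coincides with expenditure minimization over the bliss set); in the merely quasi-concave case one must run the fixed point with $d_i$ itself (which is already convex-valued, so convexity is not a reason to pass to $d_i^*$) and settle for weak efficiency. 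Second, you place prices in $\{p\geq 0:\sum_c p_c\le M\}$ but never specify $M$ or rule out equilibrium prices on the boundary, where a Gale--Nikaido--Debreu adjustment rule gives no information. The paper's cap $\bar p=2N/\min_{c}\sum_{(i,l)\in\supp(c)}a^c_{i,l}\w_{i,l}$ is chosen so that a price at the cap would make the aggregate endowment value exceed $2N$, forcing $\sum_i p^*_i\cdot(x^*_i-\w_i)<0$ and hence $p^*\cdot z^*<0$, contradicting the fixed-point inequality $p^*\cdot z^*\ge 0$. That is the actual role of the hypothesis $\sum_{(i,l)\in\supp(c)}\w_{i,l}>0$, which you instead invoke, incorrectly, in the feasibility step.

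The step you yourself flag as the main obstacle --- that $x^*\in\C$ and not merely $x^*\in\lcs(\C)$ --- is also left unproved, and ``strict monotonicity pushes $x^*_i$ to the Pareto frontier of its budget set'' is not the argument. The correct argument: if $x^*\le x'$ for some $x'\in\C$ with $x^*_{i^*,l^*}<x'_{i^*,l^*}$, then every $c\in\W^*$ with $(i^*,l^*)\in\supp(c)$ satisfies $a^c\cdot x^*<a^c\cdot x'\le b^c$, so $c$ is slack and $p^*_c=0$ by complementary slackness; hence the personalized price $p^*_{i^*,l^*}=\sum_c a^c_{i^*,l^*}p^*_c$ is zero, and $i^*$ can raise her consumption of $l^*$ for free while remaining in $\X_{i^*}$ (because $x'_{i^*}\in\X_{i^*}$), contradicting utility maximization under strict monotonicity. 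A smaller point: your ``perturbation'' route from weak to full Pareto efficiency (mixing a strictly-better agent's bundle toward $x^*_i$) does not deliver a strict improvement for the weakly-better-off agents; the correct use of the cheapest-bundle property is simply that $u_i(y_i)\ge u_i(x^*_i)$ already implies $p^*_i\cdot y_i\ge p^*_i\cdot x^*_i$, after which summing and invoking complementary slackness finishes the proof as in the weak case.
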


Theorem~\ref{thm:existence} ensures that we can choose $\al\in (0,1]$ arbitrarily, but since prices are endogenous it is not clear that the nominal magnitude of $\al$ has any actual meaning. Our next result shows that it does. In fact, by choosing $\al$ arbitrarily small we ensure that agents' budgets approximate the market values of their endowments. In consequence, the $\al$-slack equilibrium obtained is approximately individually rational.

\begin{theorem}\label{thm:epIR}
	Suppose that agents' utility functions are continuous, semi-strictly
	quasi-concave and strictly increasing. For any $\ep>0$, there is $\al\in(0,1]$ and an
	$\al$-slack equilibrium $(x^*,p^*)$ such that $x^*$ is $\C$-constrained Pareto efficient and 
	\[
	\max\{u_i(y):y\in \X_i\text{ and } p^*_i\cdot y\leq
	p^*_i\cdot \w_i \} - u_i(x^*_i)  <\ep.
	\] In particular, $x^*$ is $\ep$-individually rational.
\end{theorem}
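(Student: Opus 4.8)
The plan is to run a continuity/limiting argument in the slack parameter $\al$. For each $\al\in(0,1]$, Theorem~\ref{thm:existence} (together with its semi-strict refinement) gives an $\al$-slack equilibrium $(x^\al, p^\al)$ whose assignment is $\C$-constrained Pareto efficient and equal-type envy-free. First I would renormalize prices: since $\X_i$ is a compact polytope on which $u_i$ is strictly increasing, the equilibrium demand $x^\al_i$ lies on the ``upper-right'' boundary, so the budget constraint binds, and we may assume $p^\al$ lies in a compact set (e.g. normalize $\sum_{c\in\W^*} p^\al_c = 1$ after discarding the degenerate case where all relevant prices vanish, which is handled by the complementary-slackness condition and the assumption $\sum_{(i,l)\in\supp(c)}\w_{i,l}>0$). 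Then along a sequence $\al_n\downarrow 0$, pass to a convergent subsequence $p^{\al_n}\to \bar p$ and $x^{\al_n}\to \bar x$, with $\bar x\in\C$ by closedness.

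The core of the argument is to show that for $\al$ small, the ``IR gap''
\[
\max\{u_i(y): y\in\X_i,\ p^\al_i\cdot y \le p^\al_i\cdot\w_i\} - u_i(x^\al_i)
\]
is small for every $i$. The key observation is that $x^\al_i$ is, by definition of $\al$-slack equilibrium, the $u_i$-maximizer over the budget set $\{y\in\X_i : p^\al_i\cdot y \le \al + (1-\al)\,p^\al_i\cdot\w_i\}$, and this budget set contains the smaller set $\{y\in\X_i: p^\al_i\cdot y\le p^\al_i\cdot\w_i\}$ (since $\al\ge 0$ and, when $p^\al_i\cdot\w_i\ge 0$, the threshold is a convex combination bounded below by $p^\al_i\cdot\w_i$ only when... ). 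Here I need to be careful: $\al + (1-\al)t \ge t$ holds iff $\al(1-t)\ge 0$ iff $t\le 1$; so I would first argue $p^\al_i\cdot\w_i\le 1$ at equilibrium, or more robustly compare both maximization problems to the ``true endowment-value'' budget and bound the difference in optimal values by the difference in budget thresholds, which is $|\al(1 - p^\al_i\cdot\w_i)| = \al\,|1-p^\al_i\cdot\w_i| \to 0$ as $\al\to 0$, using that $p^\al_i\cdot\w_i$ stays bounded (prices bounded, $\w_i$ fixed). The remaining step is a continuity-of-the-value-function lemma: the map $b\mapsto \max\{u_i(y):y\in\X_i,\ p_i\cdot y\le b\}$ is continuous in $(p_i,b)$ on the region where the constraint set is nonempty, which follows from Berge's maximum theorem once one checks lower hemicontinuity of the budget correspondence — and this is where semi-strict quasi-concavity / strict monotonicity and a Slater-type interior point (e.g. $0\in\X_i$ or a cheaper bundle) are used to rule out the usual failure of lhc at kinks.

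I expect the main obstacle to be exactly this continuity/lower-hemicontinuity issue for the constrained-maximization value function: budget correspondences are notoriously not lower hemicontinuous when the minimal-wealth (Slater) condition fails, and here $p^\al_i$ may have zero entries (forbidden or unpriced coordinates) and $\w_i$ may sit on the boundary of $\X_i$, so some care is needed. I would handle it by noting that $\X_i$ is a polytope and $p^\al_i\ge 0$, so $p^\al_i\cdot\w_i \ge 0$ and the point $y=0$ (or the componentwise-minimal point of $\X_i$, which exists since $\X_i$ is a down-closed-type polytope cut out by nonnegative-coefficient inequalities) satisfies $p^\al_i\cdot 0 = 0 \le p^\al_i\cdot\w_i$, giving the needed cheap feasible point uniformly in $\al$; combined with strict monotonicity of $u_i$ this yields the required continuity. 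Finally, choosing $\al$ small enough that every agent's IR gap is below $\ep$ (finitely many agents) gives the displayed bound; and since $\max\{u_i(y):y\in\X_i,\ p^*_i\cdot y\le p^*_i\cdot\w_i\}\ge u_i(\w_i)$ because $\w_i\in\X_i$ is itself feasible for that problem, the bound immediately implies $u_i(x^*_i) > u_i(\w_i)-\ep$, i.e. $\ep$-individual rationality. The $\C$-constrained Pareto efficiency is inherited directly from Theorem~\ref{thm:existence} applied at the chosen $\al$, so no limiting argument is actually needed for that part — one simply fixes the single good $\al$.
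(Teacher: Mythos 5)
Your proposal follows essentially the same route as the paper: fix a small $\al$, observe that the two budget thresholds differ by $\al\,(1-p^\al_i\cdot\w_i)$, use the fact that equilibrium prices lie in a compact box independent of $\al$ to make this difference uniformly small, split into the cases $p^\al_i\cdot\w_i<1$ (where the zero-slack budget set is contained in the $\al$-slack one and the gap is nonpositive) and $p^\al_i\cdot\w_i\geq 1$ (where a continuity-of-the-value argument applies), and finally note $\w_i\in\X_i$ to get $\ep$-individual rationality; the paper packages the continuity step as a Hausdorff-distance bound via the explicit scaling $y\mapsto\g y$ with $\g=\frac{\al+(1-\al)p_i\cdot\w_i}{p_i\cdot\w_i}$ rather than Berge's theorem, but the content is the same. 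One justification you offer is wrong and should be dropped: you cannot normalize $\sum_c p^\al_c=1$, because the budget set $\{y:p_i\cdot y\le\al+(1-\al)p_i\cdot\w_i\}$ is not homogeneous of degree zero in $p$ (the exogenous income $\al$ does not scale), so rescaling destroys the equilibrium property; the correct source of compactness is that the price ceiling $\bar p$ in the proof of Theorem~\ref{thm:existence} does not depend on $\al$, so the \emph{constructed} equilibria (not arbitrary ones) have prices in a fixed box --- and for the same reason your claim that the budget constraint always binds (used only to motivate the normalization) is unnecessary and in fact false when an agent is satiated within $\X_i$. The subsequence-extraction apparatus is likewise not needed, as you yourself note at the end.
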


\subsection{The Hylland and Zeckhauser example}\label{sec:unitdemandWalrasian}

A major application of Theorem~\ref{thm:existence} is to the object allocation model under the supply and unit demand constraints. That is, $\X_i = \Delta_{-}$ for every $ i $ and $x\in \C$ if and only if $\sum_i x_{i} = \bar \w$. There are exactly $\nobjects$ inequalities in $\W^*$, one for each object $l$, expressing that $\sum_i x_{i,l}\leq \bar\w_l$. All agents face equal personalized prices, so we write $ p_l $ for the price of $ l $.

We present an example due to
\cite{HZ1979} showing that a Walrasian equilibrium (a $0$-slack equilibrium) may not exist, and show how the symmetric Pareto efficient assignment
in the example can be sustained as an $\al$-slack
equilibrium with any $\al\in (0,1]$.

\begin{example}\label{ex:HZexample}
	Given is an economy with three agents $ 1,2,3 $ and two objects $a,b$. Object $ a $ has one copy and $b$ has two copies. Agents have the following von-Neumann Morgenstern utilities:
	\begin{center}
		\begin{tabular}{c|cc}
			$i$	& $ u_{i,a} $ & $ u_{i,b} $  \\ \hline
			$ 1 $   & 100              & 1            \\
			$ 2 $   & 100              & 1              \\
			$ 3 $   & 1             & 100           \\
		\end{tabular}
	\end{center}
	Endowments are $\w_i=(1/3,2/3)$ for $i=1,2,3$.
\end{example}

\begin{claim}
	There is no Walrasian equilibrium in Example \ref{ex:HZexample}.
\end{claim}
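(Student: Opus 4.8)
The claim is that no price vector supports market clearing in Example~\ref{ex:HZexample}. I would start from the fact that a Walrasian ($0$-slack, i.e. $\al=0$) equilibrium $(x^*,p^*)$ must satisfy $x^*\in\C$, i.e. $\sum_{i}x^*_i=\bar\w=(1,2)$. In this application the only priced constraints are the two supply constraints, so every agent faces the same price $p=(p_a,p_b)\in\Re^2_+$, and since $\al=0$ and the endowments coincide, $\w_i=(1/3,2/3)$, each agent's budget equals $B:=p\cdot\w_i=(p_a+2p_b)/3$. (Moreover $\sum_{i,l}x^*_{i,l}=\sum_l\bar\w_l=3=N$ together with $\sum_l x^*_{i,l}\le1$ forces every $x^*_i$ into the full simplex, but I will not even need this.) So it suffices to show that for no $p$ does the aggregate of the demand sets $d_i(p)=\arg\max\{u_i(x_i):x_i\in\Delta_-,\ p\cdot x_i\le B\}$ contain a selection summing to $(1,2)$. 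The boundary prices are disposed of quickly: if $p_b=0$ then agents $1,2$ demand $(1,0)$ (when $p_a=0$) or $(1/3,2/3)$ (when $p_a>0$) and agent $3$ demands $(0,1)$ or $(1/3,2/3)$, so total demand for $b$ is $1$ or $7/3$, never $\bar\w_b=2$; the mirror argument handles $p_a=0$. Hence any equilibrium price has $p\gg0$.

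For $p\gg0$ I would parametrize by the ratio $\rho:=p_a/p_b>0$ and read off demands as vertex solutions of the linear program $\max u_i(x_i)$ over the polygon $P(p)=\{x\in\Delta_-:p\cdot x\le B\}$. The key structural fact is that the budget hyperplane $p\cdot x=B$ passes through the common endowment $(1/3,2/3)$, which lies on the face $x_a+x_b=1$; consequently $P(p)$ is a quadrilateral with vertices among $(0,0)$, $(B/p_a,0)$, $(1/3,2/3)$, and $(0,1)$, and comparing the linear utilities at these vertices yields the thresholds $\rho\in\{1,4,100\}$. Concretely: agents $1$ and $2$ (utility $100x_a+x_b$) demand $(1,0)$ when $\rho\le1$; $\big((\rho+2)/(3\rho),0\big)$ when $1<\rho<100$; the segment joining that point to $(1/3,2/3)$ when $\rho=100$; and $(1/3,2/3)$ when $\rho>100$. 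Agent $3$ (utility $x_a+100x_b$) demands $(0,1)$ whenever $\rho\ge1$, which is the only regime we shall need.

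It then remains to check each regime against market clearing. If $\rho\le1$, agents $1$ and $2$ each demand a full unit of $a$, so aggregate demand for $a$ is at least $2>1$. If $\rho>100$, aggregate demand for $a$ is $1/3+1/3+0=2/3<1$. If $1<\rho<100$, clearing $a$ forces $2(\rho+2)/(3\rho)=1$, i.e. $\rho=4$, but then agents $1,2$ demand $x_b=0$ and agent $3$ demands $x_b=1$, giving aggregate demand $1\ne2$ for $b$. The one delicate regime is the knife-edge $\rho=100$, where agents $1$ and $2$ have one-dimensional demand sets; here I would use both market-clearing equations simultaneously: writing their demands as convex combinations, with weights $\la_1,\la_2\in[0,1]$, of $(1/3,2/3)$ and $(17/50,0)$, clearing $b$ forces $\la_1+\la_2=3/2$, whence aggregate demand for $a$ equals $\tfrac12+\tfrac12\cdot\tfrac{17}{50}=\tfrac{67}{100}\ne1$. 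Every case contradicts market clearing, so no Walrasian equilibrium exists. The routine part is the vertex arithmetic; the only point needing genuine care is tracking the full demand \emph{correspondence} (rather than a single-valued demand) at $\rho=100$, which is precisely where I expect the main obstacle to lie.
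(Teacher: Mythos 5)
Your proof is correct and follows the same basic strategy as the paper's: exhaust the possible price configurations, compute each agent's demand, and show that the markets for $a$ and $b$ cannot clear simultaneously (the paper's Walrasian notion does require exact clearing, $\sum_i x_i=\bar\w$, so your use of both equalities is legitimate). You are in fact more careful than the paper at the knife-edge $p_a/p_b=100$, where you correctly treat the demand of agents $1$ and $2$ as a segment and use both clearing equations; the paper's terser argument sidesteps that case by first invoking symmetry to conclude that agents $1$ and $2$ must each receive half of $a$, which pins down $p_a=4p_b$ before checking the market for $b$.
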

\begin{proof}
	Suppose (towards a contradiction) that $(x,p)$ is a Walrasian
	equilibrium. Suppose first that $p_b>0$ and normalize it to one. Then all agents have the same positive budget. If $ p_a=0 $, then 1 and 2 would each buy one copy of $ a $, which is a contradiction. So $ p_a $ must be positive. The preferences of agents imply that 1 and 2 must each obtain a half of $ a $. Therefore, $ 1/3p_a+2/3\ge 1/2p_b $, and we obtain $ p_a\le 4 $. However, if $ p_a<4 $, $ 1 $ and $ 2 $ would spend all of their budgets on $ A $, and each obtain more than a half of $ a $, which is a contradiction. So it must be that $ 1/3p_a+2/3= 1/2p_a $ and $ p_a= 4 $. But this means that at most 3 demands $ b $ and $ b $ must have excess supply, which is a contradiction.
	
	Now suppose $p_b=0$ and $ p_a>0 $. Then 3 must obtain one copy of $b$. Since $ p_a $ is positive, 1 and 2 must each obtain a half of $ a $. However, their budget $ 1/3p_a $ cannot afford such a consumption.   \end{proof}

Consider the assignment $x$ defined by:
\begin{center}
	\begin{tabular}{c|cc}
		$i$	  & $ x_{i,a} $ & $ x_{i,b} $  \\ \hline
		$ 1 $   & 1/2                & 1/2     \\
		$ 2 $   & 1/2             & 1/2  \\
		$ 3 $   & 0 & 1 \\
\end{tabular}\end{center}

\begin{claim}\label{prop:HZexalWE}
	For any $\al\in(0,1]$, the price vector $ p = (\frac{6\al}{1+2\al},0) $ and the assignment $ x $ constitute an
	$\al$-slack equilibrium in Example \ref{ex:HZexample}.
\end{claim}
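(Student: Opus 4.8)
The plan is to check the three defining properties of an $\al$-slack equilibrium in turn. First I would pin down $\W^*$: in the object-allocation model with supply and unit-demand constraints treated in Section~\ref{sec:unitdemandWalrasian}, $\W^*$ consists of the two supply inequalities $x_{1,a}+x_{2,a}+x_{3,a}\le\bar\w_a=1$ and $x_{1,b}+x_{2,b}+x_{3,b}\le\bar\w_b=2$, whose coefficient vectors are the indicators of the columns $I\times\{a\}$ and $I\times\{b\}$. Consequently every agent faces the common personalized price vector $p_i=(p_a,p_b)=(\tfrac{6\al}{1+2\al},0)$, and $\X_i=\Delta_{-}$ for each $i$.

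Feasibility ($x\in\C$) is immediate: each row of $x$ lies in $\Delta_{-}$ --- its coordinates are nonnegative and sum to $1$ --- and $\sum_i x_i=(1,2)=\sum_i\w_i=\bar\w$, so the column sums match the supply. The complementary-slackness condition is vacuous here, since both supply inequalities hold with equality at $x$.

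The one step with any content is the optimality condition. Each agent's income is $\al+(1-\al)\,p_i\cdot\w_i=\al+(1-\al)\tfrac13 p_a$; substituting $p_a=\tfrac{6\al}{1+2\al}$ and simplifying gives the common budget $\tfrac{3\al}{1+2\al}$, and indeed $p_a$ is exactly the value for which $\tfrac12 p_a=\al+(1-\al)\tfrac13 p_a$, i.e.\ for which each agent can afford precisely half a unit of $a$. Since $p_b=0$, the budget constraint $p_i\cdot x_i\le\text{income}$ reduces to $x_{i,a}\le\tfrac12$, and it remains to solve $\max\{u_i(x_i):x_i\in\Delta_{-},\ x_{i,a}\le\tfrac12\}$ for each $i$ with $u_i$ linear. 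For agents $1$ and $2$, the coefficient on $a$ ($100$) dominates that on $b$ ($1>0$), so it is optimal to raise $x_{i,a}$ to its bound $\tfrac12$ and then use the remaining slack in $\Delta_{-}$ to set $x_{i,b}=\tfrac12$, yielding $(\tfrac12,\tfrac12)$. For agent $3$, the coefficient on $b$ ($100$) dominates that on $a$ ($1$) and $b$ is free, so it is optimal to set $x_{3,b}=1$, which forces $x_{3,a}=0$ via $x_{3,a}+x_{3,b}\le1$, yielding $(0,1)$. These match the three rows of $x$, so all three properties hold.

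There is no real obstacle: the only care needed is to track how the budget constraint $x_{i,a}\le\tfrac12$ interacts with the subsimplex constraint $x_{i,a}+x_{i,b}\le1$, and to invoke linearity of the $u_i$ so that each agent's problem is solved at a vertex of the feasible polygon. The sole computation worth doing carefully is the simplification of the income to $\tfrac{3\al}{1+2\al}$, which is what makes the displayed price vector an equilibrium simultaneously for every $\al\in(0,1]$.
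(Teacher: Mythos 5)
Your proposal is correct and follows essentially the same route as the paper: compute the common income $\al+(1-\al)\tfrac{2\al}{1+2\al}=\tfrac{3\al}{1+2\al}$, observe that at $p_a=\tfrac{6\al}{1+2\al}$ agents $1$ and $2$ can afford exactly half a unit of $a$ (which, with $p_b=0$, makes $(\tfrac12,\tfrac12)$ their optimum) while agent $3$ takes $b$ for free. Your writeup merely makes explicit the feasibility and complementary-slackness checks that the paper leaves implicit.
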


\begin{proof} For any $\al\in (0,1]$ and $i=1,2,3$,
	 \[ 
	\al + (1-\al) p\cdot \w_i = 
	\al + (1-\al)\frac{2\al}{1+2\al}
	=\frac{3\al}{1+2\al} = p\cdot x_i.
	\]
	With such budgets, agents 1 and 2 can only afford a $1/2$ share of $ a $ and a
	$1/2$ share of $ b $, which is the best consumption for them. Agent 3 chooses a copy of $ b $ for free.
\end{proof}

Note that in the above $\al$-slack equilibrium, the endogenous value of agents' endowments is
$2\al/(1+2\al)$. So the value of the exogenous part
of the budget relative to the value of the 
endogenous part is \[ 
\dfrac{\al}{(1-\al)\dfrac{2\al}{1+2\al}}\rightarrow \frac{1}{2}
\] as $\al\rightarrow 0$. So when $\al$ shrinks to zero, the value of
the exogenous income is not negligible. 
In the same spirit, the following proposition shows that the average
endogenous budget will always be below the exogenous budget of one. This means that the economy needs outside ``money.''

\begin{proposition}\label{prop:avgincom}
	If $(x,p)$ is an $\al$-slack equilibrium,
	then \[ 
	\frac{1}{N} \sum_{i=1}^N p\cdot \w_i \leq 1
	\] \end{proposition}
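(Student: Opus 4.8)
The plan is to extract the bound directly from the three conditions defining an $\al$-slack equilibrium together with the feasibility of the endowment profile $\w$. The one identity I will use is that, for any assignment $z=(z_i)_{i\in I}$, substituting the personalized prices $p_{i,l}=\sum_{(a,b)\in\W^*}a_{i,l}p_{(a,b)}$ yields
\[
\sum_{i\in I}p_i\cdot z_i=\sum_{i\in I}\sum_{l\in O}p_{i,l}z_{i,l}=\sum_{(a,b)\in\W^*}p_{(a,b)}\,(a\cdot z);
\]
that is, aggregate expenditure at $z$ is the $p$-weighted sum of the left-hand sides of the priced constraints evaluated at $z$.

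First I would evaluate aggregate expenditure at the equilibrium assignment $x$. Since $x\in\C\subseteq\lcs(\C)$, Lemma~\ref{lem:nonnegativelinearineq} gives $a\cdot x\le b$ for every $(a,b)\in\W^*$, while the complementary-slackness condition forces $p_{(a,b)}=0$ whenever $a\cdot x<b$; hence $p_{(a,b)}(a\cdot x)=p_{(a,b)}\,b$ term by term, and summing gives $\sum_{i\in I}p_i\cdot x_i=\sum_{(a,b)\in\W^*}p_{(a,b)}\,b$. Next, since $\w\in\C\subseteq\lcs(\C)$ we also have $a\cdot\w\le b$ for every $(a,b)\in\W^*$, so, using that equilibrium prices are non-negative, $\sum_{i\in I}p_i\cdot\w_i=\sum_{(a,b)\in\W^*}p_{(a,b)}\,(a\cdot\w)\le\sum_{(a,b)\in\W^*}p_{(a,b)}\,b=\sum_{i\in I}p_i\cdot x_i$. (In the supply/unit-demand setting of Section~\ref{sec:unitdemandWalrasian}, where $\C=\{x:\sum_{i}x_i=\bar\w\}$ and all agents face the same price vector, this constraint already holds with equality at $\w$, so $\sum_{i}p\cdot\w_i=\sum_{i}p\cdot x_i$ with no sign hypothesis needed.)

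Finally I would invoke the budget constraints: each $x_i$ lies in $i$'s budget set, so $p_i\cdot x_i\le\al+(1-\al)\,p_i\cdot\w_i$; summing over $i\in I$ and chaining with the previous inequality,
\[
\sum_{i\in I}p_i\cdot\w_i\;\le\;\sum_{i\in I}p_i\cdot x_i\;\le\;N\al+(1-\al)\sum_{i\in I}p_i\cdot\w_i,
\]
hence $\al\sum_{i\in I}p_i\cdot\w_i\le N\al$, and dividing by $\al N>0$ gives $\frac{1}{N}\sum_{i\in I}p_i\cdot\w_i\le1$, which is the claim (reading $p\cdot\w_i$ as $p_i\cdot\w_i$, and literally so when all agents face a common price vector). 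I do not expect a genuine obstacle: the argument is a few lines. The two points that require a little care are the direction of the inequality $\sum_{i\in I}p_i\cdot\w_i\le\sum_{i\in I}p_i\cdot x_i$, which rests on $\w$ being feasible together with non-negativity of prices (equivalently, on a priced constraint binding at $\w$), and the fact that the final division uses $\al>0$, in line with the proposition concerning $\al$-slack equilibria for $\al\in(0,1]$.
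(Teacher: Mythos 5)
Your proof is correct, and its skeleton is the same as the paper's: sum the budget constraints $p_i\cdot x_i\le\al+(1-\al)p_i\cdot\w_i$ over $i$, relate aggregate expenditure at $x$ to the aggregate value of the endowment, and divide by $\al N>0$. The difference is in how that middle relation is obtained. The paper's proof is stated in the Section~\ref{sec:unitdemandWalrasian} setting, where $\C=\{x:\sum_i x_i=\bar\w\}$ and all agents face a common price vector, so it simply writes $p\cdot\bigl(\sum_i x_i-\bar\w\bigr)=0$ and is done in two lines. You instead derive the (one-sided) inequality $\sum_i p_i\cdot\w_i\le\sum_i p_i\cdot x_i$ from complementary slackness at $x$ together with feasibility of $\w$ (condition (2) in the definition of a constrained allocation problem with endowments) and non-negativity of the constraint prices. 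That is slightly more work, but it buys generality: your argument establishes the average-income bound for any $\al$-slack equilibrium in the general priced-constraints model, not just for the supply/unit-demand special case in which the proposition is stated, and you correctly flag the two places where care is needed (the direction of the middle inequality, and the use of $\al>0$ in the final division). No gaps.
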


\begin{proof}
	Note that $p\cdot (x_i-\w_i)\leq \al(1-p\cdot \w_i)$.  Sum over $i$ to
	obtain: \[ 
	0=  p\cdot \left(\sum_i x_i-\bar\w\right) \leq  \al (N - p\cdot \bar \w).
	\]
\end{proof}

\subsection{A market-based fairness property}

In the object allocation model of Section \ref{sec:unitdemandWalrasian}, agents face identical prices. It is possible to use our result to develop a kind of fairness property in the presence of endowments. Fairness, in the sense of absence of envy, is generally incompatible with individual rationality. Imagine an economy with two objects, where both agents prefer object 1 over object 2, and all the endowment of object 1 belongs to agent 1. Then, in any allocation, there will either be envy, or agent 1's individual rationality will be violated. So fairness has to be amended to account for the presence of endowment.\footnote{The paper by \cite{echenique2020fairness} deals exclusively with this problem, but proposes a very different solution.}

In the object allocation model with supply and unit demand constraints, in any $\al$-slack equilibrium, if agent $i$ envies agent $j$ then it must be that $j$'s endowment is worth more than $i$'s at equilibrium prices. In a sense, this means that agents collectively value $j$'s endowment more than $i$'s. Our next result formalizes this idea. 

\begin{proposition}
  \label{prop:envyslackwalras}
In the object allocation model with supply and unit demand constraints, suppose that agents' utility functions are concave and
$C^1$.\footnote{A function with domain $D$ is  $C^1$ if it can be extended to a continuously differentiable function defined on an open set that contains $D$.} Let $(x,p)$ be an $\al$-slack equilibrium. Denote by
$S = \{i:u_i(x_i)=\max \{u_i(z_i):z_i\in\Delta_{-}\} \}$ the set of
\df{satiated} agents, and by $U=I\setminus S$ the set of others.  Suppose that
$\sum _{i\in U}  x_i\gg 0$. 

If $i$ envies $j$ in $x$ ($u_i(x_j)>u_i(x_i)$), then
$p\cdot\w_j>p\cdot\w_i$, and there exist welfare weights $\ta\in \Re^U_{++}$ such that if 
\[
v(t) = \sup \{
\sum_{i\in U} \ta_i u_i(\tilde x_i)  : (\tilde x_i)\in\Delta_{-}^U \text{ and }
\sum_{i\in U} \tilde x_i \leq \bar\w  + t (\w_i-\w_j) - \sum_{i\in S} x_i  
\},\] 
then $(x_i)_{i\in U}$ solves the problem for $v(0)$, and 
$v(t)<v(0)$ for all $t$ small enough.
\end{proposition}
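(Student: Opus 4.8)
The plan is to treat this as a first-welfare-theorem / supporting-price computation specialized to the object-allocation model, where every agent faces the common price vector $p$ and the only source of heterogeneity in budgets is the term $(1-\al)p\cdot\w_i$. The first claim, $p\cdot\w_j>p\cdot\w_i$, I would get directly from the budget inequalities: if $i$ envies $j$ then $u_i(x_j)>u_i(x_i)$, and since $x_j\in\Delta_{-}=\X_i$ while $x_i$ maximizes $u_i$ over $\{z\in\Delta_{-}:p\cdot z\le \al+(1-\al)p\cdot\w_i\}$, the bundle $x_j$ cannot be affordable for $i$, i.e.\ $p\cdot x_j>\al+(1-\al)p\cdot\w_i$. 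On the other hand $x_j$ is affordable for $j$, so $p\cdot x_j\le \al+(1-\al)p\cdot\w_j$; combining gives $(1-\al)(p\cdot\w_j-p\cdot\w_i)>0$, hence $p\cdot\w_j>p\cdot\w_i$ when $\al<1$ (and when $\al=1$ the inequality $0>0$ is impossible, so there is no envy and the statement is vacuous). Write $d:=\w_i-\w_j$, so $p\cdot d<0$.

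Next I would construct the welfare weights. Fix $i\in U$. Because $i$ is not satiated, the budget constraint binds at $x_i$ (otherwise, by concavity, $i$ could move slightly from $x_i$ toward a strictly better point of $\Delta_{-}$ while remaining feasible), and the concave program $\max\{u_i(z):z\in\Delta_{-},\ p\cdot z\le b_i\}$ — with $b_i=\al+(1-\al)p\cdot\w_i\ge\al>0$, so Slater holds at $z=0$ — admits a multiplier $\la_i\ge 0$ with $x_i\in\argmax_{z\in\Delta_{-}}\{u_i(z)-\la_i p\cdot z\}$. Here $\la_i>0$, since $\la_i=0$ would make $x_i$ a global maximizer of $u_i$ on $\Delta_{-}$, i.e.\ $i\in S$, a contradiction. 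Put $\ta_i:=1/\la_i>0$, so $x_i\in\argmax_{z\in\Delta_{-}}\{\ta_i u_i(z)-p\cdot z\}$. Let $R:=\bar\w-\sum_{i\in S}x_i$; feasibility of $x$ gives $\sum_{i\in I}x_i=\bar\w$, hence $\sum_{i\in U}x_i=R$. For any $(z_i)_{i\in U}\in\Delta_{-}^U$ with $\sum_{i\in U}z_i\le R$, nonnegativity of prices (Lemma~\ref{lem:nonnegativelinearineq}) gives $p\cdot\sum_{i\in U}z_i\le p\cdot R=p\cdot\sum_{i\in U}x_i$; combining this with the summed optimality inequalities $\ta_i u_i(x_i)-p\cdot x_i\ge\ta_i u_i(z_i)-p\cdot z_i$ over $i\in U$ yields $\sum_{i\in U}\ta_i u_i(x_i)\ge\sum_{i\in U}\ta_i u_i(z_i)$. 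Thus $(x_i)_{i\in U}$ solves the $v(0)$ problem.

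Finally, for the strict decrease I would re-run the same aggregation against the perturbed resource vector $R+td$: for any $(z_i)_{i\in U}$ feasible for $v(t)$, since $\sum_{i\in U}z_i\le R+td$ and $p\ge 0$,
\[
\sum_{i\in U}\ta_i u_i(z_i)\le\sum_{i\in U}\big(\ta_i u_i(z_i)-p\cdot z_i\big)+p\cdot(R+td)\le\sum_{i\in U}\big(\ta_i u_i(x_i)-p\cdot x_i\big)+p\cdot R+t\,p\cdot d,
\]
which equals $v(0)+t\,p\cdot d$ since $\sum_{i\in U}p\cdot x_i=p\cdot R$ and $(x_i)_{i\in U}$ solves $v(0)$. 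Taking the supremum over feasible $(z_i)$ gives $v(t)\le v(0)+t\,p\cdot(\w_i-\w_j)$, and since $p\cdot(\w_i-\w_j)<0$ we conclude $v(t)<v(0)$ for every $t>0$ for which $v(t)$ is well defined; the hypothesis $\sum_{i\in U}x_i=R\gg 0$ is precisely what makes $v(t)$ well defined (and finite, as the $v(t)$-feasible set is compact) for all small $t$, because then $R+td\ge 0$ and $(0,\dots,0)\in\Delta_{-}^U$ is feasible. There is no single deep obstacle here — everything reduces to a supporting-price manipulation — but care is needed (a) to justify that the budget binds and that the multiplier $\la_i$ is strictly positive exactly on $U$, which is where non-satiation and concavity genuinely enter, and (b) to keep the inequality directions straight in the displayed chain and to pin down exactly where $\al<1$ and $\sum_{i\in U}x_i\gg 0$ are used.
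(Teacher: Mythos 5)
Your proof is correct, and its core is the same as the paper's: the welfare weights are $\ta_i=1/\la_i$ where $\la_i$ is agent $i$'s budget multiplier, the equilibrium price $p$ serves as the supporting multiplier on the resource constraint of the utilitarian problem, and the strict decrease of $v$ comes from $p\cdot(\w_i-\w_j)<0$, which in turn comes from the budget comparison under envy. Where you differ is in execution, and mostly to your advantage. First, the paper runs the final step through a parametrized family of Lagrange multipliers $(\nu(t),\g(t),\al(t))$ for the perturbed program and saddle-point inequalities on the value function, selecting $\nu(0)=p$; you instead sum the per-agent global Lagrangian inequalities $\ta_i u_i(x_i)-p\cdot x_i\ge \ta_i u_i(z_i)-p\cdot z_i$ and use $p\ge 0$ to get $v(t)\le v(0)+t\,p\cdot(\w_i-\w_j)$ in one line, which avoids the envelope machinery entirely and also dispenses with differentiability (concavity plus Slater suffices for your multipliers, whereas the paper leans on $C^1$ first-order conditions). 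Second, you work with the statement's definition of $S$ (globally satiated agents) and explicitly bridge to the multiplier characterization --- non-satiation forces the budget to bind and forces $\la_i>0$ --- whereas the paper silently redefines $S$ and $U$ by whether the budget binds and asserts ``we can let $\la_i>0$'' on the budget-binding set, which is not automatic for a satiated agent whose cheapest bliss point happens to exhaust the budget. Third, you correctly flag that the step from $(1-\al)p\cdot\w_j>(1-\al)p\cdot\w_i$ to $p\cdot\w_j>p\cdot\w_i$ needs $\al<1$ (with $\al=1$ envy is impossible and the claim is vacuous), a point the paper's Lemma~\ref{lem:envyprices} glosses over. The one thing to make explicit in a final write-up is that for $i\in U$ one has $p\neq 0$ (otherwise the budget never binds and $i$ would be satiated), so Slater's condition at $z=0$ is genuinely strict; you implicitly use this.
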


The meaning of Proposition~\ref{prop:envyslackwalras} is that if an agent $i$ envies agent $j$ then $j$'s endowment is more valuable than $i$'s in two senses. First, it is more valuable at equilibrium prices. Second, the higher price valuation translates into a statement about how much agents value the endowments. In particular, $j$'s endowment is more valuable than $i$'s to a coalition of players $U$ (a coalition that includes $i$!). It is more valuable to $U$ in the sense that there are welfare weights for the members of $U$ such that a change in agents' endowment towards having more of $i$'s endowment and less of $j$'s leads to a worse weighted utilitarian outcome. The result requires that $\sum _{i\in U}  x_i\gg 0$  simply to ensure that when we subtract $\w_j$ we do not force some agent to consume negative quantities of some object.\footnote{The $\sum _{i\in U}  x_i\gg 0$ hypothesis in Proposition~\ref{prop:envyslackwalras} is stronger than what we
need. It suffices that if $\w_{j,l}>0$ then $\sum _{i\in U}  x_{i,l} > 0$.}

\subsection{A market for time exchange}\label{sec:timebanks}
In organizations such as \textit{time banks}, members exchange time and skills without using monetary transfers.\footnote{See \cite{Andersson2018} for more description of real-life time banks.} A time exchange problem can be described as an object allocation model with endowments. Formally, $ O $ is the set of service types. For each agent $ i $ and service $ l $, $ \w_{i,l} $ is the amount of $ l $ that $ i $ can provide. We could require that $ \sum_{l\in O}\w_{i,l}\le 1 $ for all $ i $ and every agent's demand be no greater than one. Here ``one'' could mean one day, one week, or one month. Services can be regarded as divisible because time is divisible. But of course, in real life time is often measured in integers such as hours, days, or weeks. Theorem \ref{thm:existence} implies that we can find a market equilibrium to the problem. The value of an agent's endowment at equilibrium prices shows how much his endowment is valued by all agents. When the value is higher, the agent is rewarded by a better assignment.

\section{Related Literature}\label{sec:relatedliterature}
Constrained resource allocation has received a lot of attention in recent years. \cite{budish2013designing} identify the bihierarchy structure of constraint blocks in the assignment matrix as the sufficient and necessary condition for implementation. \cite{akbarpourapproximate} extend this result by relaxing some constraints and considering approximate implementation. We circumvent the implementation issue by taking the set of implementable assignments as the primitive. Budish et al. allow for floor constraints in implementation but rule out them in their applications. In their extension of the pseudo-market mechanism, they consider column constraints, row constraints and sub-row constraints. By incorporating all row and sub-row constraints into agents' consumption spaces, they prove the existence of equilibria much like Hylland and Zeckhauser's. Their extension is a special case of ours. We can deal with more general constrains on both rows and columns, and allow for floor constraints. When there are no floor constraints, we directly price ceiling constraints, and when there are floor constraints, we translate floor constraints into a different set of ceiling constraints.

\cite*{ehlers2011school} focus on the problem of controlled school choice (which was introduced by \cite{abdulkasonmez}), whereby children have to be assigned seats at different schools to satisfy some diversity objective.\footnote{Controlled school choice is also investigated by, among others, \cite{ehlers2010}, \cite{hafalir2013effective}, \cite{kominers2013designing}, \cite{westkamp2013analysis}, \cite{echenique2015control}, \cite{fragiadakis2017improving}, \cite{aygun2017college}, and \cite{nguyen2017stable}.}  \cite*{kamada2015efficient} are mainly (but not exclusively) motivated by the problem of allocating doctors to hospitals to satisfy geographic quotas. The objective of the quotas is to avoid an excessive concentration of doctors in urban areas.\footnote{See \cite{kamada2017recent} for an overview.} Both papers proceed by adapting the notion of stability to capture the presence of constraints, and to add structure to the constraints being considered. To address more general constraints, \cite*{kamada2019fair} relax stability and focus on feasible, individually rational, and fair
assignments. They demonstrate that the class of general upper-bound constraints on individual schools are the most permissive constraints under which a student-optimal fair matching exists. That class rules out floor constraints. Our paper can deal with the same kinds of constraints in the above papers, but we follow a different methodological tradition. Instead of a two-sided game-theoretic matching model, we consider object allocation and propose a competitive equilibrium solution. The above  papers also investigate the role of incentives in their mechanisms. We expect our pseudo-market mechanism to be incentive compatible in large markets, but we choose to focus on existence, efficiency and fairness.\footnote{\cite{he2018pseudo} prove the asymptotic strategy-proofness of their pseudo-market mechanism.} 

The recent work of \cite{balbuzanov2019constrained} considers a version of the probabilistic serial mechanism for object allocation subject to constraints. Like us, he works on a one-sided object allocation model, but the focus on probabilistic serial makes his analysis clearly distinct from ours. We borrow from this paper the idea, expressed in Lemma~\ref{lem:nonnegativelinearineq}, allowing us to focus on non-negative linear inequalities.

The use of  markets over lottery shares to solve centralized allocation problems was first proposed by \cite{HZ1979}. They assume no constraints other than unit demands and limited supply. They impose a fixed income for each agent, independent of prices.  They also emphasize that equilibrium may not be efficient, and introduce the ``cheapest bundle'' property that we employ as well in our version of the first welfare theorem.  Many other papers have followed Hylland and Zeckhauser in analyzing competitive equilibria as solutions in market design; see for instance, \cite{budish2011combinatorial}, \cite{ashlagi2015optimal}, \cite{hafalir2015welfare}, \cite{he2018pseudo}. \cite{mirallesfoundations} establish the second welfare theorem for the market with satiated preferences and token money: every Pareto efficient assignment may be supported in a Walrasian equilibrium with properly chosen budgets. But none of these papers consider constrained allocation problems.\footnote{\cite{he2018pseudo} considers priority-based  constraints, which are  different from the class of constraints studied here. \cite{mirallesfoundations} does not focus on constraints, but can accommodate linear and individual-agent constraints.}

Hylland and Zeckhauser make the point that an equilibrium may not exist in a model with endowments. Like us, 
\cite{mas1992equilibrium}, \cite{le2017} and \cite{mclennan2018}
also propose to avoid the non-existence issue by means of a hybrid income between the exogenous budget and the endogenous Walrasian income.  A version of the hybrid model was first introduced by \cite{mas1992equilibrium}, who presents an existence result with income that is the sum of a fixed income and a price-dependent income. His result requires the first component to be determined endogenously as part of the fixed point argument in the equilibrium existence result. Aside from the presence of constraints, our result differs from his by allowing us to obtain approximate individual rationality with a small exogenous $\al$.
In \citeapos{le2017} notion of equilibrium, two identical objects may have different prices. As a consequence, there may be envy among identical agents, and it may be necessary for some agents to purchase a more expensive copy of an object when a cheaper one is available.\footnote{In Example \ref{ex:HZexample}, a Le's equilibrium is as follows. Let $p=(100,1,\frac{101}{2})$ be a price vector in which the latter two elements are the prices of the two copies of B. Then all agents
	have an income of
	$101/2$. The unique optimal bundle for agents 1 and 2 is $x_i=(1/2,1/2,0)$. Agent $3$
	is willing to spend all his income on buying the more expensive copy of $B$, so $x_3=(0,0,1)$.
	
	Consider a variation of the example in which endowments become $\w_1=(1/3,1/2,1/6)$,
	$\w_2=(1/3,1/6,1/2)$, and $\w_3=(1/3,1/3,1/3)$. Then  $p=(100,1,\frac{101}{2})$ is still an equilibrium price, with 
	$x_1= (\frac{5}{12},\frac{7}{12},0)$,
	$x_2= (\frac{7}{12},\frac{5}{12},0)$, and
	$x_3=(0,0,1)$ being the equilibrium assignment. Observe that agent 1 envies 2, despite they have the same
	utility and the same endowment: $1/3$ of $ A $ and $2/3$ of $ B $.}
Envy among equals is problematic for normative reasons, and it is hard to implement such equilibria in a decentralized fashion.\footnote{One could   interpret different prices for different copies of the same object as   a novel endogenous transfer scheme, but we are unaware of a normative defense of this idea.} 

\cite{mclennan2018} presents an existence result for equilibrium with ``slack'' in a general model that allows for production and encompasses our model as a special case. But his notion of equilibrium with slack differs from ours in important ways. Agents in his (and our) model may be satiated, and his notion of slack controls the distribution of transfers from satiated agents who spend less than their income to unsatiated agents. In contrast, our $\al$ parameter controls the role of endowments, allowing for $\al$ to specify the weight of equal incomes vs.\ (unequal) endowments. In fact, it is possible to construct an example to illustrate the difference between the two notions of equilibrium. In the example no agents are satiated, so the slack in McLennan's notion has no role to play, and his equilibrium allocations are independent of $\al$; in contrast, our equilibrium allocations range from equal division to the autartical consumption of endowments, as $\al$ ranges from placing all weight on the exogenous income, to placing all weight on initial endowments.\footnote{We are grateful to Andy McLennan for this example, which can be found in his paper.}

\cite{kojima2018job} and \cite{gulpesendorferzhang} consider market equilibrium in economies with gross substitutes utilities and constraints. Kojima et.\ al characterize the constraints that preserve the gross substitutes property of firms' demands in a transferable utility model (\citeapos{kelso1982job} job matching model). Gross substitutes ensure equilibrium existence, and the authors show that the constraint structures have to take the form of ``interval constraints.''  Gul et.\ al prove the existence of equilibrium in economies with a finite number of indivisible objects, and limited transfers or no transfers. They show that with limited transfers or no transfers, equilibrium requires random allocations and can be approached by the equilibrium with full transfers. They also show that
equilibrium allocations satisfying certain constraints can be constructed by building these constraints into utility functions or
incorporating them into a production technology. Different from them, we price constraints and can accommodate more general preferences and constraint structures.

Related to our applications, \cite{manjunath2016fractional} proposes a competitive equilibrium notion for a two-sided fractional matching market. The double-indexed price system in his notion resembles our personalized price system, but he needs to deal with both sides' preferences. As a consequence, his equilibrium exists when there are transfers, but only approximately exists when transfers are forbidden. 
\cite{Andersson2018} propose a time exchange model in which each agent provides a distinct service and has dichotomous preferences. They propose a priority mechanism to maximize the amount of exchanges among agents. Differently, in our model of time exchange an agent can provide multiple services and different agents can provide the same service. Agents can express
richer preferences and the prices in our market solution reveal on which service agents have more demand. \cite{bogomolnaia2017competitive,bogomolnaia2019dividing} study the competitive equilibrium allocation of a mixed manna that contains ``goods'' and ``bads''. They prove that an equilibrium always exists. Our model is different than theirs in that agents have unit-demand constraints. So their existence result does not hold in our paper.

Finally, the recent work by \cite{rootincentives} looks at constrained allocation from a mechanism design perspective. They allow for very general constraint sets, and prove a characterization of group strategy-proof rules.

\section{Proof of Theorem \ref{thm:existence:noend} and Theorem \ref{thm:existence}}\label{sec:proofexistence}

We first prove Theorem \ref{thm:existence} by assuming that all utility functions are semi-strictly quasi-concave. We then explain in Remark \ref{rmk:cheapb} the differences when utility functions are only quasi-concave. After that, in Remark \ref{rmk:proof:Thm1} we explain how the proof works for Theorem \ref{thm:existence:noend}.

With an abuse of notation, we write $\sum_{l\in O}p_{i,l}x_{i,l}$ as
$p_i\cdot x_i$. 
For each $ c\in \W^* $, we write $c=(a^c,b^c)$. We have assumed that $ \sum_{(i,l)\in supp(c)}\w_{i,l}>0 $. It implies that $ \sum_{(i,l)\in supp(c)}a^c_{i,l}\w_{i,l}>0 $. 

We define a price ceiling 
\[
\bar{p}= \frac{2N}{\min_{c\in \W^*} \sum_{(i,l)\in supp(c)}a^c_{i,l}\w_{i,l}},
\] and a price space $ \mathcal{P}=[0, \bar{p}]^{\W^*} $.

Given $ \al\in (0,1] $,  for every $ p\in \mathcal{P} $, define
\begin{align*}
v_i & = \max \{u_i(x_i):x_i\in\X_i\}, \\
B_i(p,\al) & = \{x_i\in\X_i:p_i\cdot x_i\le \al+(1-\al)p_i\cdot \w_i \},  \\
d_i(p) & = \argmax \{u_i(x_i): x_i\in B_i(p,\al)\}, \\
\ul d_i(p)& = \argmin\{ p\cdot x_i: x_i\in d_i(p)\}, \\
V_i(p) & = \max \{u_i(x_i): x_i\in B_i(p,\al)\}.
\end{align*}

\begin{lemma}\label{lem:bcbinding} 
	If $V_i(p)<v_i$ then $d_i(p)=\ul d_i(p)$. 
\end{lemma}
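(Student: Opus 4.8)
The plan is to argue by contraposition: I will show that if $d_i(p) \neq \ul d_i(p)$, then $V_i(p) = v_i$, i.e.\ the agent is unconstrained at prices $p$ and achieves her globally maximal utility on $\X_i$. The hypothesis $d_i(p) \neq \ul d_i(p)$ means there exist two optimal bundles $x_i, x_i' \in d_i(p)$ in the budget set $B_i(p,\al)$ with $u_i(x_i) = u_i(x_i') = V_i(p)$ but $p_i \cdot x_i' < p_i \cdot x_i$; in particular, $x_i'$ lies strictly inside the budget set, $p_i \cdot x_i' < \al + (1-\al)p_i\cdot\w_i$.

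The key step is then a standard ``local non-satiation via monotonicity'' move, exploiting that $\X_i$ is obtained from the nonnegative orthant by intersecting with individual ceiling constraints and zeroing out forbidden coordinates, and that $u_i$ is strictly increasing. Since $x_i'$ is in the relative interior of the budget constraint, I can perturb $x_i'$ by adding a small positive amount $\ep$ to some coordinate $l$ that is not forbidden for $i$ — provided such a perturbation stays in $\X_i$. Here I need to handle the possibility that every feasible coordinate is already at its individual-constraint ceiling at $x_i'$; but in that case $x_i'$ already maximizes $u_i$ over all of $\X_i$ (strict monotonicity plus $x_i'$ being maximal in $\X_i$), so $V_i(p) = u_i(x_i') = v_i$ and we are done. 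Otherwise, there is room to increase some admissible coordinate: the resulting bundle $x_i' + \ep e_l$ is in $\X_i$ for $\ep$ small, still satisfies the budget constraint for $\ep$ small (strict inequality was slack), and has strictly higher utility by strict monotonicity — contradicting that $V_i(p)$ was the maximum over $B_i(p,\al)$.

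The main obstacle, and the only subtlety, is the bookkeeping around the consumption space $\X_i$: I must be careful that the perturbation direction respects both the forbidden-object constraint $a^0_i \cdot x_i \le 0$ and the individual ceilings $a_i \cdot x_i \le b$ for $(a,b) \in \W^i$. The clean way to organize this is: let $x_i' \in d_i(p)$ be a cheapest optimal bundle (so $x_i' = \ul d_i(p)$ up to the argmin), note $d_i(p)\neq \ul d_i(p)$ forces $p_i\cdot x_i' < \al + (1-\al) p_i\cdot \w_i$; then either $x_i'$ is a maximizer of $u_i$ on all of $\X_i$, giving $V_i(p)=v_i$ directly, or by strict monotonicity there is $z_i \in \X_i$ with $u_i(z_i) > u_i(x_i')$, and then for $\la \in (0,1)$ close to $1$ the convex combination $\la x_i' + (1-\la) z_i$ lies in $\X_i$ (convexity), is affordable (budget slack at $x_i'$), and by quasi-concavity has utility at least $\min\{u_i(x_i'), u_i(z_i)\} = u_i(x_i') = V_i(p)$, while in fact one can push slightly further toward $z_i$ to strictly exceed $V_i(p)$ — contradiction. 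I expect the convex-combination version to be the cleanest, since it avoids coordinatewise case analysis and uses only convexity of $\X_i$, continuity and quasi-concavity of $u_i$, and the slackness of the budget constraint.
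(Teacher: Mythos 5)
Your overall route is essentially the paper's: both arguments hinge on taking a convex combination of an optimal bundle having budget slack with a strictly preferred point $z_i\in\X_i$, and using affordability of that combination to contradict optimality. The contrapositive packaging and the case split (``either $x_i'$ already maximizes $u_i$ on $\X_i$, or some $z_i$ beats it'') are harmless reorganizations of the paper's direct argument, which simply takes any $x_i\in d_i(p)$, uses the hypothesis $V_i(p)<v_i$ to produce $z_i$ with $u_i(z_i)=v_i>u_i(x_i)$, and shows the budget constraint must bind at $x_i$ (whence all bundles in $d_i(p)$ cost the same). Your instinct to discard the coordinatewise perturbation in favor of the convex combination is also the right call, for exactly the reasons you give.

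There is, however, one step that does not go through as written: you claim the combination $\la x_i'+(1-\la)z_i$ has utility at least $\min\{u_i(x_i'),u_i(z_i)\}=V_i(p)$ ``by quasi-concavity,'' and that ``one can push slightly further toward $z_i$ to strictly exceed $V_i(p)$.'' Mere quasi-concavity only yields the weak inequality, and it is consistent with quasi-concavity (and with strict monotonicity, since $z_i-x_i'$ need not be a nonnegative direction) that $u_i$ is \emph{constant} equal to $u_i(x_i')$ on the entire initial portion of the segment; pushing further toward $z_i$ then either gains nothing or leaves the budget set, since $z_i$ itself need not be affordable. So no contradiction is obtained. The missing ingredient is semi-strict quasi-concavity, which is the standing assumption in the section of the paper where this lemma is proved: since $u_i(z_i)\neq u_i(x_i')$, it gives $u_i(\la x_i'+(1-\la)z_i)>u_i(x_i')=V_i(p)$ for every $\la\in(0,1)$ directly, with no need to ``push further.'' With that substitution your argument is complete and coincides with the paper's.
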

\begin{proof} Let $x_i\in d_i(p)$. We shall prove that $p_i\cdot x_i= \al+(1-\al)p_i\cdot \w_i$, which means we are done because it implies that all bundles in $d_i(p)$ cost the same at prices
	$p$. Let $z_i\in\X_i$ be such that $u_i(z_i)=v_i>u_i(x_i)$. For any $\ep\in (0,1)$, since $ \X_i $ is convex, $\ep z_i +
	(1-\ep)x_i\in\X_i$. By the
	semi-strict quasi-concavity of $u_i$, $u_i(\ep z_i  + (1-\ep)x_i)> u_i(x_i)$. This means that, for any $\ep\in
	(0,1)$,
	\[
	\ep p_i\cdot z_i+(1-\ep) p_i \cdot x_i> \al+(1-\al)p_i \cdot \w_i.
	\] But this is only possible, for arbitrarily small $\ep$,  if
	$p_i\cdot x_i\ge \al+(1-\al)p_i\cdot \w_i$. Since $x_i\in B_i(p,\al)$, we have $p_i\cdot x_i= \al+(1-\al)p_i\cdot \w_i$. 
\end{proof}

\begin{lemma}\label{lem:bliss} If $V_i(p)=v_i$, then 
	\[\ul d_i(p) =
	\arg\min\{p_i\cdot x_i : u_i(x_i)=v_i\text{ and } x_i\in\X_i \}.
	\] 
\end{lemma}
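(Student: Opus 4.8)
I want to show that when the maximization of $u_i$ over $\X_i$ is unconstrained by the budget — i.e. $V_i(p) = v_i$, so agent $i$ can afford a utility-maximizing bundle — the set of cheapest demanded bundles $\ul d_i(p)$ coincides with the set of cheapest bundles (at personalized prices $p_i$) among *all* bundles in $\X_i$ that attain the bliss value $v_i$.

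**Approach.** The plan is to prove two inclusions between the sets
\[
\ul d_i(p) = \argmin\{p\cdot x_i : x_i \in d_i(p)\}
\qquad\text{and}\qquad
M := \arg\min\{p_i\cdot x_i : u_i(x_i) = v_i,\ x_i \in \X_i\}.
\]
First note that when $V_i(p) = v_i$, the demand set $d_i(p)$ is exactly $\{x_i \in \X_i : u_i(x_i) = v_i,\ p_i \cdot x_i \le \al + (1-\al)p_i\cdot\w_i\}$: these are the budget-feasible bliss points, and by hypothesis this set is nonempty. Also recall from the ``abuse of notation'' convention in the proof that $p\cdot x_i$ (the cost at constraint-prices) and $p_i\cdot x_i$ (the cost at personalized prices) are being identified, so minimizing one is the same as minimizing the other.

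**Step 1: $\ul d_i(p) \supseteq M$.** Take any $x_i^* \in M$. I first check $x_i^* \in d_i(p)$, i.e. that $x_i^*$ is budget-feasible. Pick any $y_i \in d_i(p)$ (nonempty since $V_i(p)=v_i$); then $u_i(y_i) = v_i$ and $y_i$ is budget feasible, so $y_i$ is one of the competitors in the definition of $M$, giving $p_i \cdot x_i^* \le p_i\cdot y_i \le \al + (1-\al)p_i\cdot\w_i$. Hence $x_i^*$ is budget-feasible and attains $v_i$, so $x_i^* \in d_i(p)$. Now among bundles in $d_i(p)$, $\ul d_i(p)$ picks the cost-minimizers; since $d_i(p) \subseteq \{x_i \in \X_i : u_i(x_i) = v_i\}$, the cost $p_i\cdot x_i^*$ is the global minimum of that larger set and a fortiori the minimum over $d_i(p)$, so $x_i^* \in \ul d_i(p)$.

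**Step 2: $\ul d_i(p) \subseteq M$.** Take $x_i^* \in \ul d_i(p)$. Then $x_i^* \in d_i(p)$, so $u_i(x_i^*) = v_i$ and $x_i^* \in \X_i$: it is a competitor in the problem defining $M$. Suppose for contradiction it is not a minimizer there, so there is $z_i \in \X_i$ with $u_i(z_i) = v_i$ and $p_i\cdot z_i < p_i \cdot x_i^*$. But $p_i\cdot x_i^* \le \al + (1-\al)p_i\cdot\w_i$ (budget feasibility), so $p_i\cdot z_i < \al + (1-\al)p_i\cdot\w_i$, whence $z_i$ is budget-feasible and $z_i \in d_i(p)$. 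Then $z_i$ is a strictly cheaper element of $d_i(p)$ than $x_i^*$, contradicting $x_i^* \in \ul d_i(p)$. Hence $x_i^* \in M$, completing the proof.

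**Main obstacle.** There is no real obstacle here — the argument is a short double-inclusion once one unpacks that $V_i(p)=v_i$ forces $d_i(p)$ to be precisely the set of affordable bliss points. The only subtlety worth stating explicitly is the identification of the two cost functionals $p\cdot x_i$ and $p_i\cdot x_i$ under the paper's notational convention, and the fact that $d_i(p)$ and hence $\ul d_i(p)$ are nonempty (so the equality of sets is not vacuous); both follow immediately from $V_i(p)=v_i$ and the compactness of $\X_i$ together with continuity of $u_i$.
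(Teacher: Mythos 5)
Your proof is correct and follows essentially the same route as the paper's: both establish the two inclusions by observing that under $V_i(p)=v_i$ the demand set $d_i(p)$ is exactly the set of budget-feasible bliss points, so any strictly cheaper bundle attaining $v_i$ would itself lie in $B_i(p,\al)$ and hence in $d_i(p)$, contradicting cost-minimality. Your write-up is a somewhat more explicit unpacking of the same argument; no gaps.
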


\begin{proof} Let $x_i\in \ul d_i(p)$. Then for any $z_i\in\X_i$
	with $p_i\cdot z_i< p_i \cdot x_i$, we have $z_i\in B_i(p,\al)$. So 
	$u_i(z_i)<v_i$ by definition of $\ul d_i$. Therefore, if  $z_i\in \argmin\{p_i\cdot x_i : u_i(x_i)=v_i\text{ and } x_i\in\X_i \}$, then
	\[
	p_i \cdot z_i= p_i \cdot x_i,
	\] and therefore 
	\[\ul d_i(p) \supseteq
	\arg\min\{p_i\cdot x_i : u_i(x_i)=v_i\text{ and } x_i\in\X_i \}.
	\] 
	The converse set inclusion follows similarly because if $x_i$ is not
	in the right-hand set, there would exist $z_i\in\X_i$
	with $p_i \cdot z_i< p_i \cdot x_i$ and $u_i(z_i)=v_i$, which is not possible as
	such $z_i$ would be in $B_i(p,\al)$. 
\end{proof}

\begin{lemma}\label{lem:diuhc}
$d_i$ is upper hemi-continuous.
\end{lemma}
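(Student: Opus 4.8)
The statement to prove is that $d_i$ is upper hemi-continuous, where $d_i(p) = \argmax\{u_i(x_i) : x_i \in B_i(p,\al)\}$ and $B_i(p,\al) = \{x_i \in \X_i : p_i \cdot x_i \le \al + (1-\al)p_i\cdot\w_i\}$. The natural approach is to invoke Berge's maximum theorem, for which I need (i) continuity of the objective $u_i$ (given by hypothesis), (ii) that $B_i(\cdot,\al)$ is a compact-valued, continuous (both upper and lower hemi-continuous) correspondence of $p$, and (iii) that $\X_i$ is compact, so that $B_i(p,\al)$ is compact for each $p$. Compactness of $\X_i$: it is a subset of the polytope-derived set, bounded because $\C$ is a polytope and $\X_i$ is cut out by $\X_i = \{x_i \in Z_i : a^0_i\cdot x_i \le 0,\ a_i\cdot x_i \le b\ \forall (a,b)\in\W^i\}$ together with $x_i \in \Re^L_+$ — one should note that the relevant $x_i$'s live in the projection of $\lcs(\C)$, which is bounded, and $\X_i$ is closed, hence compact. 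So the crux is showing $B_i(\cdot,\al)$ is continuous in $p$ on the price space $\P = [0,\bar p]^{\W^*}$.

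\textbf{Upper hemi-continuity of $B_i$.} Since the defining inequality $p_i \cdot x_i \le \al + (1-\al)p_i\cdot\w_i$ is jointly continuous in $(p,x_i)$ (recall $p_{i,l} = \sum_{(a,b)\in\W^*} a_{i,l}p_{(a,b)}$ is linear in $p$), and $\X_i$ is a fixed compact set, $B_i(\cdot,\al)$ has closed graph with values in a fixed compact set; a closed-graph correspondence into a compact set is u.h.c. This step is routine.

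\textbf{Lower hemi-continuity of $B_i$ — the main obstacle.} This is where the argument requires care, because lower hemi-continuity of budget correspondences fails without an interior/Slater-type condition. The key fact I would use is that $0 \in \X_i$ (the zero vector satisfies $a^0_i\cdot 0 = 0$ and $a_i \cdot 0 = 0 \le b$ since $b \ge 0$), and that the budget constraint is satisfied with \emph{strict} inequality at $x_i = 0$ provided the right-hand side $\al + (1-\al)p_i\cdot\w_i$ is strictly positive — which it is, since $\al > 0$ and the remaining terms are nonnegative. Given this strictly-interior point, the standard argument applies: for any $p$, any $x_i \in B_i(p,\al)$, and any sequence $p^n \to p$, one forms $x_i^n = (1-\ta_n)x_i + \ta_n\cdot 0 = (1-\ta_n)x_i$ with $\ta_n \downarrow 0$ chosen just large enough (as a function of $\|p^n - p\|$) to restore feasibility at $p^n$; then $x_i^n \in B_i(p^n,\al)$, $x_i^n \in \X_i$ by convexity, and $x_i^n \to x_i$. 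This establishes l.h.c. The reliance on $\al > 0$ here is precisely why the theorem is stated for $\al \in (0,1]$ rather than $\al = 0$, and is the substantive point to get right.

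\textbf{Conclusion.} With $u_i$ continuous and $B_i(\cdot,\al)$ a continuous, nonempty, compact-valued correspondence, Berge's maximum theorem yields that $d_i = \argmax_{B_i(\cdot,\al)} u_i$ is u.h.c. (and nonempty compact-valued), which is the claim. I would write the proof by stating that $\X_i$ is compact, verifying u.h.c. via the closed-graph argument, spelling out the l.h.c. argument through the point $x_i = 0$ and the strict positivity of the budget (emphasizing the role of $\al > 0$), and then citing Berge.
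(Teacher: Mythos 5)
Your proof is correct and reaches the same conclusion, but it is packaged differently from the paper's. The paper argues directly on $d_i$: it takes $(x^n,p^n)\to(x,p)$ with $x^n\in d_i(p^n)$, supposes some $x'\in B_i(p,\al)$ beats the limit $x$, and derives a contradiction by either (i) noting $x'$ is strictly affordable at $p^n$ for large $n$, or (ii) when $p_i\cdot x'=\al+(1-\al)p_i\cdot\w_i$, using $\al>0$ to conclude $p_i\cdot x'>0$ and then replacing $x'$ by $\la x'$ for $\la<1$ close to $1$, which is strictly cheaper, still preferred (by continuity), and still in $\X_i$. You instead prove continuity (upper and lower hemi-continuity) of the budget correspondence $B_i(\cdot,\al)$ and invoke Berge; your lower hemi-continuity argument perturbs a feasible bundle toward $0\in\X_i$, which is exactly the same scaling trick, exploited for exactly the same reason ($\al>0$ keeps the budget bounded away from zero --- the paper flags this in a remark immediately after the lemma). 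So the substantive content is identical; your route proves a slightly stronger intermediate fact (full continuity of $B_i$) and buys a citation to a standard theorem, while the paper's direct argument avoids having to verify lower hemi-continuity separately. One caveat: your route genuinely needs $B_i(p,\al)$ to be compact-valued, and your justification that $\X_i$ is bounded ("the relevant $x_i$'s live in the projection of $\lcs(\C)$") is not quite right as stated, since $\X_i$ is cut out only by the \emph{individual} constraints in $\W^i$ together with $(a^0,0)$, and these need not bound $\X_i$ in the general model (e.g., if the only binding constraints on $i$'s coordinates are supply constraints, which sit in $\W^*$). The paper implicitly assumes $\X_i$ compact throughout (it writes $v_i=\max\{u_i(x_i):x_i\in\X_i\}$ and later $K=\sup\{\norm{x}:x\in\X_i\}$), and in the leading applications unit-demand constraints are individual and give $\X_i\subseteq\Delta_{-}$, so this is a shared implicit hypothesis rather than an error specific to your argument --- but you should state it as an assumption rather than derive it from the projection of $\lcs(\C)$.
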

\begin{proof} 
  Let $(x^n,p^n)\rightarrow (x,p)$, with $x^n\in d_i(p^n)$. Suppose
  that there is $x'\in B_i(p,\al)$ with $u_i(x')> u_i(x)$. If $p_i\cdot x'<
  \al + (1-\al)p_i\cdot \w_i$, then this strict inequality will be true
  for $p^n$ with $n$ large enough; a contradiction, as $u_i$ is
  continuous. If $p_i\cdot x'= \al + (1-\al)p_i\cdot \w_{i}$, then $\al>0$
  implies that $p_i\cdot x'>0$. Then there is $\la\in (0,1)$ large
  enough that $u_i(\la x')> u_i(x)$, $p_i\cdot (\la x')< p_i\cdot x'$, and
  $\la x'\in\X_i$ (recall that the construction of $\X_i$ ensures that this is the case)). The argument for the case of a strict
  inequality then applies. 
\end{proof}

\begin{remark} Lemma~\ref{lem:diuhc} uses crucially that $\al>0$. 
\end{remark}

\begin{lemma}\label{lem:uhc} $\ul d_i(p)$ is upper hemi-continuous. 
\end{lemma}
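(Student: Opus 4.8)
The goal is to show that $\ul d_i$ — the map sending a price vector $p$ to the set of cheapest bundles within $d_i(p)$ — is upper hemi-continuous. The natural strategy is to take a convergent sequence $(x^n,p^n)\to(x,p)$ with $x^n\in\ul d_i(p^n)$ and show $x\in\ul d_i(p)$. Two things must be established: first that $x\in d_i(p)$, and second that $x$ minimizes $p\cdot z$ over $z\in d_i(p)$.

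The first part, $x\in d_i(p)$, is immediate: since $\ul d_i(p^n)\subseteq d_i(p^n)$, we have $x^n\in d_i(p^n)$, and Lemma~\ref{lem:diuhc} gives $x\in d_i(p)$ directly. For the second part, I would split into the two cases used throughout this section, according to whether $i$ is satiated at $p$. If $V_i(p)<v_i$, then by Lemma~\ref{lem:bcbinding} we have $d_i(p)=\ul d_i(p)$, so nothing is left to prove. The substantive case is $V_i(p)=v_i$. Here Lemma~\ref{lem:bliss} identifies $\ul d_i(p)$ with $\arg\min\{p_i\cdot z_i : u_i(z_i)=v_i,\ z_i\in\X_i\}$, so it suffices to show that for every $z_i\in\X_i$ with $u_i(z_i)=v_i$ one has $p_i\cdot x_i\le p_i\cdot z_i$. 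Fix such a $z_i$. Along the subsequence (or all $n$ large) where $V_i(p^n)=v_i$, Lemma~\ref{lem:bliss} applies at $p^n$ and gives $p^n_i\cdot x^n\le p^n_i\cdot z_i$ (since $z_i$ itself is a feasible competitor with $u_i(z_i)=v_i$); passing to the limit yields $p_i\cdot x_i\le p_i\cdot z_i$, as desired. The issue is whether $V_i(p^n)=v_i$ holds for infinitely many $n$. If instead $V_i(p^n)<v_i$ for all large $n$, then by Lemma~\ref{lem:bcbinding} the budget constraint binds: $p^n_i\cdot x^n=\al+(1-\al)p^n_i\cdot\w_i$, and in the limit $p_i\cdot x_i=\al+(1-\al)p_i\cdot\w_i$. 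But then any $z_i\in\X_i$ with $u_i(z_i)=v_i>V_i(p^n)$ must (for each such $n$) fail the budget constraint, $p^n_i\cdot z_i>\al+(1-\al)p^n_i\cdot\w_i$; taking limits gives $p_i\cdot z_i\ge\al+(1-\al)p_i\cdot\w_i=p_i\cdot x_i$, which is exactly the inequality needed. So in either case $p_i\cdot x_i\le p_i\cdot z_i$ for all $v_i$-maximizers $z_i$, and hence $x_i\in\ul d_i(p)$ by Lemma~\ref{lem:bliss}.

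The main obstacle is the bookkeeping around which of the two regimes ($V_i(p^n)=v_i$ versus $V_i(p^n)<v_i$) occurs along the sequence — one cannot assume the limit regime is inherited, so the argument must handle a mixed subsequence by splitting it and treating each piece with the matching lemma, then combining the two limiting inequalities, both of which happen to deliver $p_i\cdot x_i\le p_i\cdot z_i$. A minor point to be careful about: since $\ul d_i$ is defined via the \emph{personalized} prices $p_i$, which depend linearly and continuously on $p$ through the fixed coefficients $a^c_{i,l}$, convergence $p^n\to p$ in $\mathcal P$ indeed gives $p^n_i\to p_i$, so all the limit-passing steps are legitimate; and $d_i$ (hence $\ul d_i$) is nonempty-valued and takes values in the compact set $\X_i\cap\{x_i: p_i\cdot x_i\le\al+(1-\al)p_i\cdot\w_i\}$, so upper hemi-continuity in the closed-graph sense is the right formulation.
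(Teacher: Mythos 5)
Your proof is correct and shares the paper's skeleton: closed graph along a convergent sequence, $x_i\in d_i(p)$ via Lemma~\ref{lem:diuhc}, and the case split on $V_i(p)<v_i$ versus $V_i(p)=v_i$, with the first case dispatched by Lemma~\ref{lem:bcbinding} exactly as in the paper. Where you genuinely differ is in closing the satiated case. The paper argues by contradiction: it posits a strictly cheaper $y_i\in d_i(p)$, uses the affordability of $y_i$ at $p^n$ to force $V_i(p^n)=v_i$ for all large $n$, and then invokes the maximum theorem a second time to get upper hemi-continuity of the correspondence $p\mapsto\arg\min\{p_i\cdot z_i: u_i(z_i)=v_i,\ z_i\in\X_i\}$, reaching a contradiction. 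You instead verify the characterization in Lemma~\ref{lem:bliss} directly: fix an arbitrary competitor $z_i$ with $u_i(z_i)=v_i$ and show $p_i\cdot x_i\le p_i\cdot z_i$ by splitting the sequence according to the regime at $p^n$ --- cost comparison via Lemma~\ref{lem:bliss} when $V_i(p^n)=v_i$, and the observation that $z_i$ must violate the budget at $p^n$ while $x_i^n$ satisfies it when $V_i(p^n)<v_i$ (for this you only need $p^n_i\cdot x^n_i\le \al+(1-\al)p^n_i\cdot\w_i$, so your appeal to the binding-budget fact from the proof of Lemma~\ref{lem:bcbinding}, rather than its statement, is dispensable). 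Your route is slightly more elementary, avoiding the second application of the maximum theorem, and it confronts head-on the possibility that the sequence mixes the two regimes --- an issue the paper's contradiction setup sidesteps because the hypothesized cheaper bundle pins down the regime for all large $n$. Both arguments use the same supporting lemmas and use $\al>0$ only through Lemma~\ref{lem:diuhc}.
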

\begin{proof} 
	 To prove upper hemi-continuity, we shall prove that $\ul d_i$ has a closed graph. Let
	$(x_i^n,p^n)\rightarrow (x_i,p)$ with $x_i^n\in\ul d_i(p^n)$ for all $n$. 
	
	First, consider the case where $V_i(p)<v_i$. By the maximum theorem,
	$V_i$ is continuous, so $V_i(p^n)<v_i$ for all large enough $n$. Then
	Lemma~\ref{lem:bcbinding} implies that $x_i\in\ul d_i(p)$ as $d_i$ is
	upper hemi-continuous. 
	
	Second, consider the case where $V_i(p)=v_i$. We know that $x_i\in
	d_i(p)$ as $d_i$ is upper hemi-continuous. Suppose (towards a
	contradiction) that $x_i\notin \ul d_i(p)$. Then 
	there is $y_i\in d_i(p)$ with 
	\[p_i\cdot y_i< p_i\cdot x_i \leq \al + (1-\al)p_i\cdot \w_i.\]
	Then for all $n$ large
	enough,
	\[
	p^n_i\cdot y_i<\al + (1-\al)p_i \cdot \w_i.
	\]
	
	Since $y_i\in d_i(p)$ and $V_i(p)=v_i$, $u_i(y)=v_i$. This means that
	$V_i(p^n)=v_i$ for all $n$ large enough, as $y_i\in B_i(p^n,\al)$. Then, by Lemma~\ref{lem:bliss}, $x_i^n\in \argmin\{p^n_i\cdot x_i:
	u_i(x_i)=v_i\text{ and } x_i\in\X_i \}$ for all $n$ large
	enough. But the  correspondence  \[p\mapsto
	\argmin\{p_i\cdot x_i: u_i(x_i)=v_i\text{ and } x\in\X_i \}.\] 
	is upper hemicontinous, by the maximum theorem. So 
	\[
	x_i\in \argmin\{p_i\cdot x_i : u_i(x_i)=v_i\text{ and } x\in\X_i \},
	\] which is a contradiction.
\end{proof}

It is easy to see that $ d_i(p) $ is nonempty, compact- and convex-valued. So $ \ul d_i(p) $ is also nonempty, compact- and convex-valued. 
For every $c\in \W^* $, define the aggregate demand on $ c $ by
\[
D_c(p)=\sum_{(i,l)\in supp(c)} a^c_{i,l}\ul d_{i,l}(p) = \cup \{
a^c\cdot x:x\in \times_i \ul d_i(p)\}. 
\] 
Define the aggregate demand correspondence by
\[
D(p)=(D_c(p))_{c\in \W^*},
\] and the excess demand correspondence by
\[
z(p)=D(p)-\{\textbf{b}\},
\]where $ \textbf{b}=(b^c)_{ c\in \W^*} $.

Consider the correspondence $ \phi:\mathcal{P}\rightarrow \mathcal{P} $ defined by
\[
\phi_c(p)=\{\min\{\max\{0,z_c+p_c\},\bar{p}\}:z\in z(p)\} \text{ for all }c\in \W^*.
\]

 $ D(p) $, and therefore $ z(p) $, are upper hemi-continuous, convex-valued, and compact-valued. Thus, $ \phi $ is upper hemi-continuous, convex-valued and compact-valued. By Kakutani's fixed point theorem, there exists $ p^*\in \mathcal{P} $ with $ p^*\in\phi(p^*) $. 

Note that there exists $ z^*\in z(p^*) $ such that
\begin{equation}\label{eq:FP}
	p^*_c=\min\{\max\{0,z^*_c+p^*_c\},\bar{p}\} \text{ for all }c\in \W^*.
\end{equation}

Choose $x^*\in\Re^{\nagents \nobjects}_+$ such that $x^*_i\in \ul d_i(p^*)$ for all
$i$ and $a^c\cdot x^* - b^c = z^*_{c}$ for all $c\in\W^*$. We
shall prove that $ (x^*,p^*)$ is an $\al$-slack Walrasian
equilibrium. 

\begin{lemma}\label{lem:almostWL}
	$p^*\cdot z^*\geq 0$.
\end{lemma}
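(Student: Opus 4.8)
The plan is to reduce the statement to a coordinate-wise inequality: I would show that $p^*_c z^*_c\geq 0$ for each $c\in\W^*$ separately, which immediately yields $p^*\cdot z^*=\sum_{c\in\W^*}p^*_c z^*_c\geq 0$. The only input needed is the fixed-point identity~(\ref{eq:FP}), namely $p^*_c=\min\{\max\{0,z^*_c+p^*_c\},\bar{p}\}$ for every $c$.

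To establish the coordinate-wise inequality I would split on the value of $p^*_c$. If $p^*_c=0$ there is nothing to prove. If $p^*_c>0$, then the right-hand side of~(\ref{eq:FP}) is strictly positive; since $\min\{\max\{0,t\},\bar{p}\}$ can only be positive when $t>0$, this forces $\max\{0,z^*_c+p^*_c\}=z^*_c+p^*_c$. Hence $p^*_c=\min\{z^*_c+p^*_c,\bar{p}\}\leq z^*_c+p^*_c$, so $z^*_c\geq 0$, and combined with $p^*_c>0$ we get $p^*_c z^*_c\geq 0$. Equivalently, one could enumerate the three regimes $p^*_c=0$, $0<p^*_c<\bar{p}$, and $p^*_c=\bar{p}$: in the middle regime~(\ref{eq:FP}) gives $z^*_c=0$, and in the last it gives $z^*_c\geq 0$; in all three cases $p^*_c z^*_c\geq 0$.

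There is no genuine obstacle in this lemma — it is the standard ``t\^atonnement'' fixed-point bookkeeping — and the only point requiring care is the elementary observation about the nested $\min$/$\max$: a strictly positive value of $\min\{\max\{0,t\},\bar{p}\}$ is attained only when the inner $\max$ selects $t$ rather than $0$, which is exactly what rules out the sign-violating case $p^*_c>0$, $z^*_c<0$. Summing the coordinate-wise inequalities then finishes the proof.
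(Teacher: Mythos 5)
Your proposal is correct and is essentially the paper's argument: the paper proves the contrapositive (if $p^*\cdot z^*<0$ there is a coordinate with $p^*_c>0$ and $z^*_c<0$, which the fixed-point identity~(\ref{eq:FP}) forbids), while you verify $p^*_c z^*_c\geq 0$ coordinate-wise directly. The content — that a positive value of $\min\{\max\{0,z^*_c+p^*_c\},\bar p\}$ forces $z^*_c\geq 0$ — is the same in both.
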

\begin{proof}
	If  $p^*\cdot z^*<0$, then there is some $ c\in \W^* $ with $p^*_c>0$ and
	$z^*_c<0$. By Equation~\ref{eq:FP}, then, $p^*_c=p^*_c+z^*_c$,
	which is not possible as $z^*_c<0$.
\end{proof}

\begin{lemma}\label{lem:boundary}
	$p^*_c<\bar p$ for all $c\in \W^*$.
\end{lemma}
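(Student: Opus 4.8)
The plan is to bound $p^*_c$ \emph{directly} from above, without arguing by contradiction. The three ingredients are: the budget constraints satisfied by $x^*$; the identity that rewrites total personalized expenditure as expenditure indexed by the priced constraints; and the feasibility $\w\in\C$ together with the complementary-slackness consequence of the fixed point (the fact established inside the proof of Lemma~\ref{lem:almostWL}).

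First I would record that, for each $i$, since $x^*_i\in\ul d_i(p^*)\subseteq d_i(p^*)\subseteq B_i(p^*,\al)$, the budget inequality holds: $p^*_i\cdot x^*_i\le \al+(1-\al)\,p^*_i\cdot\w_i$. Summing over $i\in I$ gives $\sum_i p^*_i\cdot x^*_i\le N\al+(1-\al)\sum_i p^*_i\cdot\w_i$. Next, substituting $p^*_{i,l}=\sum_{c'\in\W^*}a^{c'}_{i,l}p^*_{c'}$ and interchanging the (finite) sums yields $\sum_i p^*_i\cdot x^*_i=\sum_{c'\in\W^*}p^*_{c'}\,(a^{c'}\cdot x^*)$ and likewise $\sum_i p^*_i\cdot\w_i=\sum_{c'\in\W^*}p^*_{c'}\,(a^{c'}\cdot\w)$, so that
\[
\sum_{c'\in\W^*}p^*_{c'}\,(a^{c'}\cdot x^*)\ \le\ N\al+(1-\al)\sum_{c'\in\W^*}p^*_{c'}\,(a^{c'}\cdot\w).
\]
Now I would invoke the fact used to prove Lemma~\ref{lem:almostWL}: for each $c'\in\W^*$, if $p^*_{c'}>0$ then $z^*_{c'}\ge0$, i.e.\ $a^{c'}\cdot x^*\ge b^{c'}$ (by the choice of $x^*$ with $a^{c'}\cdot x^*-b^{c'}=z^*_{c'}$). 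Since $\w\in\C\subseteq\lcs(\C)$ and $c'=(a^{c'},b^{c'})\in\W^*\subseteq\W$, Lemma~\ref{lem:nonnegativelinearineq} gives $a^{c'}\cdot\w\le b^{c'}$. Hence $p^*_{c'}\,(a^{c'}\cdot x^*)\ge p^*_{c'}b^{c'}\ge p^*_{c'}\,(a^{c'}\cdot\w)$ for every $c'$ (trivially when $p^*_{c'}=0$), so the left side above is at least $\sum_{c'}p^*_{c'}\,(a^{c'}\cdot\w)$. Subtracting $(1-\al)\sum_{c'}p^*_{c'}\,(a^{c'}\cdot\w)$ and dividing by $\al>0$ gives $\sum_{c'\in\W^*}p^*_{c'}\,(a^{c'}\cdot\w)\le N$.

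To conclude, note each summand $p^*_{c'}\,(a^{c'}\cdot\w)$ is nonnegative, so for the given $c$ we have $p^*_c\,(a^c\cdot\w)\le N$. By the standing assumption $a^c\cdot\w=\sum_{(i,l)\in\supp(c)}a^c_{i,l}\w_{i,l}>0$, whence
\[
p^*_c\ \le\ \frac{N}{\sum_{(i,l)\in\supp(c)}a^c_{i,l}\w_{i,l}}\ <\ \frac{2N}{\sum_{(i,l)\in\supp(c)}a^c_{i,l}\w_{i,l}}\ \le\ \frac{2N}{\min_{c'\in\W^*}\sum_{(i,l)\in\supp(c')}a^{c'}_{i,l}\w_{i,l}}\ =\ \bar p,
\]
which is the claim.

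The only delicate point is orienting the inequality $\sum_{c'}p^*_{c'}\,(a^{c'}\cdot x^*)\ge\sum_{c'}p^*_{c'}\,(a^{c'}\cdot\w)$ correctly: this is precisely where the fixed-point relation (demand reaches the bound wherever the price is strictly positive) and the feasibility $\w\in\C$ both enter, and it is also where $\al>0$ is needed (to divide it out). Everything else is bookkeeping with finite sums and nonnegativity, so I do not expect a real obstacle.
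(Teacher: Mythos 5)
Your proof is correct, and every step checks out: the termwise complementary slackness you extract from the fixed-point relation ($p^*_{c}>0\Rightarrow z^*_{c}\ge 0$) is exactly what the proof of Lemma~\ref{lem:almostWL} establishes, the feasibility of $\w$ does give $a^{c}\cdot\w\le b^{c}$ via Lemma~\ref{lem:nonnegativelinearineq}, and the sum-swap identity and the division by $\al>0$ are unproblematic. The ingredients are the same as the paper's, but the organization is genuinely different: the paper argues by contradiction, assuming $p^*_{c^*}=\bar p$, deducing that aggregate endowment value is at least $2N$ so that the summed budget constraints force $\sum_i p^*_i\cdot(x^*_i-\w_i)<0$, and then contradicting Lemma~\ref{lem:almostWL}. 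You instead run the computation forwards and obtain the quantitative bound $\sum_{c\in\W^*}p^*_c\,(a^{c}\cdot\w)=\sum_i p^*_i\cdot\w_i\le N$, from which $p^*_c\le N/(a^c\cdot\w)$ termwise. This buys two things: it makes transparent why the factor $2$ in the definition of $\bar p$ is there (it is pure slack converting a weak bound into a strict one), and your intermediate inequality is precisely the general-constraint version of Proposition~\ref{prop:avgincom} (average endogenous income at most the exogenous budget), so the lemma falls out of a statement that is of independent interest in the paper. The one point you flagged as delicate --- the orientation of $\sum_{c}p^*_{c}(a^{c}\cdot x^*)\ge\sum_{c}p^*_{c}(a^{c}\cdot\w)$ --- is handled correctly, since you verify it constraint by constraint, with the case $p^*_c=0$ trivial.
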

\begin{proof} Suppose towards a
	contradiction that there exists $c^*\in \W^*$ for which
        $p^*_{c^*}=\bar p$. Then 
	$p^*_c>0$. Now, $ x^*_i\in B_i(p^*,\al) $ means that 
	\[
	p^*_i\cdot x^*_i\le \al+(1-\al)p^*_i\cdot\w_i,
	\]which is equivalent to
	\[
	p^*_{i}\cdot (x^*_{i}-\w_{i})\le \al(1-p^*_i\cdot \w_i).
	\]
	
	Summing over $i$, we obtain that 
	\[
	\sum_{i\in I}p^*_{i}\cdot (x^*_{i}-\w_{i})\le \al(N-\sum_{i\in I}p^*_i\cdot \w_i),
	\]which is equivalent to
	\begin{equation} \sum_{i\in I}\sum_{l\in O}
          p^*_{i,l}(x^*_{i,l}-\w_{i,l})\le \al (N-\sum_{i\in
            I}\sum_{l\in O} p^*_{i,l}\w_{i,l}) \label{eq:stepone}
\end{equation}
	
	Note that 
	\begin{align}
		\sum_{i\in I}\sum_{l\in O} p^*_{i,l}\w_{i,l}&=
                \sum_{i\in I}\sum_{l\in O} \big(\sum_{c\in
                  \W^*:(i,l)\in supp(c)}a^c_{i,l}p^*_c\big)\w_{i,l}
                \notag \\
		&=\sum_{c\in \W^*} p^*_c\big( \sum_{(i,l)\in
                  supp(c)}a^c_{i,l}\w_{i,l}\big) \label{eq:steptwo} 
	\end{align}
	
	Now, by definition of $\bar p$, we have that  
	\[
	\sum_{c\in \W^*} p^*_c\big( \sum_{(i,l)\in supp(c)}a^c_{i,l}\w_{i,l}\big) \geq \bar p  \big( \sum_{(i,l)\in supp(c^*)}a^{c^*}_{i,l}\w_{i,l}\big)
	 \ge \bar p  \min_{c\in \W^*}\big( \sum_{(i,l)\in supp(c)}a^{c}_{i,l}\w_{i,l}\big)
	=2N.
	\]
	
	 Thus, using this inequality and equations~\eqref{eq:stepone}
         and~\eqref{eq:steptwo}, we obtain that 
	 \begin{equation}
	\sum_{i\in I}\sum_{l\in O} p^*_{i,l}(x^*_{i,l}-\w_{i,l})\le \al (N-\sum_{i\in
            I}\sum_{l\in O} p^*_{i,l}\w_{i,l})<0. \label{eq:negative1}
	 \end{equation}

	 On the other hand,
\begin{align}
p^*\cdot z^* 
& =\sum_{c\in \W^*} p^*_c\big(\sum_{(i,l)\in
  supp(c)}a^c_{i,l}x^*_{i,l}-b^c\big).  \notag \\
& \leq \sum_{c\in \W^*} p^*_c\big(\sum_{(i,l)\in
  supp(c)}a^c_{i,l}x^*_{i,l} - \sum_{(i,l)\in  supp(c)}a^c_{i,l}\w_{i,l}\big) \label{eq:blargerthanw} \\
& = \sum_{i\in I}\sum_{l\in O}\sum_{\{c\in\W^*: (i,l)\in supp(c)\}} p^*_c
a^c_{i,l} (x^*_{i,l}-\w_{i,l}) \notag \\
& = \sum_{i\in I}\sum_{l\in O} p^*_{i,l}(x^*_{i,l}-\w_{i,l}) \label{eq:aclarando} \\
& < 0 \label{eq:negative2},
	 \end{align}
where~\eqref{eq:blargerthanw} follows because for each $ c\in \W^* $,
$ \sum_{(i,l)\in supp(c)}a^c_{i,l}\w_{i,l}\le b^c $,
\eqref{eq:aclarando} follows as 
	 \[
	 p^*_{i,l}=\sum_{\{c\in\W^*: (i,l)\in supp(c)\}} p^*_c a^c_{i,l},
	 \] and  \eqref{eq:negative2} follows from \eqref{eq:negative1}. 

Finally, \eqref{eq:negative2} is absurd as it contradicts
Lemma~\ref{lem:almostWL}.   
\end{proof}

\begin{proof}[\color{blue}Proof of Theorem~\ref{thm:existence}\color{black}]
	We claim that $ (x^*,p^*) $ is an $\al$-slack Walrasian equilibrium. If $p^*_c>0 $, since $p^*_c<\bar{p} $, then $ p^*_c=z^*_c+p^*_c$, which implies $ z^*_c=0 $. If $ p^*_c=0 $, then $z^*_c+p^*_c\le 0 $, which implies $ z^*_c\le 0 $. Recall that
        $z^*_c = a^c\cdot x^*-b^c$. So this implies that $ x^* $
        satisfies all inequalities in $ \W^*$. By definition of
        $\X_i$, $x^*$ satisfies then all inequalities in
        $\W$. Hence, \[ x^*\in \lcs(\C).\]
        Moreover, if $ z^*_c<0 $, it must be that $ p^*_c=0 $, as $p^*_c>0 $ implies $ z^*_c=0 $.  
	
It remains to show that $ x^*\in \C$. Suppose to the contrary that $ x^*\notin \C $. Since $x^*\in \lcs(\C)$, there exists $ x'\in\C$ such that $ x^*\le x' $. Then $x^*\neq x'$. So there is $(i^*,l^*)\in I\times O $ with $ x^*_{i^*,l^*}<x'_{i^*,l^*}$. By definition of $\C$, $x'_{i^*}\in Z_i$. 

Consider $y_{i*}$ defined as $y_{i^*,l} = x^*_{i^*,l}$ for all $l\neq l^*$, and  $y_{i^*,l^*} = x'_{i^*,l^*}$. Since $x'\in \C$ and $x'_{i^*,l^*}>0$, $l^*$ cannot be a forbidden object for $i^*$. Hence, $x'_{i^*}\in \X_{i^*}$ and therefore $y_{i^*}\in \X_i$. 

Moreover, for any $c\in \W^*$, if $(i^*,l^*)\in \supp(c)$ then $a^c\cdot x^*<a^c\cdot x'\leq b^c$ and therefore $z^*_c<0$ ($c$ must not be binding at $x^*$). Hence $p^*_c=0$. In consequence, 
\begin{align*}
\sum_{l\in O} p^*_{i^*,l} y_{i^*,l} & = \sum_{l\neq l^*} p^*_{i^*,l} y_{i^*,l} + \big(\sum_{c\in \W^*}\underbrace{p^*_{c} a^c_{i^*,l^*}}_{=0} \big)y_{i^*,l^*}\\
&=\sum_{l\neq l^*} p^*_{i^*,l} x^*_{i^*,l}\\
&\leq \al + (1-\al)p^*_i \cdot \w_i.
\end{align*}
Thus $y_{i^*}\in B_{i^*}(p^*,\al)$ and $x^*_{i^*}< y_{i^*}$, contradicting the strict monotonicity of $u_{i^*}$ and that $ x^*_i\in d_i(p^*) $. 

We next prove that $ x^* $ is $ \C $-constrained Pareto efficient. Suppose towards a contradiction that $ x $ is an feasible assignment that Pareto dominates $ x^* $. Given that $x\in \C$, $x_i\in \X_i$. Then, for all $i\in I $, $ u_i(x_i)\ge u_i(x^*_i) $. So by definition of $\ul d_i$ we have that \[
	p^*_i \cdot x_i\ge p^*_i \cdot x^*_i.
	\] And for some $ j\in I $, $ u_j(x_j)> u_j(x^*_j) $, so by utility maximization, 
	\[ p^*_j \cdot x_j> p^*_j \cdot x^*_j.
	\]
	
	Thus,
	\[
	\sum_{i\in I}p^*_i \cdot x_i > \sum_{i\in I}p^*_i \cdot x^*_i.
	\]
	
	This is equivalent to	
	\[
	\sum_{c\in \W^*}p^*_c\bigg(\sum_{(i,l)\in supp(c)}a^c_{i,l}x_{i,l}\bigg) > \sum_{c\in \W^*}p^*_c\bigg(\sum_{(i,l)\in supp(c)}a^c_{i,l}x^*_{i,l}\bigg).
	\]
	
	So there must exist $ c\in \W^* $ such that $ p^*_c>0 $ and 
	\[
	\sum_{(i,l)\in supp(c)} a^c_{i,l}x_{i,l}>\sum_{(i,l)\in supp(c)} a^c_{i,l}x^*_{i,l}.
	\] However, $ p^*_c>0 $ implies that $z^*_c=0$ ($c$ is binding at $ x^* $), and thus $ x $ violates $ c $ and is not feasible, which is a contradiction.
	
	Equal-type envy-freeness follows the fact that agents of equal type have equal consumption space and equal budgets, and face equal personalized prices.
\end{proof}

\begin{remark}\label{rmk:cheapb}
	The proof uses semi-strict quasi-concavity only in the proof of upper
	hemi-continuity of $\ul d_i$. To prove existence of an equilibrium
	without imposing the cheapest-bundle property, observe that
	continuity and quasiconcavity of $u_i$ is enough to ensure that
	$d_i$ is upper hemi-continuous, and convex- and compact-valued. If $z$ is defined from $d_i$ in place of $\ul d_i$, the proof can be written same as above. To prove that every $ \al $-slack equilibrium assignment $ x^* $ is weakly $ \C $-constrained Pareto efficient, suppose towards a contradiction that there exists a feasible assignment $ x $ such that for all $i\in I $, $ u_i(x_i)> u_i(x^*_i) $. By utility maximization, for all $ i\in I $,
	\[ p^*_i \cdot x_i> p^*_i \cdot x^*_i.
	\]
	Thus,
	\[
	\sum_{i\in I}p^*_i \cdot x_i > \sum_{i\in I}p^*_i \cdot x^*_i.
	\]So we obtain a contradiction as before.
\end{remark}

\begin{remark}[Proof of Theorem \ref{thm:existence:noend}]\label{rmk:proof:Thm1}
	The above proof can be easily adapted to prove Theorem \ref{thm:existence:noend}. We first change the price space to be $ \mathcal{P}=[0, \bar{p}]^{\Omega^*} $, where
	\begin{center}
		$ \bar{p}= \dfrac{NL}{b_{min}}+1 $, and $ b_{min}=\min\{b:(a,b)\in \Omega^*\} $.
	\end{center}
    
    By letting $ \al=1 $, Lemma \ref{lem:bcbinding} to Lemma \ref{lem:almostWL} do not change. 
    
    Lemma \ref{lem:boundary} becomes easier to prove. Suppose $ p^*_c=\bar{p} $ for some $ c\in \W^* $. Then $ z^*_c+p^*_c\ge \bar{p} $ implies that $ z^*_c\ge 0 $. So $ c $ must be binding, and for every $ (i,l)\in supp(c) $, $ p^*_{i,l}\ge a^c_{i,l} p^*_c $. However, $c$ is impossible to be binding because 
    \begin{center}
    	$ \sum_{(i,l)\in supp(c)}a^c_{i,l}x^*_{i,l} \le \sum_{(i,l)\in supp(c)}a^c_{i,l}\dfrac{1}{p^*_{i,l}}\le\sum_{(i,l)\in supp(c)}\dfrac{1}{p^*_c}\le \dfrac{NL}{p^*_c}<b_{min}$.
    \end{center}
   Then we can  prove as above that $ (x^*,p^*) $ is a pseudo-market equilibrium, and $ x^* $ is (weakly) $ \C $-constrained Pareto efficient.

\end{remark}

\section{Proof of Theorem~\ref{thm:epIR}}

Let $d_H$ denote the Hausdorff distance between two sets in $\Re^L$.
So,\[
d_H(A,B) = \max\{
\sup\{ \inf\{
\norm{x-y} : y\in B \} : x\in A \},
\sup\{ \inf\{
\norm{x-y} : x\in A \} : y\in B \} \}.\]

Let $B_i(p,\al)$ denote the budget set of agent $ i $ given a price vector $p$ and
slack $\al\in [0,1]$.  Let $\bar B_i(p,\al) = \{x_i\in\X_i:p_i\cdot x_i\le \al+(1-\al)p_i\cdot \w_i\}$ denote the budget line. Note that $B_i(p,\al)
= \{x_i\in \X_i:\exists y\in \bar B_i(p,\al) \text{ s.t. }x\leq y \}$. 

\begin{lemma}\label{lem:approx}
	For any $\da>0$, there is $\al>0$ such that if $p$ is an $ \al $-slack equilibrium price vector found in Theorem~\ref{thm:existence}, then for
	any $i$, either $p_i\cdot \w_i<1$ or 
	$d_H (\bar B_i(p,\al),\bar B_i(p,0))<\da$.
\end{lemma}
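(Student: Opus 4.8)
The plan is to reduce the statement to a uniform bound on how much the two budget constraints differ. Write $c_1 = \al + (1-\al)\,p_i\cdot\w_i$ and $c_2 = p_i\cdot\w_i$, so that $\bar B_i(p,\al) = \{x_i\in\X_i : p_i\cdot x_i\le c_1\}$ and $\bar B_i(p,0) = \{x_i\in\X_i : p_i\cdot x_i\le c_2\}$. If $p_i\cdot\w_i<1$ the first alternative holds and there is nothing to prove, so assume $p_i\cdot\w_i\ge 1$. Then $c_1$ is a convex combination of $1$ and $p_i\cdot\w_i$, so $1\le c_1\le c_2$; hence $\bar B_i(p,\al)\subseteq\bar B_i(p,0)$, and the Hausdorff distance reduces to $\sup_{y\in\bar B_i(p,0)}\inf_{z\in\bar B_i(p,\al)}\norm{y-z}$. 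A one-line computation gives $c_2-c_1 = \al\,(p_i\cdot\w_i - 1)$.

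Next I would record two facts that are independent of $\al$ and of the particular equilibrium. First, the prices produced by Theorem~\ref{thm:existence} lie in $[0,\bar p]^{\W^*}$ with $\bar p$ a constant of the economy, so $p_{i,l} = \sum_{c\in\W^*}a^c_{i,l}p_c \le \bar p\sum_{c\in\W^*}a^c_{i,l}$ and therefore $p_i\cdot\w_i \le M$ for a constant $M$ depending only on the economy; in particular $c_2-c_1\le \al M$. Second, each $\X_i$ is bounded (a property we already rely on in the existence proof), say $\norm{x_i}\le R$ for all $i$ and all $x_i\in\X_i$, and --- exactly as used in the proof of Lemma~\ref{lem:diuhc} --- the construction of $\X_i$ makes it star-shaped about the origin: $\la x_i\in\X_i$ whenever $x_i\in\X_i$ and $\la\in[0,1]$.

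With this in hand, fix $y\in\bar B_i(p,0)$. If $p_i\cdot y\le c_1$, then $y\in\bar B_i(p,\al)$ and we are done. Otherwise $c_1<p_i\cdot y\le c_2$; put $\la = c_1/(p_i\cdot y)\in(0,1)$ and $z=\la y$. Star-shapedness gives $z\in\X_i$, and $p_i\cdot z = c_1$, so $z\in\bar B_i(p,\al)$, while
\[
\norm{y-z} = (1-\la)\norm{y} = \frac{p_i\cdot y - c_1}{p_i\cdot y}\,\norm{y}
\le \frac{c_2-c_1}{c_1}\,R \le (c_2-c_1)\,R \le \al M R,
\]
using $p_i\cdot y > c_1\ge 1$ and $\norm{y}\le R$. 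Hence $d_H(\bar B_i(p,\al),\bar B_i(p,0))\le \al M R$ for every $i$, and choosing any $\al\in(0,1]$ with $\al < \da/(MR)$ finishes the proof.

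The only delicate point is the uniform control of the budget levels. It matters that $\bar p$, and hence $M$, do not depend on $\al$ --- which is why I would invoke the explicit price ceiling from the proof of Theorem~\ref{thm:existence} rather than just ``some'' equilibrium --- and it matters that once $p_i\cdot\w_i\ge 1$ the smaller level $c_1$ is bounded below by $1$, so the \emph{relative} perturbation $(c_2-c_1)/c_1$ stays small. Everything else is a routine convexity estimate, and the two structural properties of $\X_i$ used here (boundedness and star-shapedness about $0$) are already in force in the paper.
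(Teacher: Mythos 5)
Your proof is correct and takes essentially the same route as the paper: reduce to the case $p_i\cdot\w_i\ge 1$, observe the inclusion $\bar B_i(p,\al)\subseteq\bar B_i(p,0)$, and radially contract points of the larger set into the smaller one using the star-shapedness and boundedness of $\X_i$, with the $\al$-independent price ceiling $\bar p$ from the existence proof providing uniformity. The only cosmetic difference is that you rescale each point by its own factor $c_1/(p_i\cdot y)$ and so need the extra constant $M$ bounding $p_i\cdot\w_i$, whereas the paper applies the single factor $\g=\bigl(\al+(1-\al)p_i\cdot\w_i\bigr)/(p_i\cdot\w_i)$ to the whole set, for which $|1-\g|\le\al$ directly.
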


\begin{proof}
	Consider the price $\bar p$ defined in the proof of Theorem~\ref{thm:existence}. If $p$ is a price obtained in Theorem~\ref{thm:existence}, then
	$p\in [0,\bar p]^{|\W^*|}$. Note that $\bar p$ is independent of $\al$. 
	
	Let $K = \sup \{\norm{x}: x\in \X_i, 1\leq i\leq \nagents\}.$
	Now choose $\al \in
	(0,1)$ such that \[ 
	\sup\{ \abs{
		1- \frac{\al + (1-\al) p_i\cdot \w_i}{p_i\cdot \w_i}} K: p\in [0,\bar p]^{|\W^*|} 
	\text{ and } p_i\cdot \w_i\geq 1\} <\da
	\]

Observe that when $p_i\cdot \w_i\geq 1$, $B_i(p,\al)\subseteq
B_i(p,0)$. So for any $x\in B_i(p,\al)$, 
$\inf\{\norm{x-y} : y\in \bar B_i(p,0) \} = \norm{x-x} = 0$. Hence,
\[
	\sup\{ \inf\{\norm{x-y} : y\in \bar B_i(p,0) \} ,
	x\in  \bar B_i(p,\al)\}=0.\]

On the other hand, if we let $x\in \bar B_i(p,0)$, then $\g x\in \bar
B_i(p,\al)$, where 
	\[ \g =\frac{\al + (1-\al) p_i\cdot \w_i}{p_i\cdot \w_i}. \] Since
        $\g\leq 1$, $\g\in \X_i$. 
	
	Note that \[\norm{x-\g x} =\abs{1-\g} \norm{x} <\da.\] Thus 
	$\inf\{\norm{x-y} : y\in \bar B_i(p,\al) \} <\da$, and therefore
	\[
	\sup\{ \inf\{\norm{x-y} : y\in \bar B_i(p,\al) \} ,
	x\in  \bar B_i(p,0)\}<\da.\]

Thus $d_H(B_i(p,0),B_i(p,\al))<\da$.
\end{proof}

To prove the theorem, let $\da>0$ be such that, for any $p\in [0,\bar
p]^{\W^*}$, if $d_H(B_i(p,0),B_i(p,\al))<\da$ then \[ \abs{
	\max\{u_i(x):x\in B_i(p,\al)\} - \max\{u_i(x):x\in B_i(p,0) \}}<\ep.
\] For such $\da$, let $\al$ be as in Lemma~\ref{lem:approx}.

For any $i$, if $p_i\cdot \w_i<1$ then $B_i(p,0)\subseteq B_i(p,\al)$,
so \[
\max\{u_i(y):y\in \Delta_{-}\text{ and } p_i\cdot y\leq
p_i\cdot \w_i \} - u_i(x) < 0<\ep.\] If, on the contrary, $p_i\cdot \w_i\geq 1$,
then Lemma~\ref{lem:approx} implies that
$d_H(B_i(p,0),B_i(p,\al))<\da$, and the result follows from the
definition of  $\da$.

\section{Proof of Proposition~\ref{prop:envyslackwalras}}

Our first observation establishes the relation between envy and the
value of endowments at equilibrium prices. 
\begin{lemma}\label{lem:envyprices}
Let $(x,p)$ be a Walrasian equilibrium with slack $\al\in (0,1]$. If $i$ envies $j$, then $p\cdot (x_j-x_i)>0$ and $p\cdot (\w_j-\w_i)>0$.
  \end{lemma}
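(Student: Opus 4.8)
The plan is to run a straightforward revealed-preference argument, exploiting two features special to the object allocation model with supply and unit demand constraints: every agent has the same consumption space $\X_i=\Delta_{-}$, and all agents face the same personalized price vector, so we may write $p$ in place of $p_i$. First I would record the two budget inequalities that come for free from equilibrium optimization: since $x_i\in\arg\max\{u_i(z):z\in B_i(p,\al)\}$ and $x_j\in\arg\max\{u_j(z):z\in B_j(p,\al)\}$, we have $p\cdot x_i\le \al+(1-\al)p\cdot\w_i$ and $p\cdot x_j\le \al+(1-\al)p\cdot\w_j$.

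To obtain $p\cdot(x_j-x_i)>0$: because $i$ envies $j$ we have $u_i(x_j)>u_i(x_i)$, and the bundle $x_j$ lies in $\Delta_{-}=\X_i$, so it is a feasible consumption for agent $i$. As $x_i$ maximizes $u_i$ over the budget set $B_i(p,\al)$ and $x_j$ yields strictly higher utility, $x_j$ cannot belong to $B_i(p,\al)$; hence $p\cdot x_j>\al+(1-\al)p\cdot\w_i$. Combining this with $p\cdot x_i\le \al+(1-\al)p\cdot\w_i$ gives $p\cdot x_j>p\cdot x_i$, which is the first claim.

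Finally, for $p\cdot(\w_j-\w_i)>0$: chain the strict inequality just obtained with agent $j$'s budget constraint to get $\al+(1-\al)p\cdot\w_j\ge p\cdot x_j>\al+(1-\al)p\cdot\w_i$, which simplifies to $(1-\al)\,p\cdot(\w_j-\w_i)>0$. The only genuine point of care is the coefficient $1-\al$: when $\al=1$ the displayed chain is itself a contradiction, so envy is impossible and the conclusion holds vacuously; when $\al\in(0,1)$ we divide by $1-\al>0$ to finish. I do not expect a substantive obstacle here — the argument is purely a "weak axiom" manipulation — so the main things to be careful about are exactly these two: verifying that $x_j$ lies in $i$'s consumption space (immediate since all consumption spaces equal $\Delta_{-}$) and handling the $\al=1$ boundary case cleanly.
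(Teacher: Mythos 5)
Your proof is correct and follows essentially the same revealed-preference argument as the paper: $x_j$ is feasible for $i$ but strictly preferred, hence unaffordable, giving the chain $\al+(1-\al)p\cdot\w_j\ge p\cdot x_j>\al+(1-\al)p\cdot\w_i\ge p\cdot x_i$. Your explicit handling of the $\al=1$ boundary case (where the chain forces a contradiction, so envy is vacuously impossible) is a small point of care that the paper's one-line conclusion glosses over.
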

\begin{proof} Let $i$ envy $j$, so $u_i(x_j)>u_i(x_i)$. Then utility
  maximization implies that 
\[
\al + (1-\al) p\cdot \w_j \geq p\cdot x_j>
\al + (1-\al) p\cdot \w_i \geq p\cdot x_i,
\] where the strict inequality follows because $x_j\in\Delta_{-}$. 
So $p\cdot (x_j-x_i)>0$ and $p\cdot (\w_j-\w_i)>0$.  \end{proof}

Now consider a $\al$-slack Walrasian equilibrium $(x,p)$. 
Agent $i$'s maximization problem is:
\[
\max_{x\in \Re^L_+} u_i(x) + \la_i (I_i-p\cdot x) + \g_i (1-\one\cdot x)
\]

Where $I_i=\al+(1-\al)p\cdot\w_i$, 
$\la_i$ is a multiplier for the budget constraint, and $\g_i$ for
the $\sum_l x_{i,l}\leq 1$ constraint.

Utility functions are $C^1$. The first-order conditions for the
maximization problems are then:
 \[
\partial_l u_i(x_i) - \la_i p_l - g_i
\begin{cases}
 = 0 & \text{ if } x_{i,l}>0  \\
\leq 0 & \text{ if } x_{i,l}=0, \\
\end{cases}
\] where $\partial_l u_i(x_i) $ denotes the partial derivative of
$u_i$ with respect to $x_{i,l}$. 

Observe that if $p\cdot x_i <\al+(1-\al)p\cdot \w_i$, then the budget
constraint is not binding and $\la_i=0$. As a consequence,
$u_i(x_i)=\max\{u_i(z_i):z_i\in\Delta_{-}\}$. 
Let $S=\{i\in [N]: p\cdot x_i<\al+(1-\al)p\cdot \w_i\}$ be the set of
\df{satiated} consumers. 
Let $U=\{i\in [N]: p\cdot x_i=\al+(1-\al)p\cdot \w_i\}$ be the set of \df{unsatiated},
and observe that we can let $\la_i>0$ for all $i\in U$. 
Consider the two stage social program:

Stage 1:
\[\begin{array}{l}
\max_{\tilde y \in(\Delta_{-})^S} \sum_{i\in S} u_i(\tilde y_i)  
\end{array}
\]

Stage 2:
\[\begin{array}{l}
\max_{\tilde y \in(\Delta_{-})^U}\sum_{i\in U} \frac{1}{\la_i}u_i
(\tilde y_i)  \\
\sum_{i\in U} \tilde y_i \leq \bar\w - \sum_{i\in S} x_i
\end{array}
\]

Note that $(x_i)_{i\in S}$ solves Stage 1, while
satisfying $\sum _{i\in S}  x_i\leq \bar w$, and that given
$(x_i)_{i\in S}$, $(x_i)_{i\in U}$ solves Stage 2. That this is so
follows from the fact that $(x_i)_{i\in U}$ solves the first-order
conditions for the Stage 2 problem with Lagrange multiplier $p$ for
the constraint that $\sum_{i\in U} \tilde y_i \leq \bar\w - \sum_{i\notin S} x_i$.

Now use the assumption that $\sum _{i\in U}  x_i\gg 0$. This means
that there exists $\bar t>0$ such that if $t\in (0,\bar t]$ then the set of 
$\tilde y \in(\Delta_{-})^U$ such that $\sum_{i\in U} \tilde y_i \leq
\bar\w + t(\w_i-\w_j) -\sum_{i\notin S} x_i$ is nonempty (and, for
constraint qualification, contains an element that satisfies all
constraints with slack).  

Consider the problem 
\[\begin{array}{l}
\max_{\tilde y \in(\Delta_{-}^U)}\sum_{i\in U} \frac{1}{\la_i}u_i
(\tilde y_i)  \\
\sum_{i\in U} \tilde y_i \leq \bar\w + t(\w_i-\w_j) - \sum_{i\in S} x_i
\end{array}
\]
Note that for each $t\in (0,\bar t]$ there exists
  $(\nu(t),\g(t),\al(t))$ such that  
\[
v(t) = \sup\{
\sum_{i\in U} \frac{1}{\la_i}u_i\cdot \tilde y_i +
\nu(t) \cdot (\bar\w - \sum_{i\in S} \tilde y_i  + t (\w_i-\w_j) )-
\sum_{i\in U} \tilde y_i )
+ \sum_{i\in U} \g_i(t) (1-\sum_{l\in O} \tilde y_{i,l})
+ \sum_{i\in U} \al_i(t) \tilde y_{i,l}.
\}\] Here $\nu(t)$ is the Lagrange multiplier for the constraint that
$\sum_{i\in U} \tilde y_i \leq \bar\w - \sum_{i\in S} x_i  + t (\w_i-\w_j)$, while
$\g(t)$ and $\al(t)$ are the Lagrange multipliers for the constraint
that $(\tilde y_i)\in(\Delta_{-})^N$. Choose a selection
$(\nu(t),\g(t),\al(t))$ such that $\nu(0)=p$.  

Let $\tilde\w=\bar\w - \sum_{i\in S} x_i$. The saddle point
inequalities imply that 
\begin{align*} (t'-t) \nu(t)\cdot (\w_i-\w_j) & = 
\sum_{i\in U} \frac{1}{\la_i}u_i(x_i(t') )
 +\nu(t) \cdot (\tilde\w + t' (\w_i-\w_j) - \sum_{i\in U} x_i(t') ) \\
& + \sum_{i\in U} \g_i(t) (1-\sum_{l\in O} x_{i,l}(t')) 
 + \sum_{i\in U} \al_i(t) x_{i,l}(t') \\
& -\left(
\sum_{i\in U} \frac{1}{\la_i}u_i(x_i(t'))
 +\nu(t) \cdot (\tilde\w + t (\w_i-\w_j) - \sum_{i\in U} x_i(t') ) \right.  \\
& + \sum_{i\in U} \g_i(t) (1-\sum_{l\in O} x_{i,l}(t')) 
 \left. + \sum_{i\in U} \al_i(t) x_{i,l}(t') \right)\\
& \geq v(t') - v(t)
\end{align*} 

Now recall that $\nu(0)=p$. Then Lemma~\ref{lem:envyprices}, together
with the above inequality, imply that 
\[
0> p\cdot (\w_i-\w_j) t' \geq v(t') - v(0) \] for all $t'>0$ with
$t'\leq \bar t$. 

\footnotesize

\bibliographystyle{econometrica}
\bibliography{envy}

\end{document}